\numberwithin{equation}{section}
\numberwithin{figure}{section}
\numberwithin{table}{section}
\theoremstyle{plain}
\newtheorem{thm}{\protect\theoremname}
  \theoremstyle{definition}
  \newtheorem{defn}[thm]{\protect\definitionname}
  \theoremstyle{plain}
  \newtheorem{conjecture}[thm]{\protect\conjecturename}
  \providecommand{\conjecturename}{Conjecture}
  \providecommand{\definitionname}{Definition}
\providecommand{\theoremname}{Theorem}
\begin{document}

\title{Exact finite volume expectation values of local operators in excited
states}

\author{B. Pozsgay$^{1}$, I.M. Szécsényi$^{2,3}$ and G. Takács$^{1,4}$\\
~\\
$^{1}$MTA-BME \textquotedbl{}Momentum\textquotedbl{} Statistical
Field Theory Research Group\\
1111 Budapest, Budafoki út 8, Hungary\\
~\\
 $^{2}$Department of Mathematical Sciences, \\
Durham University \\
 South Road, Durham, DH1 3LE, United Kingdom\\
~\\
$^{3}$Institute of Theoretical Physics,\\
Eötvös Loránd University\\
1117 Budapest, Pázmány Péter sétány 1/A, Hungary\\
~\\
 $^{4}$Department of Theoretical Physics, \\
 Budapest University of Technology and Economics\\
1111 Budapest, Budafoki út 8, Hungary}

\date{29th December 2014}
\maketitle
\begin{abstract}
We present a conjecture for the exact expression of finite volume
expectation values in excited states in integrable quantum field theories,
which is an extension of an earlier conjecture to the case of general
diagonal factorized scattering with bound states and a nontrivial
bootstrap structure. The conjectured expression is a spectral expansion
which uses the exact form factors and the excited state thermodynamic
Bethe Ansatz as building blocks. The conjecture is proven for the
case of the trace of the energy-moment tensor. Concerning its validity
for more general operators, we provide numerical evidence using the
truncated conformal space approach. It is found that the expansion
fails to be well-defined for small values of the volume in cases when
the singularity structure of the TBA equations undergoes a non-trivial
rearrangement under some critical value of the volume. Despite these
shortcomings, the conjectured expression is expected to be valid for
all volumes for most of the excited states, and as an expansion above
the critical volume for the rest. 
\end{abstract}

\section{Introduction}

Finite temperature expectation values play an important role in various
applications of quantum field theory, and have been intensively studied
in recent years in the context of $1+1$ dimensional integrable quantum
field theories. For the one-point function a series expansion was
conjectured by Leclair and Mussardo in \cite{Leclair:1999ys}, based
on the thermodynamic Bethe Ansatz \cite{Yang:1968rm} as applied to
integrable quantum field theories \cite{Zamolodchikov:1989cf}, and
the exact form factors from the bootstrap program \cite{Karowski:1978vz,Kirillov:1987jp,Smirnov:1992vz}.
Recently, the Leclair-Mussardo series found applications in quantum
quenches \cite{2010NJPh...12e5015F,Pozsgay2011} and the investigations
of one-dimensional quantum gases \cite{Kormos:2009eqa,Kormos:2010rg,Pozsgay:2011ec}. 

Another method to obtain finite temperature correlators is based on
finite temperature form factors \cite{Doyon:2005jf,2014JSMTE..09..021C};
so far, however, this approach seems limited to free theories such
as the Ising model. Other approaches that can be used to construct
finite temperature expectation values in integrable models use separation
of variables \cite{Lukyanov:2000jp,2012JSMTE..10..006G}, or exploit
the hidden Grassmannian/fermionic structure of the XXZ spin chain
\cite{2011LMaPh..96..325J,2013NuPhB.875..166N,2014IJMPA..2950111N}. 

In \cite{Pozsgay2008a,Pozsgay2008b} a description of form factors
in finite volume was introduced, which was subsequently used to prove
the Leclair-Mussardo series \cite{Pozsgay2011}. This formalism was
also used to compute the finite-temperature two-point function \cite{Essler:2007jp,Essler:2009zz},
for which a systematic expansion was developed in \cite{Pozsgay2010a,Szecsenyi2012}.
Besides further applications to two-point functions in condensed matter
systems \cite{2012PhRvB..85a4402T,2014arXiv1403.7222W}, the finite
volume form factor formalism have found numerous other applications,
in computing one-point functions in the presence of boundaries \cite{Kormos:2010ae},
in the study of quantum quenches in field theories \cite{2013PhRvL.111j0401M,2014PhLB..734...52S,2014JSMTE..10..035B},
and in the context of holographic duality \cite{2014JPhA...47e5401K,2014JHEP...09..050B}.
They also provide a useful tool for testing exact form factor solutions
obtained from the form factor bootstrap, recently in the boundary
\cite{2009NuPhB.807..625B} and defect \cite{2014NuPhB.882..501B}
settings. 

The finite volume form factor formalism introduced in \cite{Pozsgay2008a,Pozsgay2008b}
only included the corrections that decay with a power of the volume;
exponential corrections were neglected. However, soon after \cite{Pozsgay2008a,Pozsgay2008b}
a method was proposed in \cite{Pozsgay2008} to construct certain
exponential corrections, the so-called $\mu$-terms which are related
to the bootstrap fusion between the particles. It was found that these
can be very important in determining matrix elements and resonance
parameters \cite{Takacs:2011nb}. 

However, for integrable quantum field theories one expects that an
exact determination of finite volume form factors is also possible.
For the Ising model on a lattice this was known before \cite{Bugrij:2000is,Bugrij:2001nf};
however, until recently there have been no such results for generic
models.

In \cite{Pozsgay2013} an extension of the Leclair-Mussardo series
was conjectured to describe exact excited state expectation values
(a.k.a. diagonal form factors) in finite volume. The methods of \cite{Pozsgay2013}
apply to theories like the sinh-Gordon model which have no bound states
in their bootstrap, and the thermodynamic Bethe Ansatz equations describing
excited states in finite volume have a particularly simple structure
\cite{2008NuPhB.799..403T}. 

In the present work we extend this conjecture to theories with a diagonal
factorized scattering that have a nontrivial bootstrap structure.
In such models, the excited state levels in finite volumes are described
by excited TBA systems of the type introduced in \cite{Bazhanov:1996aq,Dorey:1996re,Dorey:1997rb}.
This conjecture is verified in two ways. First, we show for the trace
of stress-energy tensor the conjectured series is equivalent to the
result obtained directly from the thermodynamic Bethe Ansatz. Second,
we make use of the truncated conformal space approach (TCSA) \cite{Yurov:1989yu}
to get a nontrivial further check of the series. For the latter, we
use the methods developed in \cite{Szecsenyi:2013gna}, to which the
interested reader is referred to for details. 

The outline of the paper is as follows. In Section \ref{sec:conjecture}
we introduce our notations and state the conjecture. In section \ref{sec:Theta-TBA-LMEx}
we present the proof that the conjectured series gives the same result
as the thermodynamic Bethe Ansatz when evaluated for the trace of
the stress-energy tensor. In Section \ref{sec:Finite-volume-expectation-in-T2}
we turn to the so-called $T_{2}$ model used as testing ground, and
specify the expansion for the case of the excited state thermodynamic
Bethe Ansatz of this particular field theory. The resulting expectation
value are then compared to numerical results from the TCSA in Section
\ref{sec:Numerical-results}, while Section \ref{sec:Conclusions-and-outlook}
contains our conclusions and outlook. As the method used for numerically
evaluating the connected diagonal form factors contains some non-trivial
tricks, and could be useful for other applications, it is presented
in Appendix \ref{sec:form_factors}.

\section{Finite volume expectation values in excited states: the conjecture
\label{sec:conjecture}}

The Leclair-Mussardo series for the finite volume vacuum expectation
value of a local operator in an integrable model with diagonal scattering
and $k$ species of massive particles takes the following form \cite{Leclair:1999ys}:
\begin{eqnarray}
\left\langle \mathcal{O}\right\rangle _{L} & = & \sum_{n_{1},\dots,n_{k}=0}^{\infty}\left(\prod_{i=1}^{k}n_{i}!\right)^{-1}\int_{-\infty}^{\infty}\prod_{j=1}^{\tilde{N}}\frac{\mathrm{d}\theta_{j}}{2\pi\left[1+e^{\varepsilon_{\beta_{j}}\left(\theta_{j}\right)}\right]}F_{2n_{1},\dots,2n_{k},c}^{\mathcal{O}}\left(\theta_{1},\dots,\theta_{\tilde{N}}\right)\label{eq:vacuum_LM}
\end{eqnarray}
where $F_{2n_{1},\dots,2n_{k},c}^{\mathcal{O}}$ are the connected
diagonal form factors of the operator \emph{$\mathcal{O}$}, $n_{i}$
the number of particles of species $i$, $\tilde{N}=\sum n_{i}$ the
total number of particles, and the $j$th particle has rapidity $\theta_{j}$
and species $\beta_{j}$. The $\varepsilon_{\alpha}(\theta)$ are
the pseudo-energy functions satisfying the TBA integral equation \cite{Zamolodchikov:1989cf}:

\begin{eqnarray}
\varepsilon_{\alpha}\left(\theta\right) & = & m_{\alpha}L\cosh\left(\theta\right)-\sum_{\beta}\int\frac{\mathrm{d}\theta'}{2\pi}\varphi_{\alpha\beta}\left(\theta-\theta'\right)\log\left(1+e^{-\varepsilon_{\beta}\left(\theta'\right)}\right)\label{eq:vacuum_TBA}
\end{eqnarray}
where the kernels are given by the logarithmic derivatives of the
two-particle scattering phases 
\begin{equation}
\varphi_{\alpha\beta}\left(\theta\right)=-i\frac{\partial}{\partial\theta}\log S_{\alpha\beta}\left(\theta\right)\label{eq:phidef}
\end{equation}
The finite volume ground state energy is given by
\begin{equation}
E_{\mathrm{TBA}}(L)=E\left(L\right)-\mathcal{B}L=-\sum_{\beta}\int\frac{\mathrm{d}\theta'}{2\pi}m_{\beta}\cosh\left(\theta\right)\log\left(1+e^{-\varepsilon_{\beta}\left(\theta\right)}\right)\label{eq:vacuum_TBA_energy}
\end{equation}
where $\mathcal{B}$ is the bulk energy density. The connected diagonal
form factors are defined by regularizing the diagonal matrix element
\begin{equation}
F^{\mathcal{O}}(\theta_{n}+i\pi+\epsilon_{n},\dots,\theta_{1}+i\pi+\epsilon_{1},\theta_{1},\dots,\theta_{n})\label{eq:regdiagff}
\end{equation}
and retaining the terms which are independent of the ratios $\epsilon_{i}/\epsilon_{j}$.
We remark that the form factor has a finite, but direction dependent
limit when all the $\epsilon_{i}$ are taken to zero simultaneously,
so the regularized matrix elements can only depend on their ratios.

From the work by Dorey and Tateo \cite{Dorey:1996re,Dorey:1997rb}
it is known that starting from the TBA equation of the ground state
one can reach the Riemann surface of excited states by analytic continuation
in the volume parameter; the same equations were also obtained in
\cite{Bazhanov:1996aq} using a different approach. When performing
the analytic continuation singularities of the $\log\left(1+e^{-\varepsilon_{\beta}}\right)$
terms, corresponding to locations where $Y_{\beta}=e^{\varepsilon_{\beta}}=-1$,
cross the integration contour modifying the TBA equations as

\begin{eqnarray}
\varepsilon_{\alpha}\left(\theta\right) & = & m_{\alpha}L\cosh\left(\theta\right)-\sum_{i=1}^{N}\eta_{i}\log S_{\alpha\alpha_{i}}\left(\theta-\bar{\theta}_{i}\right)\nonumber \\
 &  & -\sum_{\beta}\int\frac{\mathrm{d}\theta'}{2\pi}\varphi_{\alpha\beta}\left(\theta-\theta'\right)\log\left(1+e^{-\varepsilon_{\beta}\left(\theta'\right)}\right)\nonumber \\
E_{\mathrm{TBA}}\left(L\right) & = & \sum_{i=1}^{N}im_{\alpha_{i}}\eta_{i}\sinh\left(\bar{\theta}_{i}\right)-\sum_{\beta}\int\frac{\mathrm{d}\theta}{2\pi}m_{\beta}\cosh\left(\theta\right)\log\left(1+e^{-\varepsilon_{\beta}\left(\theta\right)}\right)\label{eq:excited_state_TBA_and_energy}
\end{eqnarray}
where the $\bar{\theta}_{i}$ are the positions of the singularities
of the pseudo-energy of species $\alpha_{i}$, which satisfy the quantization
conditions

\begin{eqnarray}
\varepsilon_{\alpha_{i}}\left(\bar{\theta}_{i}\right) & = & i\pi\left(2n_{i}+1\right)\qquad n_{i}\in\mathbb{Z}\label{eq:excited_TBA_quantization_conditions}
\end{eqnarray}
and the $n_{i}$ can be viewed as quantum numbers specifying the excited
state. Such singularities are called \emph{active}; their contribution
further depends on the orientation $\eta_{i}$ of the integration
contour around the singularity, which take the values $\pm1$ if the
singularity crossed the real axis from above/below, respectively.
The number and type of the active singularities depends on the excited
state and the position of these singularities at a fixed volume $L$
fully specifies the excited state. Therefore the corresponding finite
volume state can be denoted as
\begin{equation}
\left|\bar{\theta}_{i},\dots,\bar{\theta}_{N}\right\rangle _{L}
\end{equation}
The term $\log\left(1+e^{-\varepsilon_{\beta}}\right)$ also has singularities
where the $Y_{\beta}=e^{\varepsilon_{\beta}}=0$. In \cite{Dorey:1996re,Dorey:1997rb}
it was shown that whenever a singularity that corresponds to a zero
of a $Y$ function crosses the integration contour, it does not generate
new source terms to the TBA equations, but only rearranges the active
singularities already present. The only exception is when such a singularity
pinches the integration contour; for more details about dealing with
this situation see Subsection \ref{sub:desingularisation}. 

The TBA system (\ref{eq:excited_state_TBA_and_energy}) can be recast
in a universal functional form called the $Y$-system \cite{Zamolodchikov:1991et,Ravanini:1992fi}
\begin{eqnarray}
Y_{\alpha}\left(\theta-\frac{i\pi}{h}\right)Y_{\alpha}\left(\theta+\frac{i\pi}{h}\right) & = & \prod_{\beta=1}^{k}\left(1+Y_{\beta}(\theta)\right)^{I_{\alpha\beta}}\label{eq:Ysystem}\\
Y_{a}(\theta) & = & e^{\epsilon_{a}(\theta)}\nonumber 
\end{eqnarray}
where $h$ is the Coxeter number and $I_{\alpha\beta}$ is the incidence
matrix of some diagram. In Subsection \ref{sub:desingularisation}
we shall use the fact that $Y$-system relates the positions of the
two types of logarithmic singularities of the TBA equations. 

The analytical continuation is expected to connect not only the energy,
but also other quantities such as e.g. expectation values corresponding
to the different finite volume levels. It was shown in \cite{Pozsgay2013}
how to perform the residue integrals over the modified contours and
re-sum the terms into a compact form for an analytically continued
Leclair-Mussardo conjecture. This calculation was carried out for
the sinh-Gordon theory, where the excited TBA system is still a conjecture
\cite{2008NuPhB.799..403T}, but the result passes several consistency
checks. Namely, the first $e^{-mR}$ corrections in the infrared limit
agree with theoretical expectations and the result also agrees with
the TBA results for the trace of the stress-energy tensor. 

Let us now state the conjecture for the general form of the finite
volume expectation values in excited states. It contains two kind
of quantities, the ``dressed version'' of the diagonal form factors
and the densities of the active singularities. 
\begin{defn}
\label{def:dressed-diagonal-ff} The \emph{dressed diagonal form factors
of the local operator} $\mathcal{O}$ are 
\end{defn}
\begin{eqnarray}
\mathcal{D_{\varepsilon}^{O}}\left(\bar{\theta}_{1},\dots,\bar{\theta}_{l}\right) & := & \sum_{n_{1},\dots,n_{k}=0}^{\infty}\frac{1}{\prod_{i}n_{i}!}\int_{-\infty}^{\infty}\prod_{j=1}^{\tilde{N}}\frac{\mathrm{d}\theta_{j}}{2\pi\left[1+e^{\varepsilon_{\beta_{j}}\left(\theta_{j}\right)}\right]}\nonumber \\
 &  & \times F_{2l,2n_{1},\dots,2n_{k},c}^{\mathcal{O}}\left(\bar{\theta}_{1},\dots,\bar{\theta}_{l},\theta_{1},\dots,\theta_{\tilde{N}}\right)\label{eq:dressed_FF}
\end{eqnarray}
where $\bar{\theta}_{i}$ are a subset of the active singularities,
with the $i$th one corresponding to species $\alpha_{i}$. 

To obtain the densities of the active singularities, consider the
derivative matrix with respect to the singularity positions
\begin{eqnarray}
\mathcal{K}_{jk} & = & \frac{\partial Q_{j}}{\partial\bar{\theta}_{k}}\label{eq:TBA_Jacobian}
\end{eqnarray}
of the quantization conditions (\ref{eq:excited_TBA_quantization_conditions}) 

\begin{eqnarray}
Q_{j} & = & i\eta_{j}\varepsilon_{\alpha_{j}}\left(\bar{\theta}_{j}|\bar{\theta}_{1},\dots,\bar{\theta}_{N}\right)\label{eq:TBA_Qfunction}
\end{eqnarray}
satisfied by the position of the active singularities. 
\begin{defn}
\label{def:density-of-active-sing} The \emph{density of active singularities}
(in rapidity space) is the determinant of the derivative matrix
\begin{eqnarray}
\rho\left(\bar{\theta}_{1},\dots,\bar{\theta}_{N}\right) & = & \det\mathcal{K}_{ij}\label{eq:TBA_density}
\end{eqnarray}

\end{defn}
\ 
\begin{defn}
\label{def:restricted-density-of-active-sing} For any bipartite partition
$\left\{ \bar{\theta}_{1},\dots,\bar{\theta}_{N}\right\} =\left\{ \bar{\theta}_{+}\right\} \cup\left\{ \bar{\theta}_{-}\right\} $
of the active singularities, the \emph{restricted density of active
singularities in the subset} $\left\{ \bar{\theta}_{+}\right\} $
\emph{relative to $\left\{ \bar{\theta}_{-}\right\} $ }is defined
by 
\begin{equation}
\rho\left(\left\{ \bar{\theta}_{+}\right\} |\left\{ \bar{\theta}_{-}\right\} \right)=\det\mathcal{K}_{+}\label{eq:TBA_restricted_density}
\end{equation}
where $\mathcal{K}_{+}$ is the submatrix corresponding to the subset
of active singularities $\left\{ \bar{\theta}_{+}\right\} $. 
\end{defn}
Using the above definitions, the main result can be stated as follows:
\begin{conjecture}
The exact finite volume expectation values of an operator $\mathcal{O}$
in any finite volume state can be written as
\begin{eqnarray}
_{L}\left\langle \bar{\theta}_{i},\dots,\bar{\theta}_{N}\right|\mathcal{O}\left|\bar{\theta}_{i},\dots,\bar{\theta}_{N}\right\rangle _{L} & = & \frac{1}{\rho\left(\bar{\theta}_{1},\dots,\bar{\theta}_{N}\right)}\nonumber \\
 &  & \times\sum_{\left\{ \bar{\theta}_{+}\right\} \cup\left\{ \bar{\theta}_{-}\right\} }\mathcal{D_{\varepsilon}^{O}}\left(\left\{ \bar{\theta}_{+}\right\} \right)\rho\left(\left\{ \bar{\theta}_{-}\right\} |\left\{ \bar{\theta}_{+}\right\} \right)\label{eq:LMExcited_general}
\end{eqnarray}

\end{conjecture}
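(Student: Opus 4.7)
The plan is to obtain the conjectured expansion by analytically continuing the vacuum Leclair--Mussardo series (\ref{eq:vacuum_LM}) in the volume parameter $L$, following the strategy already applied successfully to the sinh-Gordon case in \cite{Pozsgay2013}. Concretely, one starts from the ground-state TBA (\ref{eq:vacuum_TBA}) and the associated vacuum LM representation for $\langle\mathcal{O}\rangle_{L}$, and tracks how each ingredient deforms as one moves along a path in the complex $L$-plane that reaches the Riemann sheet corresponding to the desired excited state. The target expression is built from the same dressed form factors $\mathcal{D}_{\varepsilon}^{\mathcal{O}}$ and singularity densities $\rho$ introduced in Definitions~\ref{def:dressed-diagonal-ff}--\ref{def:restricted-density-of-active-sing}, so the task reduces to bookkeeping for the pole-crossing contributions generated by the continuation.

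First I would fix a continuation path along which a prescribed set of $Y_{\beta}=-1$ singularities at positions $\bar{\theta}_{i}$ crosses the real rapidity axis with orientations $\eta_{i}$, so that the TBA integral equations take the excited form (\ref{eq:excited_state_TBA_and_energy}). Each integration contour in the vacuum LM sum is deformed accordingly; the residues collected at the active singularities inject new rapidity arguments into the connected diagonal form factors. The key local identity to establish is that the residue of the Fermi weight $1/(1+e^{\varepsilon_{\beta}(\theta)})$ at an active singularity $\bar{\theta}_{i}$ produces precisely the factor $\eta_{i}$, up to a Jacobian built from $\partial_{\theta}\varepsilon$ evaluated at $\bar{\theta}_{i}$; combined with the very definition (\ref{eq:dressed_FF}) of $\mathcal{D}_{\varepsilon}^{\mathcal{O}}$, this rewrites the continued series as a sum over bipartitions $\{\bar{\theta}_{+}\}\cup\{\bar{\theta}_{-}\}$ in which the $\bar{\theta}_{+}$ variables remain as form-factor arguments while the $\bar{\theta}_{-}$ variables enter only through residual kinematical factors.

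The final and more delicate step is to show that these kinematical factors, together with the implicit dependence of $\varepsilon$ on all $\bar{\theta}_{k}$, reorganize into the restricted-density ratio $\rho(\{\bar{\theta}_{-}\}|\{\bar{\theta}_{+}\})/\rho(\bar{\theta}_{1},\dots,\bar{\theta}_{N})$. The natural mechanism is that differentiating the quantization conditions (\ref{eq:excited_TBA_quantization_conditions}) with respect to the $\bar{\theta}_{j}$ generates exactly the matrix $\mathcal{K}$ of (\ref{eq:TBA_Jacobian}); expanding $\det\mathcal{K}$ along the rows indexed by $\bar{\theta}_{-}$ and using the excited TBA equation to rewrite off-diagonal entries as convolutions of the kernels $\varphi_{\alpha\beta}$ against $1/(1+e^{\varepsilon})$ should reproduce precisely the bipartition sum of (\ref{eq:LMExcited_general}). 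As a consistency check, the $\mu$-term expansion of \cite{Pozsgay2008} can be matched order by order at low active-singularity number.

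The main obstacle is controlling the residue calculus in the presence of a nontrivial bootstrap. Unlike the sinh-Gordon case treated in \cite{Pozsgay2013}, the active $Y=-1$ singularities can approach or pinch with the $Y=0$ singularities, which are tied to them by the $Y$-system (\ref{eq:Ysystem}); this forces the desingularization procedure flagged in Subsection~\ref{sub:desingularisation}. Rigorously one must show that the apparent poles of the dressed form factors cancel against compensating zeros of $\rho$, a cancellation expected to hold generically but which inevitably breaks down at the critical volumes alluded to in the abstract, where the singularity structure of the TBA rearranges. Establishing this cancellation in full generality appears significantly harder than the rest of the argument, which is why one falls back on the exact proof for the trace of the stress-energy tensor together with TCSA checks for general operators.
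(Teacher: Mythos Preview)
The statement you are trying to prove is labelled a \emph{Conjecture} in the paper, and the paper does not claim to prove it for general local operators $\mathcal{O}$. What the paper actually proves is the special case $\mathcal{O}=\Theta$, the trace of the stress-energy tensor, in Section~\ref{sec:Theta-TBA-LMEx}; for any other operator the support is purely numerical (TCSA). So your proposal is an attempt to go beyond what the paper establishes.

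Your strategy --- analytic continuation of the vacuum Leclair--Mussardo series along a path in complex $L$, collecting residues at the active singularities and reorganising them into the bipartition sum --- is precisely the heuristic motivation the paper inherits from \cite{Pozsgay2013}. But even in \cite{Pozsgay2013} this was carried out explicitly only for one and two active singularities in sinh-Gordon; it was never turned into an all-$N$ proof, and the paper does not attempt to do so either. The step you flag as ``more delicate'' --- showing that the residual kinematical factors from the residue calculus reorganise into the ratio $\rho(\{\bar{\theta}_{-}\}|\{\bar{\theta}_{+}\})/\rho(\bar{\theta}_{1},\dots,\bar{\theta}_{N})$ --- is exactly the gap that keeps the statement a conjecture. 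Your sketch of ``expanding $\det\mathcal{K}$ along the rows indexed by $\bar{\theta}_{-}$'' is not enough: the residues of the Fermi weight at $\bar{\theta}_{i}$ involve $1/\partial_{\theta}\varepsilon_{\alpha_{i}}(\bar{\theta}_{i})$, which is the inverse of a single diagonal entry of $\mathcal{K}$, not a minor, and the implicit dependence of $\varepsilon$ on all the other $\bar{\theta}_{k}$ generates an infinite hierarchy of correction terms that must be resummed. No mechanism for this resummation is supplied.

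By contrast, the paper's proof for $\Theta$ does not use analytic continuation at all. It starts from the exact TBA expression $\langle\Theta\rangle_{L}=\langle\Theta\rangle_{\infty}+2\pi(E_{\mathrm{TBA}}/L+dE_{\mathrm{TBA}}/dL)$, decomposes the derivatives of $\varepsilon$ via the auxiliary functions $f_{s},f_{c},f_{i}$ of (\ref{eq:fsfcfi}), and then uses a nontrivial determinant identity (Theorem~\ref{thm:DetExpandforNi}) to rewrite $\sum_{i}\mathcal{N}_{i}/L$ and $\sum_{i,j}\mathcal{M}_{i}\mathcal{K}^{-1}_{ij}\mathcal{M}_{j}$ in the bipartition form. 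The identification with the dressed form factors of $\Theta$ is then a separate combinatorial/graphical argument (Theorems~\ref{thm:Dressed_FF_theorem_basic}--\ref{thm:Dressed_FF_theorem_Exmulti}) exploiting the explicit chain structure of $F^{\Theta}_{2n,c}$. None of this generalises to an arbitrary $\mathcal{O}$, which is why the general statement remains a conjecture.
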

This conjecture was verified by explicit calculation for the case
of one and two active singularity in the sinh-Gordon theory \cite{Pozsgay2013}.
For the trace of the stress-energy tensor the conjecture is equivalent
to the excited state TBA equations for any state, similarly to the
Leclair-Mussardo series (\ref{eq:vacuum_LM}) which for the trace
of the stress-energy tensor is equivalent to the ground state TBA
(\ref{eq:vacuum_TBA},\ref{eq:vacuum_TBA_energy}) \cite{Leclair:1999ys}.
The proof of this equivalence is given in Section \ref{sec:Theta-TBA-LMEx}. 

The infrared limit of the formula also reproduces previously known
results for finite volume diagonal form factors which were obtained
in \cite{Pozsgay2008a,Pozsgay2008b}. In large volume the imaginary
parts of the active singularities tend to fixed values, which are
determined by the poles of the scattering matrix \cite{Dorey:1996re,Dorey:1997rb},
and the real parts $\{\vartheta_{j}\}=\{\mbox{Re}\,\bar{\theta}_{i}\}$
of the singularity positions can be interpreted as rapidities of on-shell
particles, where usually one particle is described by more than one
singularity positions, which all have the same real parts. The quantization
conditions reduce to the Bethe-Yang equations
\begin{equation}
m_{\alpha_{j}}\sinh\vartheta_{j}-i\sum_{k\neq j}\log S_{\alpha_{j}\alpha_{k}}(\vartheta_{j}-\vartheta_{k})=2\pi I_{j}\label{eq:Bethe-Yang}
\end{equation}
where the momentum quantum numbers $I_{j}$ can be obtained from the
quantum numbers $n_{i}$. The density of active singularities specified
in Definition \ref{def:density-of-active-sing} reduces to the usual
density of states in rapidity space, while the restricted density
in Definition \ref{def:restricted-density-of-active-sing} turns into
the restricted density used in the diagonal form factor formula in
\cite{Pozsgay2008b}.

In the same limit, the ``dressed'' diagonal form factors reduce
to connected diagonal form factors, and for theories where particles
are represented by a single active singularity, formula (\ref{eq:LMExcited_general})
reduces to the results obtained in \cite{Pozsgay2008a,Pozsgay2008b}
for the finite-volume diagonal matrix elements, which are valid up
to exponential corrections in the volume. 

In theories where a particle is represented by several active singularities,
the particle can be considered as a bound state of the active singularities.
In infinite volume this does not make any difference due to bootstrap
equations satisfied by the scattering matrix, but in finite volume
the composite nature of the particles gives exponential corrections,
which are exactly the $\mu$-term corrections to the form factor described
in \cite{Pozsgay2008}. However, while in \cite{Pozsgay2008} the
description of the particles as composite objects was still ambiguous,
the excited state TBA equation gives a clear prescription valid for
every value of the volume. 

In line with the usual terminology of finite volume corrections \cite{Luscher:1985dn,1991NuPhB.362..329K,2009NuPhB.807..625B},
the terms in (\ref{eq:LMExcited_general}) containing rapidity integration,
originating from either the quantization conditions (\ref{eq:excited_TBA_quantization_conditions})
or the dressed form factors (\ref{eq:dressed_FF}), give so-called
$F$-term corrections, which describe virtual particle loops winding
around the finite volume cylinder.

\section{Equivalence of the form factor series and the TBA for the trace of
the stress-energy tensor \label{sec:Theta-TBA-LMEx}}

In this section we present the equivalence of the conjectured form
factor series for excited states (\ref{eq:LMExcited_general}) and
the TBA equations for $\Theta$ the trace of the stress-energy tensor.
We proceed in three steps. First we explicitly evaluate the TBA prediction
for $\left\langle \Theta\right\rangle $, then recast it in a form
which can be matched with the dependence of (\ref{eq:LMExcited_general})
on the densities, and then prove that the rest of the formula matches
the dressed form factors of $\Theta$.

\subsection{$\left\langle \Theta\right\rangle $ from TBA}

As described in \cite{Zamolodchikov:1989cf} the expectation value
of the trace of the stress-energy tensor can be expressed in the following
way
\begin{eqnarray}
\left\langle \Theta\right\rangle _{L} & = & \left\langle \Theta\right\rangle _{\infty}+2\pi\left[\frac{E_{\mbox{TBA}}\left(L\right)}{L}+\frac{\mathrm{d}E_{\mbox{TBA}}\left(L\right)}{\mathrm{d}L}\right]
\end{eqnarray}
For an excited state with $N$ active singularities we obtain 
\begin{eqnarray}
E_{\mathrm{TBA}}\left(L\right) & = & \sum_{i=1}^{N}im_{\alpha_{i}}\eta_{i}\sinh\left(\bar{\theta}_{i}\right)-\sum_{\beta}\int\frac{\mathrm{d}\theta}{2\pi}m_{\beta}\sinh\left(\theta\right)\frac{\partial_{\theta}\varepsilon_{\beta}\left(\theta\right)}{1+e^{\varepsilon_{\beta}\left(\theta\right)}}\nonumber \\
\frac{\mathrm{d}E_{\mathrm{TBA}}\left(L\right)}{\mathrm{d}L} & = & \sum_{i=1}^{N}im_{\alpha_{i}}\eta_{i}\cosh\left(\bar{\theta}_{i}\right)\frac{d\bar{\theta}_{i}}{dL}+\sum_{\beta}\int\frac{\mathrm{d}\theta}{2\pi}m_{\beta}\cosh\left(\theta\right)\frac{\partial_{L}\varepsilon_{\beta}\left(\theta\right)}{1+e^{\varepsilon_{\beta}\left(\theta\right)}}
\end{eqnarray}
where we performed a partial integration in the energy expression.
The derivatives of the pseudo-energy satisfy the following linear
equations
\begin{eqnarray}
\partial_{\theta}\varepsilon_{\alpha}\left(\theta\right) & = & m_{\alpha}L\sinh\left(\theta\right)-\sum_{i=1}^{N}i\eta_{i}\varphi_{\alpha\alpha_{i}}\left(\theta-\bar{\theta}_{i}\right)+\sum_{\beta}\int\frac{\mathrm{d}\theta'}{2\pi}\varphi_{\alpha\beta}\left(\theta-\theta'\right)\frac{\partial_{\theta}\varepsilon_{\beta}\left(\theta'\right)}{1+e^{\varepsilon_{\beta}\left(\theta'\right)}}\nonumber \\
\partial_{L}\varepsilon_{\alpha}\left(\theta\right) & = & m_{\alpha}\cosh\left(\theta\right)+\sum_{i=1}^{N}i\eta_{i}\varphi_{\alpha\alpha_{i}}\left(\theta-\bar{\theta}_{i}\right)\frac{d\bar{\theta}_{i}}{dL}\label{eq:derivsofeps}\\
 &  & +\sum_{\beta}\int\frac{\mathrm{d}\theta'}{2\pi}\varphi_{\alpha\beta}\left(\theta-\theta'\right)\frac{\partial_{L}\varepsilon_{\beta}\left(\theta'\right)}{1+e^{\varepsilon_{\beta}\left(\theta'\right)}}\nonumber 
\end{eqnarray}
The linearity of the above equations can be exploited by introducing
new functions $f$ satisfying the following equations
\begin{eqnarray}
f_{s,\alpha}\left(\theta\right) & = & m_{\alpha}\sinh\left(\theta\right)+\sum_{\beta}\int\frac{\mathrm{d}\theta'}{2\pi}\varphi_{\alpha\beta}\left(\theta-\theta'\right)\frac{f_{s,\beta}\left(\theta'\right)}{1+e^{\varepsilon_{\beta}\left(\theta'\right)}}\nonumber \\
f_{c,\alpha}\left(\theta\right) & = & m_{\alpha}\cosh\left(\theta\right)+\sum_{\beta}\int\frac{\mathrm{d}\theta'}{2\pi}\varphi_{\alpha\beta}\left(\theta-\theta'\right)\frac{f_{c,\beta}\left(\theta'\right)}{1+e^{\varepsilon_{\beta}\left(\theta'\right)}}\nonumber \\
f_{i,\alpha}\left(\theta\right) & = & \varphi_{\alpha\alpha_{i}}\left(\theta-\bar{\theta}_{i}\right)+\sum_{\beta}\int\frac{\mathrm{d}\theta'}{2\pi}\varphi_{\alpha\beta}\left(\theta-\theta'\right)\frac{f_{i,\beta}\left(\theta'\right)}{1+e^{\varepsilon_{\beta}\left(\theta'\right)}}\label{eq:fsfcfi}
\end{eqnarray}
which can be used to express the derivatives as 
\begin{eqnarray}
\partial_{\theta}\varepsilon_{\alpha}\left(\theta\right) & = & Lf_{s,\alpha}\left(\theta\right)+\sum_{i=1}^{N}\left(-i\eta_{i}\right)f_{i,\alpha}\left(\theta\right)\nonumber \\
\partial_{L}\varepsilon_{\alpha}\left(\theta\right) & = & f_{c,\alpha}\left(\theta\right)+\sum_{i=1}^{N}\left(i\eta_{i}\frac{d\bar{\theta}_{i}}{dL}\right)f_{i,\alpha}\left(\theta\right)
\end{eqnarray}
Inserting these relation into (\ref{eq:derivsofeps})
\begin{eqnarray}
E_{\mathrm{TBA}}\left(L\right) & = & \sum_{i=1}^{N}im_{\alpha_{i}}\eta_{i}\sinh\left(\bar{\theta}_{i}\right)-\sum_{\beta}\int\frac{\mathrm{d}\theta}{2\pi}m_{\beta}\sinh\left(\theta\right)\frac{Lf_{s,\beta}\left(\theta\right)+\sum_{i=1}^{N}\left(-i\eta_{i}\right)f_{i,\beta}\left(\theta\right)}{1+e^{\varepsilon_{\beta}\left(\theta\right)}}\nonumber \\
\frac{\mathrm{d}E_{\mathrm{TBA}}\left(L\right)}{\mathrm{d}L} & = & \sum_{i=1}^{N}im_{\alpha_{i}}\eta_{i}\cosh\left(\bar{\theta}_{i}\right)\frac{d\bar{\theta}_{i}}{dL}\nonumber \\
 &  & +\sum_{\beta}\int\frac{\mathrm{d}\theta}{2\pi}m_{\beta}\cosh\left(\theta\right)\frac{f_{c,\beta}\left(\theta\right)+\sum_{i=1}^{N}\left(i\eta_{i}\frac{d\bar{\theta}_{i}}{dL}\right)f_{i,\beta}\left(\theta\right)}{1+e^{\varepsilon_{\beta}\left(\theta\right)}}
\end{eqnarray}
th expectation value $\left\langle \Theta\right\rangle $ takes the
form
\begin{eqnarray}
\frac{\left\langle \Theta\right\rangle }{2\pi} & = & \frac{\left\langle \Theta\right\rangle _{\infty}}{2\pi}+\sum_{i=1}^{N}\frac{i}{L}m_{\alpha_{i}}\eta_{i}\sinh\left(\bar{\theta}_{i}\right)-\sum_{\beta}\int\frac{\mathrm{d}\theta}{2\pi}m_{\beta}\sinh\left(\theta\right)\frac{f_{s,\beta}\left(\theta\right)+\sum_{i=1}^{N}\left(-\frac{i}{L}\eta_{i}\right)f_{i,\beta}\left(\theta\right)}{1+e^{\varepsilon_{\beta}\left(\theta\right)}}\nonumber \\
 &  & +\sum_{i=1}^{N}im_{\alpha_{i}}\eta_{i}\cosh\left(\bar{\theta}_{i}\right)\frac{d\bar{\theta}_{i}}{dL}+\sum_{\beta}\int\frac{\mathrm{d}\theta}{2\pi}m_{\beta}\cosh\left(\theta\right)\frac{f_{c,\beta}\left(\theta\right)+\sum_{i=1}^{N}\left(i\eta_{i}\frac{d\bar{\theta}_{i}}{dL}\right)f_{i,\beta}\left(\theta\right)}{1+e^{\varepsilon_{\beta}\left(\theta\right)}}\nonumber \\
 & = & \frac{\left\langle \Theta\right\rangle _{\infty}}{2\pi}+\sum_{\beta}\int\frac{\mathrm{d}\theta}{2\pi}\frac{m_{\beta}\cosh\left(\theta\right)f_{c,\beta}\left(\theta\right)-m_{\beta}\sinh\left(\theta\right)f_{s,\beta}\left(\theta\right)}{1+e^{\varepsilon_{\beta}\left(\theta\right)}}\nonumber \\
 &  & +\sum_{i=1}^{N}\frac{i}{L}\eta_{i}\left[m_{\alpha_{i}}\sinh\left(\bar{\theta}_{i}\right)+\sum_{\beta}\int\frac{\mathrm{d}\theta}{2\pi}m_{\beta}\sinh\left(\theta\right)\frac{f_{i,\beta}\left(\theta\right)}{1+e^{\varepsilon_{\beta}\left(\theta\right)}}\right]\nonumber \\
 &  & +\sum_{i=1}^{N}i\eta_{i}\frac{d\bar{\theta}_{i}}{dL}\left[m_{\alpha_{i}}\cosh\left(\bar{\theta}_{i}\right)+\sum_{\beta}\int\frac{\mathrm{d}\theta}{2\pi}m_{\beta}\cosh\left(\theta\right)\frac{f_{i,\beta}\left(\theta\right)}{1+e^{\varepsilon_{\beta}\left(\theta\right)}}\right]
\end{eqnarray}
Using that the derivatives of phase shift are even functions $\varphi_{\alpha\beta}\left(\theta\right)=\varphi_{\beta\alpha}\left(-\theta\right)$,
and the definition of $f_{i}$ and $f_{s,c}$ one can easily see that
\begin{eqnarray}
\sum_{\beta}\int\frac{\mathrm{d}\theta}{2\pi}m_{\beta}\sinh\left(\theta\right)\frac{f_{i,\beta}\left(\theta\right)}{1+e^{\varepsilon_{\beta}\left(\theta\right)}} & = & \sum_{\beta}\int\frac{\mathrm{d}\theta}{2\pi}\varphi_{\alpha_{i}\beta}\left(\bar{\theta}_{i}-\theta\right)\frac{f_{s,\beta}\left(\theta\right)}{1+e^{\varepsilon_{\beta}\left(\theta\right)}}\nonumber \\
\sum_{\beta}\int\frac{\mathrm{d}\theta}{2\pi}m_{\beta}\cosh\left(\theta\right)\frac{f_{i,\beta}\left(\theta\right)}{1+e^{\varepsilon_{\beta}\left(\theta\right)}} & = & \sum_{\beta}\int\frac{\mathrm{d}\theta}{2\pi}\varphi_{\alpha_{i}\beta}\left(\bar{\theta}_{i}-\theta\right)\frac{f_{c,\beta}\left(\theta\right)}{1+e^{\varepsilon_{\beta}\left(\theta\right)}}
\end{eqnarray}
and so $\left\langle \Theta\right\rangle $ simplifies to
\begin{eqnarray}
\frac{\left\langle \Theta\right\rangle }{2\pi} & = & \frac{\left\langle \Theta\right\rangle _{\infty}}{2\pi}+\sum_{\beta}\int\frac{\mathrm{d}\theta}{2\pi}\frac{m_{\beta}\cosh\left(\theta\right)f_{c,\beta}\left(\theta\right)-m_{\beta}\sinh\left(\theta\right)f_{s,\beta}\left(\theta\right)}{1+e^{\varepsilon_{\beta}\left(\theta\right)}}\nonumber \\
 &  & +\sum_{i=1}^{N}\frac{i}{L}\eta_{i}f_{s,\alpha_{i}}\left(\bar{\theta}_{i}\right)+\sum_{i=1}^{N}i\eta_{i}f_{c,\alpha_{i}}\left(\bar{\theta}_{i}\right)\frac{d\bar{\theta}_{i}}{dL}
\end{eqnarray}
The derivatives of the active singularity positions can be expressed
using the quantization conditions (\ref{eq:TBA_Qfunction})
\begin{eqnarray}
\frac{\mathrm{d}Q_{i}}{\mathrm{d}L} & = & \sum_{j}\frac{\mathrm{\partial}Q_{i}}{\partial\bar{\theta}_{j}}\frac{d\bar{\theta}_{j}}{dL}+\frac{\mathrm{\partial}Q_{i}}{\partial L}=0\nonumber \\
\frac{d\bar{\theta}_{i}}{dL} & = & -\sum_{j}\mathcal{K}_{ij}^{-1}\frac{\mathrm{\partial}Q_{j}}{\partial L}
\end{eqnarray}
where
\begin{eqnarray}
\mathcal{K}_{ij} & = & \frac{\mathrm{\partial}Q_{i}}{\partial\bar{\theta}_{j}}=i\eta_{i}\frac{\mathrm{\partial}\varepsilon_{\alpha_{i}}\left(\bar{\theta}_{i}|\bar{\theta}_{1},\dots,\bar{\theta}_{N}\right)}{\partial\bar{\theta}_{j}}\nonumber \\
 & = & i\eta_{i}\begin{cases}
Lf_{s,\alpha_{i}}\left(\bar{\theta}_{i}\right)+\sum_{k\neq i}\left(-i\eta_{k}\right)f_{k,\alpha_{i}}\left(\bar{\theta}_{i}\right) & i=j\\
\left(i\eta_{j}\right)f_{j,\alpha_{i}}\left(\bar{\theta}_{i}\right) & i\neq j
\end{cases}\label{eq:Density_general}
\end{eqnarray}
Introducing the following combinations
\begin{eqnarray}
\mathcal{N}_{i} & = & i\eta_{i}f_{s,\alpha_{i}}\left(\bar{\theta}_{i}\right)\nonumber \\
\mathcal{N}_{\varphi,ij} & = & \eta_{i}\eta_{j}f_{j,\alpha_{i}}\left(\bar{\theta}_{i}\right)\label{eq:defN_I_N_phi}
\end{eqnarray}
where $\mathcal{N}_{\varphi,ij}=\mathcal{N}_{\varphi,ji}$ , $\mathcal{K}$
can be rewritten as
\begin{eqnarray}
\mathcal{K}_{ij} & = & \begin{cases}
L\mathcal{N}_{i}+\sum_{k\neq i}\mathcal{N}_{\varphi,ik} & i=j\\
-\mathcal{N}_{\varphi,ik} & i\neq j
\end{cases}
\end{eqnarray}
The explicit volume derivative of the quantization condition is
\begin{eqnarray}
\frac{\mathrm{\partial}Q_{i}}{\partial L} & = & i\eta_{i}\frac{\mathrm{\partial}\varepsilon_{\alpha_{i}}\left(\bar{\theta}_{i}|\bar{\theta}_{1},\dots,\bar{\theta}_{N}\right)}{\partial L}=i\eta_{i}f_{c,\alpha_{i}}\left(\bar{\theta}_{i}\right)
\end{eqnarray}
Introducing
\begin{eqnarray}
\mathcal{M}_{i} & = & i\eta_{i}f_{c,\alpha_{i}}\left(\bar{\theta}_{i}\right)\label{eq:defM_i}
\end{eqnarray}
the derivative of the singularity position takes the following form
\begin{eqnarray}
\frac{d\bar{\theta}_{i}}{dL} & = & -\sum_{j}\mathcal{K}_{ij}^{-1}\mathcal{M}_{j}
\end{eqnarray}
such as $\left\langle \Theta\right\rangle $

\begin{eqnarray}
\frac{\left\langle \Theta\right\rangle }{2\pi} & = & \frac{\left\langle \Theta\right\rangle _{\infty}}{2\pi}+\sum_{\beta}\int\frac{\mathrm{d}\theta}{2\pi}\frac{m_{\beta}\cosh\left(\theta\right)f_{c,\beta}\left(\theta\right)-m_{\beta}\sinh\left(\theta\right)f_{s,\beta}\left(\theta\right)}{1+e^{\varepsilon_{\beta}\left(\theta\right)}}\nonumber \\
 &  & +\sum_{i=1}^{N}\frac{\mathcal{N}_{i}}{L}-\sum_{i,j=1}^{N}\mathcal{M}_{i}\mathcal{K}_{ij}^{-1}\mathcal{M}_{j}\label{eq:Theta_almost_fin}
\end{eqnarray}
This is our final form for the TBA result for $\left\langle \Theta\right\rangle $.

\subsection{Isolating the singularity density terms}

To see the equivalence of $\left\langle \Theta\right\rangle $ to
the form factor series (\ref{eq:LMExcited_general}) the terms containing
$\mathcal{N}_{i}$ and $\mathcal{M}_{i}$ need to be rearranged in
order to match the structure of the singularity density terms in (\ref{eq:LMExcited_general}).

Let's start with the term 
\begin{equation}
\sum_{i,j=1}^{N}\mathcal{M}_{i}\mathcal{K}_{ij}^{-1}\mathcal{M}_{j}
\end{equation}
The inverse of $\mathcal{K}$ can be expressed by its co-factor matrix
$\mathcal{C}$ 
\begin{eqnarray}
\mathcal{K}_{ij}^{-1} & = & \frac{\mathcal{C}_{ji}}{\det\mathcal{K}}
\end{eqnarray}
The diagonal elements of the co-factor matrix are just the principal
minors of $\mathcal{K}$: 
\begin{eqnarray}
\mathcal{C}_{ii} & = & \det\mathcal{K}\left(\{i\}\right)\label{eq:cofactor_diag}
\end{eqnarray}
where $\mathcal{K}\left(I\right)$ denotes the matrix obtained by
omitting from $\mathcal{K}$ the rows and columns that are indexed
by the set $I$. The non-diagonal elements of the co-factor matrix
can be expressed with principal minors and sequences of the elements
of $\mathcal{K}$ \cite{John-S.-Maybee:1989aa}

\begin{eqnarray}
\mathcal{C}_{ji} & = & \sum_{n=0}^{N-2}\sum_{\left\{ \alpha\right\} }\left(-1\right)^{n+1}\mathcal{K}_{i\alpha_{1}}\mathcal{K}_{\alpha_{1}\alpha_{2}}\dots\mathcal{K}_{\alpha_{n}j}\det\mathcal{\mathcal{K}}\left(\{j,i,\alpha_{1},\dots,\alpha_{n}\}\right)\nonumber \\
 & = & \sum_{n=0}^{N-2}\sum_{\left\{ \alpha\right\} }\mathcal{N}_{\varphi,i\alpha_{1}}\mathcal{N}_{\varphi,\alpha_{1}\alpha_{2}}\dots\mathcal{N}_{,\varphi\alpha_{n}j}\det\mathcal{\mathcal{K}}\left(\{j,i,\alpha_{1},\dots,\alpha_{n}\}\right)\label{eq:cofactor_offdiag}
\end{eqnarray}
where $\left\{ \alpha\right\} \subset\left\{ 1,\dots,N\right\} \setminus\left\{ i,j\right\} $
. With the help of (\ref{eq:cofactor_diag}) and (\ref{eq:cofactor_offdiag})
one can write
\begin{eqnarray}
\sum_{i,j=1}^{N}\mathcal{M}_{i}\mathcal{K}_{ij}^{-1}\mathcal{M}_{j} & = & \sum_{i}\frac{\det\mathcal{K}\left(i\right)}{\det\mathcal{K}}\mathcal{M}_{i}\mathcal{M}_{i}+\sum_{i\neq j}\sum_{n=0}^{N-2}\sum_{\left\{ \alpha\right\} }\frac{\det\mathcal{\mathcal{K}}\left(j,i,\left\{ \alpha\right\} \right)}{\det\mathcal{K}}\nonumber \\
 &  & \times\mathcal{M}_{i}\mathcal{M}_{j}\mathcal{N}_{\varphi,i\alpha_{1}}\mathcal{N}_{\varphi,\alpha_{1}\alpha_{2}}\dots\mathcal{N}_{,\varphi\alpha_{n}j}\label{eq:MKM_goodform}
\end{eqnarray}
Now we turn to rearranging the term 
\begin{equation}
\sum_{i=1}^{N}\frac{\mathcal{N}_{i}}{L}
\end{equation}
in a similar manner. For this we need the following theorem:
\begin{thm}
\label{thm:DetExpandforNi}If the $N\times N$ matrix $\mathcal{K}^{(N)}$
has the form
\begin{eqnarray}
\mathcal{K}_{ij}^{(N)} & = & \begin{cases}
L\mathcal{N}_{i}+\sum_{k\neq i}\mathcal{N}_{\varphi,ik} & i=j\\
-\mathcal{N}_{\varphi,ik} & i\neq j
\end{cases}
\end{eqnarray}
its determinant can be expanded as
\begin{eqnarray}
\det\mathcal{K}^{(N)} & = & L\mathcal{N}_{i}\det\mathcal{K}^{(N)}\left(\{i\}\right)\\
 &  & +\sum_{n=1}^{N-1}\sum_{\left\{ \alpha\right\} }\mathcal{N}_{\varphi,i\alpha_{1}}\mathcal{N}_{\varphi,\alpha_{1}\alpha_{2}}\dots\mathcal{N}_{\varphi,\alpha_{n-1}\alpha_{n}}L\mathcal{N}_{\alpha_{n}}\det\mathcal{\mathcal{K}}^{(N)}\left(\{i,\alpha_{1},\dots,\alpha_{n}\}\right)\nonumber 
\end{eqnarray}
where $i$ is any chosen row, $\left\{ \alpha\right\} \subset\left\{ 1,\dots,N\right\} \setminus\left\{ i\right\} $
and $\mathcal{K}^{(N)}\left(I\right)$ is the submatrix of $\mathcal{K}^{(N)}$
as defined before.\end{thm}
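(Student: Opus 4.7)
The plan is to recognise Theorem~\ref{thm:DetExpandforNi} as an instance of the weighted Matrix--Tree Theorem in its All--Minors / Matrix--Forest form. I would introduce a ``virtual ground'' vertex $0$ and form the weighted graph $\tilde G$ on $\{0,1,\ldots,N\}$ with edge weights
\[
w_{ij}=\mathcal{N}_{\varphi,ij}\quad(i,j\in\{1,\ldots,N\}),\qquad w_{0j}=L\mathcal{N}_j.
\]
Comparing the diagonal $\mathcal{K}^{(N)}_{ii}=L\mathcal{N}_i+\sum_{k\neq i}\mathcal{N}_{\varphi,ik}$ with the vertex degree at $i$ in $\tilde G$ shows at once that $\mathcal{K}^{(N)}$ is the Laplacian of $\tilde G$ with the row and column indexed by $0$ deleted; more generally, the principal submatrix $\mathcal{K}^{(N)}(I)$ is the same Laplacian with rows and columns in $I\cup\{0\}$ deleted.

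With this identification, the standard Matrix--Tree Theorem gives
\[
\det\mathcal{K}^{(N)}=\sum_{T}\prod_{e\in T}w_e
\]
summed over spanning trees $T$ of $\tilde G$, while the All--Minors version identifies $\det\mathcal{K}^{(N)}(\{i,\alpha_1,\ldots,\alpha_n\})$ with the weighted sum over spanning forests of $\tilde G$ in which each tree contains exactly one of the ``rooted'' vertices $\{i,\alpha_1,\ldots,\alpha_n,0\}$. The identity in the theorem then reduces to a combinatorial bijection: in any spanning tree $T$ of $\tilde G$ there is a \emph{unique} path $i=\alpha_0,\alpha_1,\ldots,\alpha_n,0$ connecting $i$ to the virtual root, whose $n+1$ edges contribute
\[
\mathcal{N}_{\varphi,i\alpha_1}\mathcal{N}_{\varphi,\alpha_1\alpha_2}\cdots\mathcal{N}_{\varphi,\alpha_{n-1}\alpha_n}\cdot L\mathcal{N}_{\alpha_n}
\]
to the weight, and deleting these path edges leaves a spanning forest of $\tilde G$ whose $n+2$ trees are in natural one-to-one correspondence with the path vertices $\{i,\alpha_1,\ldots,\alpha_n,0\}$. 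Summing the spanning-tree weights accordingly factorises as (sum over paths) $\times$ (sum over residual forests), which is precisely the right-hand side of the theorem; the $n=0$ case (trivial path, $i$ immediately adjacent to the virtual root) supplies the isolated $L\mathcal{N}_i\det\mathcal{K}^{(N)}(\{i\})$ term.

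A purely algebraic alternative is to use multilinearity of the determinant in row $i$. Writing $\mathbf{r}_i=L\mathcal{N}_i\,e_i+\sum_{k\neq i}\mathcal{N}_{\varphi,ik}(e_i-e_k)$ and performing a column operation (add column $i$ to column $k$) followed by expansion along row $i$ in each summand, one can set up a strong induction on $N$. This route requires strengthening the inductive hypothesis to cover Laplacian-type matrices with modified off-diagonal weights $\mathcal{N}_{\varphi,jk}+\mathcal{N}_{\varphi,ji}$ produced by the column operations, which makes the bookkeeping rather unpleasant. The main obstacle in either approach is precisely this bookkeeping of combinatorial labels and of edge/root weights; the Matrix--Tree viewpoint renders it transparent through the bijection between spanning trees and (path, residual forest) pairs, so I would adopt that approach for the write-up.
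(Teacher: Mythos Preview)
Your Matrix--Tree approach is correct and genuinely different from the paper's proof. The identification of $\mathcal{K}^{(N)}$ with the reduced Laplacian of the augmented graph $\tilde G$ (virtual root $0$, edge weights $w_{0j}=L\mathcal{N}_j$ and $w_{jk}=\mathcal{N}_{\varphi,jk}$) is exactly right, and the bijection ``spanning tree $\leftrightarrow$ (unique $i\!\to\!0$ path, residual rooted forest)'' does produce the claimed expansion term by term, with the $n=0$ case giving the isolated $L\mathcal{N}_i\det\mathcal{K}^{(N)}(\{i\})$ contribution. The All--Minors Matrix--Tree Theorem is precisely what interprets $\det\mathcal{K}^{(N)}(\{i,\alpha_1,\dots,\alpha_n\})$ as the weighted count of spanning forests rooted at $\{0,i,\alpha_1,\dots,\alpha_n\}$, so the pieces fit.

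By contrast, the paper proceeds by induction on $N$: it expands $\det\mathcal{K}^{(N)}$ along a row using the path formula for off-diagonal cofactors (eqn.~(\ref{eq:cofactor_offdiag}), quoted from \cite{John-S.-Maybee:1989aa}), and then observes that $\mathcal{K}^{(N)}(\{j\})$ differs from an $(N{-}1)\times(N{-}1)$ matrix of the same form only by the shift $L\mathcal{N}_i\to L\mathcal{N}_i+\mathcal{N}_{\varphi,ij}$ on the diagonal, which lets the inductive hypothesis be applied. The cancellations between the shifted terms and the cofactor-path terms then yield the statement. Your route bypasses both the induction and the cited cofactor identity by recognising that identity itself as a consequence of Matrix--Tree, which makes the combinatorics transparent; the paper's route is more self-contained linear algebra but hides the underlying tree structure. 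The ``purely algebraic alternative'' you sketch (multilinearity plus column operations) is a third variant, closer in spirit to the paper's argument but organised differently; as you note, the bookkeeping there is heavier than in either of the other two.
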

\begin{proof}
Up to $N=3$ it is easy to check the statement by direct evaluation.
For $N>3$ we proceed by induction. Let us suppose the theorem is
valid for $N-1$
\begin{eqnarray}
\det\mathcal{K}^{(N-1)} & = & L\mathcal{N}_{i}\det\mathcal{K}^{(N-1)}\left(\{i\}\right)+\sum_{n=1}^{N-2}\sum_{\left\{ \alpha\right\} }\Big\{\mathcal{N}_{\varphi,i\alpha_{1}}\mathcal{N}_{\varphi,\alpha_{1}\alpha_{2}}\dots\mathcal{N}_{\varphi,\alpha_{n-1}\alpha_{n}}L\mathcal{N}_{\alpha_{n}}\nonumber \\
 &  & \quad\times\det\mathcal{\mathcal{K}}^{(N-1)}\left(\{i,\alpha_{1},\dots,\alpha_{n}\}\right)\Big\}\label{eq:detKN-1}
\end{eqnarray}
The determinant for the matrix $\mathcal{K}^{(N)}$ of size $N$ can
be expanded by its row $j$
\begin{eqnarray}
\det\mathcal{K}^{(N)} & = & \mathcal{K}_{jj}^{(N)}\mathcal{C}_{jj}^{(N)}+\sum_{i\neq j}\mathcal{K}_{ji}^{(N)}\mathcal{C}_{ji}^{(N)}
\end{eqnarray}
where $\mathcal{C}^{(N)}$ is the co-factor matrix of $\mathcal{K}^{(N)}$.
Using (\ref{eq:cofactor_diag}) and (\ref{eq:cofactor_offdiag}) leads
to 
\begin{eqnarray}
\det\mathcal{K}^{(N)} & = & \mathcal{K}_{jj}^{(N)}\det\mathcal{K}^{(N)}\left(\{j\}\right)+\sum_{i\neq j}\mathcal{K}_{ji}^{(N)}\sum_{n=0}^{N-2}\sum_{\left\{ \alpha\right\} }\Big\{\left(-1\right)^{n+1}\mathcal{K}_{i\alpha_{1}}\mathcal{K}_{\alpha_{1}\alpha_{2}}\dots\mathcal{K}_{\alpha_{n}j}\nonumber \\
 &  & \quad\times\det\mathcal{\mathcal{K}}^{(N)}\left(\{j,i,\alpha_{1},\dots,\alpha_{n}\}\right)\Big\}\nonumber \\
 & = & \left(L\mathcal{N}_{j}+\sum_{i\neq j}\mathcal{N}_{\varphi,ji}\right)\det\mathcal{K}^{(N)}\left(\left\{ j\right\} \right)-\sum_{i\neq j}\Big\{\mathcal{N}_{\varphi,ji}\nonumber \\
 &  & \quad\times\sum_{n=0}^{N-2}\sum_{\left\{ \alpha\right\} }\mathcal{N}_{\varphi,i\alpha_{1}}\mathcal{N}_{\varphi,\alpha_{1}\alpha_{2}}\dots\mathcal{N}_{\varphi,\alpha_{n}j}\det\mathcal{\mathcal{K}}^{(N)}\left(\{j,i,\alpha_{1},\dots,\alpha_{n}\}\right)\Big\}\label{eq:detKNbasic}
\end{eqnarray}
Now $\mathcal{K}^{(N)}\left(\{j\}\right)$ can be related to $\mathcal{K}^{(N-1)}$
by observing that their off-diagonal elements are the same, while
$\mathcal{K}_{ii}^{(N)}=\mathcal{K}_{ii}^{(N-1)}+\mathcal{N}_{\varphi,ij}$.
Implementing this by shifting $L\mathcal{N}_{i}\to L\mathcal{N}_{i}+\mathcal{N}_{\varphi,ij}$
in (\ref{eq:detKN-1}) one obtains 
\begin{eqnarray}
\det\mathcal{K}^{N}\left(\{j\}\right) & = & \left.\det\mathcal{K}^{(N-1)}\right|_{L\mathcal{N}_{i}\to L\mathcal{N}_{i}+\mathcal{N}_{\varphi,ij}}\nonumber \\
 & = & \left(L\mathcal{N}_{i}+\mathcal{N}_{\varphi,ij}\right)\det\mathcal{K}^{(N)}\left(\{j,i\}\right)+\sum_{n=1}^{N-2}\sum_{\left\{ \alpha\right\} }\Big\{\mathcal{N}_{\varphi,i\alpha_{1}}\mathcal{N}_{\varphi,\alpha_{1}\alpha_{2}}\dots\mathcal{N}_{\varphi,\alpha_{n-1}\alpha_{n}}\nonumber \\
 &  & \quad\times\left(L\mathcal{N}_{\alpha_{n}}+\mathcal{N}_{\varphi,\alpha_{n}j}\right)\det\mathcal{\mathcal{K}}^{(N)}\left(\{j,i,\alpha_{1},\dots,\alpha_{n}\}\right)\Big\}
\end{eqnarray}
and inserting this back to (\ref{eq:detKNbasic}) gives 
\begin{eqnarray}
\det\mathcal{K}^{(N)} & = & L\mathcal{N}_{j}\det\mathcal{K}^{(N)}\left(\{j\}\right)+\sum_{i\neq j}\mathcal{N}_{\varphi,ji}\left(L\mathcal{N}_{i}+\mathcal{N}_{\varphi,ij}\right)\det\mathcal{K}^{(N)}\left(\{j,i\}\right)\nonumber \\
 &  & +\sum_{i\neq j}\mathcal{N}_{\varphi,ji}\sum_{n=1}^{N-2}\sum_{\left\{ \alpha\right\} }\Big\{\mathcal{N}_{\varphi,i\alpha_{1}}\mathcal{N}_{\varphi,\alpha_{1}\alpha_{2}}\dots\mathcal{N}_{\varphi,\alpha_{n-1}\alpha_{n}}\nonumber \\
 &  & \quad\times\left(L\mathcal{N}_{\alpha_{n}}+\mathcal{N}_{\varphi,\alpha_{n}j}\right)\det\mathcal{\mathcal{K}}^{(N)}\left(\{j,i,\alpha_{1},\dots,\alpha_{n}\}\right)\Big\}\nonumber \\
 &  & -\sum_{i\neq j}\mathcal{N}_{\varphi,ji}\sum_{n=0}^{N-2}\sum_{\left\{ \alpha\right\} }\mathcal{N}_{\varphi,i\alpha_{1}}\mathcal{N}_{\varphi,\alpha_{1}\alpha_{2}}\dots\mathcal{N}_{\varphi,\alpha_{n}j}\det\mathcal{\mathcal{K}}^{(N)}\left(\{i,j,\alpha_{1},\dots,\alpha_{n}\}\right)\nonumber \\
 & = & L\mathcal{N}_{j}\det\mathcal{K}^{(N)}\left(\{j\}\right)+\nonumber \\
 &  & +\sum_{n=1}^{N-1}\sum_{\left\{ \alpha\right\} }\mathcal{N}_{\varphi,j\alpha_{1}}\mathcal{N}_{\varphi,\alpha_{1}\alpha_{2}}\dots\mathcal{N}_{\varphi,\alpha_{n-1}\alpha_{n}}L\mathcal{N}_{\alpha_{n}}\det\mathcal{\mathcal{K}}^{(N)}\left(\{j,\alpha_{1},\dots,\alpha_{n}\}\right)
\end{eqnarray}
which is just the statement we wanted to prove. \emph{Q.e.d.}
\end{proof}
Using the above theorem we can rewrite 
\begin{eqnarray}
\sum_{i}\frac{\mathcal{N}_{i}}{L} & = & \sum_{i}\frac{\mathcal{N}_{i}}{L}\frac{\det\mathcal{K}}{\det\mathcal{K}}=\sum_{i}\mathcal{N}_{i}\mathcal{N}_{i}\frac{\det\mathcal{K}\left(\{i\}\right)}{\det\mathcal{K}}+\sum_{i\neq j}\sum_{n=0}^{N-2}\sum_{\left\{ \alpha\right\} }\frac{\det\mathcal{\mathcal{K}}^{N}\left(\{i,j,\alpha_{1},\dots,\alpha_{n}\}\right)}{\det\mathcal{K}}\nonumber \\
 &  & \times\mathcal{N}_{i}\mathcal{N}_{j}\mathcal{N}_{\varphi,i\alpha_{1}}\mathcal{N}_{\varphi,\alpha_{1}\alpha_{2}}\dots\mathcal{N}_{\varphi,\alpha_{n}j}\label{eq:N/L_goodform}
\end{eqnarray}
which has the same structure as (\ref{eq:MKM_goodform}). Substituting
the definitions of $\mathcal{N}$ (\ref{eq:defN_I_N_phi}) and $\mathcal{M}$
(\ref{eq:defM_i}) into (\ref{eq:MKM_goodform}) and (\ref{eq:N/L_goodform})
we can see that every $\eta_{i}$ factor appears twice and so drops
out. Therefore the expression for $\left\langle \Theta\right\rangle $
(\ref{eq:Theta_almost_fin}) simplifies to
\begin{eqnarray}
\frac{\left\langle \Theta\right\rangle }{2\pi} & = & \frac{\left\langle \Theta\right\rangle _{\infty}}{2\pi}+\sum_{\beta}\int\frac{\mathrm{d}\theta}{2\pi}\frac{m_{\beta}\cosh\left(\theta\right)f_{c,\beta}\left(\theta\right)-m_{\beta}\sinh\left(\theta\right)f_{s,\beta}\left(\theta\right)}{1+e^{\varepsilon_{\beta}\left(\theta\right)}}\nonumber \\
 &  & +\sum_{i}\frac{\det\mathcal{K}\left(\{i\}\right)}{\det\mathcal{K}}\left[f_{c,\alpha_{i}}\left(\bar{\theta}_{i}\right)f_{c,\alpha_{i}}\left(\bar{\theta}_{i}\right)-f_{s,\alpha_{i}}\left(\bar{\theta}_{i}\right)f_{s,\alpha_{i}}\left(\bar{\theta}_{i}\right)\right]\nonumber \\
 &  & +\sum_{i\neq j}\sum_{n=0}^{N-2}\sum_{\left\{ \alpha\right\} }\frac{\det\mathcal{\mathcal{K}}\left(\{i,j,\alpha_{1},\dots,\alpha_{n}\}\right)}{\det\mathcal{K}}\left[f_{c,\alpha_{i}}\left(\bar{\theta}_{i}\right)f_{c,\alpha_{j}}\left(\bar{\theta}_{j}\right)-f_{s,\alpha_{i}}\left(\bar{\theta}_{i}\right)f_{s,\alpha_{j}}\left(\bar{\theta}_{j}\right)\right]\nonumber \\
 &  & \times f_{\alpha_{1},\alpha_{i}}\left(\bar{\theta}_{i}\right)f_{\alpha_{2},\alpha_{1}}\left(\bar{\theta}_{1}\right)\dots f_{j,\alpha_{n}}\left(\bar{\theta}_{n}\right)
\end{eqnarray}
The determinant ratios are exactly the density factors in (\ref{eq:LMExcited_general});
what remains to be shown is that the other terms reproduce the dressed
form factors of $\Theta$.

\subsection{Dressed form factors of $\Theta$}
\begin{thm}
\label{thm:Dressed_FF_theorem_basic}In the absence of active singularities
of the TBA equations, the dressed form factors of $\Theta$ are given
by
\begin{eqnarray}
\mathcal{D}_{\varepsilon}^{\Theta} & = & \sum_{n_{1},\dots,n_{k}=0}^{\infty}\frac{1}{\prod_{i}n_{i}!}\int_{-\infty}^{\infty}\prod_{j=1}^{\tilde{N}}\frac{\mathrm{d}\theta_{j}}{2\pi\left[1+e^{\varepsilon_{\beta_{j}}\left(\theta_{j}\right)}\right]}F_{2n_{1},\dots,2n_{k},c}^{\Theta}\left(\theta_{1},\dots,\theta_{\tilde{N}}\right)\label{eq:FF_Conn_Theta_theroem_basic}
\end{eqnarray}
which is equal to
\begin{eqnarray}
\mathcal{D}_{\varepsilon}^{\Theta} & = & \left\langle \Theta\right\rangle _{\infty}+2\pi\sum_{\beta}\int\frac{\mathrm{d}\theta}{2\pi}\frac{m_{\beta}\cosh\left(\theta\right)f_{c,\beta}\left(\theta\right)-m_{\beta}\sinh\left(\theta\right)f_{s,\beta}\left(\theta\right)}{1+e^{\varepsilon_{\beta}\left(\theta\right)}}\label{eq:theor_ff_basic}
\end{eqnarray}
\end{thm}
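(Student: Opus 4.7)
The plan is to reduce the claim to two pieces of already-established machinery: the original Leclair--Mussardo identity for the ground state expectation value of $\Theta$, and the rearrangement of $\langle\Theta\rangle_L$ from the TBA carried out in Subsection 3.1, specialised to the absence of active singularities.

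First I would observe that when no external active singularity is inserted, the dressed diagonal form factor $\mathcal{D}_\varepsilon^\Theta$ defined by (\ref{eq:dressed_FF}) collapses to the original Leclair--Mussardo vacuum series (\ref{eq:vacuum_LM}) applied to the operator $\Theta$. By the Leclair--Mussardo theorem \cite{Leclair:1999ys}, whose rigorous derivation was supplied in \cite{Pozsgay2011}, this series evaluates to the finite volume ground state expectation value $\langle\Theta\rangle_L$ as determined by the vacuum TBA equations (\ref{eq:vacuum_TBA}) and (\ref{eq:vacuum_TBA_energy}). So the left hand side of (\ref{eq:theor_ff_basic}) already has a concrete TBA meaning.

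Second, starting from the Zamolodchikov identity
\[
\langle\Theta\rangle_L \,=\, \langle\Theta\rangle_\infty \,+\, 2\pi\left[\frac{E_{\mathrm{TBA}}(L)}{L}\,+\,\frac{dE_{\mathrm{TBA}}(L)}{dL}\right],
\]
I would simply repeat the calculation of Subsection 3.1 with $N=0$. In this specialisation every sum over active singularities is empty, the auxiliary quantities $\mathcal{N}_i$, $\mathcal{M}_i$ and the matrix $\mathcal{K}_{ij}$ do not appear, and the contribution $\sum_i \mathcal{N}_i/L - \sum_{i,j}\mathcal{M}_i \mathcal{K}_{ij}^{-1}\mathcal{M}_j$ in (\ref{eq:Theta_almost_fin}) drops out entirely. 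What remains is exactly the right hand side of (\ref{eq:theor_ff_basic}), closing the argument.

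The principal obstacle lies in the Leclair--Mussardo identity itself, which rests on a non-trivial combinatorial structure of the connected diagonal form factors $F^\Theta_{2n,c}$. For a fully self-contained proof one would expand both sides of (\ref{eq:theor_ff_basic}) as Neumann series in the kernel $\varphi_{\alpha\beta}/(1+e^{\varepsilon_\beta})$: on the TBA side this expansion is generated by iterating the linear equations (\ref{eq:fsfcfi}) that define $f_{s,\beta}$ and $f_{c,\beta}$, while on the form factor side it is produced by the standard tree--sum representation of $F^\Theta_{2n,c}$ in which an energy--momentum factor $\sum_i m_{\beta_i}\cosh\theta_i$ or $\sum_i m_{\beta_i}\sinh\theta_i$ sits at the root of a tree whose edges carry the $\varphi$ kernels. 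Matching these two expansions order-by-order in the number of kernels then reproduces the equality; the cancellation of the $\sinh$ and $\cosh$ contributions at zeroth order uses $\cosh^2-\sinh^2=1$ and is the germ of the general argument.
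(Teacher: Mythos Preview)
Your shortcut via the Leclair--Mussardo theorem plus the Zamolodchikov identity is a valid argument \emph{only} if one reads ``in the absence of active singularities'' as the ground-state case $N=0$. The paper, however, uses this theorem inside the proof of the excited-state equivalence: in the partition sum of (\ref{eq:LMExcited_general}) the term with $\{\bar\theta_+\}=\emptyset$ is precisely $\mathcal{D}_\varepsilon^\Theta$ with $\varepsilon$ the \emph{excited-state} pseudo-energy, and this must be matched against the first line of the rearranged TBA expression at the end of Subsection~3.2. Your invocation of the LM theorem and of Subsection~3.1 with $N=0$ both presuppose that $\varepsilon$ solves the \emph{ground-state} TBA, so the argument does not cover the case actually needed. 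The identity has to hold for an essentially arbitrary weight $\varepsilon_\beta$, and the paper's proof never uses the TBA equation for $\varepsilon$ itself, only the linear equations (\ref{eq:fsfcfi}) defining $f_{c},f_{s}$.

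The paper's direct proof is close in spirit to what you sketch in your last paragraph, but the structure of $F^\Theta_{2n,c}$ is not a tree with an energy--momentum factor at a single root. It is a sum over \emph{linear chains}
\[
F^\Theta_{2n,c}(\theta_1,\dots,\theta_n)\;=\;2\pi\,\varphi_{12}\varphi_{23}\cdots\varphi_{n-1,n}\,m_{\beta_1}m_{\beta_n}\cosh(\theta_1-\theta_n)\;+\;\text{permutations},
\]
with masses at \emph{both} ends and a single $\cosh\theta_{1n}$ linking them. The permutation sum cancels the $\prod_i n_i!$ in the denominator, and the addition formula $\cosh\theta_{1n}=\cosh\theta_1\cosh\theta_n-\sinh\theta_1\sinh\theta_n$ splits each chain into a $\cosh$--$\cosh$ piece minus a $\sinh$--$\sinh$ piece. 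Iterating (\ref{eq:fsfcfi}) produces exactly these half-chains (the $K_{n,\beta}$ and $J_{n,\beta}$ of the paper), and gluing two of them at the common integrated node reproduces the full chain of length $n{+}1$. So your Neumann-series intuition is right, but the combinatorics is ``two open ends joined in the middle'' rather than ``one rooted tree''; getting this structure right is also what allows the argument to generalise cleanly to Theorems~\ref{thm:Dressed_FF_theorem_Ex1} and~\ref{thm:Dressed_FF_theorem_Exmulti}, where one or more nodes of the chain are pinned to active-singularity rapidities.
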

\begin{proof}
The connected diagonal form factors of $\Theta$ are given by \cite{Leclair:1999ys}
\begin{eqnarray*}
F_{2n,c}^{\Theta}\left(\theta_{1},\dots,\theta_{n}\right) & = & 2\pi\varphi_{12}\varphi_{23}\dots\varphi_{n-1,n}m_{\beta{}_{1}}m_{\beta_{n}}\cosh\left(\theta_{1n}\right)+\mathrm{permutations}
\end{eqnarray*}
where $\theta_{ij}=\theta_{i}-\theta_{j}$, $\beta_{i}$ denotes the
species of the $i$th particle and $\varphi_{ij}$ is a short-hand
for $\varphi_{\beta_{i}\beta_{j}}(\theta_{ij})$. (\ref{eq:FF_Conn_Theta_theroem_basic})
is symmetric under re-ordering particles of the same species which
results in a combinatorial factor $\prod_{i}n_{i}!$ canceling the
denominators in front of the integrals. To take into account the rest
of the permutations we can rewrite $\mathcal{D}_{\varepsilon}^{\Theta}$
like
\begin{eqnarray}
\mathcal{D}_{\varepsilon}^{\Theta} & = & 2\pi\sum_{n=0}^{\infty}\sum_{\beta_{1},\dots,\beta_{n}}\int_{-\infty}^{\infty}\prod_{j=1}^{n}\frac{\mathrm{d}\theta_{j}}{2\pi\left[1+e^{\varepsilon_{\beta_{j}}\left(\theta_{j}\right)}\right]}\varphi_{12}\varphi_{23}\dots\varphi_{n-1,n}m_{\beta{}_{1}}m_{\beta_{n}}\cosh\left(\theta_{1n}\right)\nonumber \\
 & = & \sum_{n=0}^{\infty}\mathcal{D}_{\varepsilon,n}^{\Theta}
\end{eqnarray}
Following \cite{Leclair:1999ys} every $\mathcal{D}_{\varepsilon,n}^{\Theta}$
can be graphically represented as seen in Fig. \ref{fig:graphicalLM_basic}
where every node represents a particle with a given rapidity and species,
including the integration 
\[
\int\frac{\mathrm{d}\theta_{i}}{2\pi\left[1+e^{\varepsilon_{\beta_{i}}\left(\theta_{i}\right)}\right]}
\]
and the first and last node is multiplied by its mass. Every horizontal
line represents a factor $\varphi_{ij}$ and the dashed line represents
the factor $\cosh\theta_{1n}$. The whole graph is multiplied by $2\pi$
to account for the normalization of the operator $\Theta$, and summed
over every possible type configuration for the nodes. The empty graph
(with zero node) represents $\mathcal{D}_{\varepsilon,0}^{\Theta}=\left\langle \Theta\right\rangle _{\infty}$.
Using hyperbolic addition formulas for the $\cosh\theta_{1n}$ terms
every graph can be represented as difference of two chains where the
two end nodes instead of being connected by dashed line, are multiplied
by $\cosh$ or $\sinh$ of the rapidity at the given node as shown
in Fig. \ref{fig:graphicalLM_basic_expand}.

\begin{figure}
\centering{}\subfloat[\label{fig:graphicalLM_basic} $\mathcal{D}_{\varepsilon,n}^{\Theta}$.]{\centering
\def\svgwidth{0.4\columnwidth}
\subimport*{./}{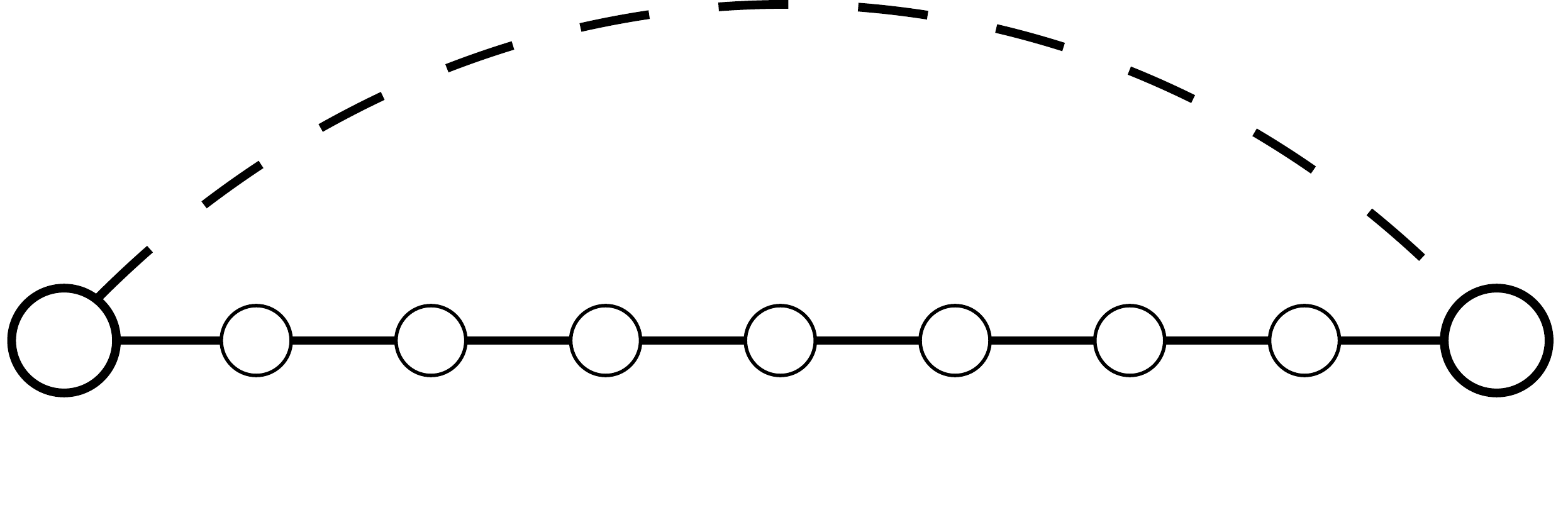_tex}}\hfill{}\subfloat[\label{fig:graphicalLM_basic_expand} $\mathcal{D}_{\varepsilon,n}^{\Theta}$
expanded.]{\centering
\def\svgwidth{0.4\columnwidth}
\subimport*{./}{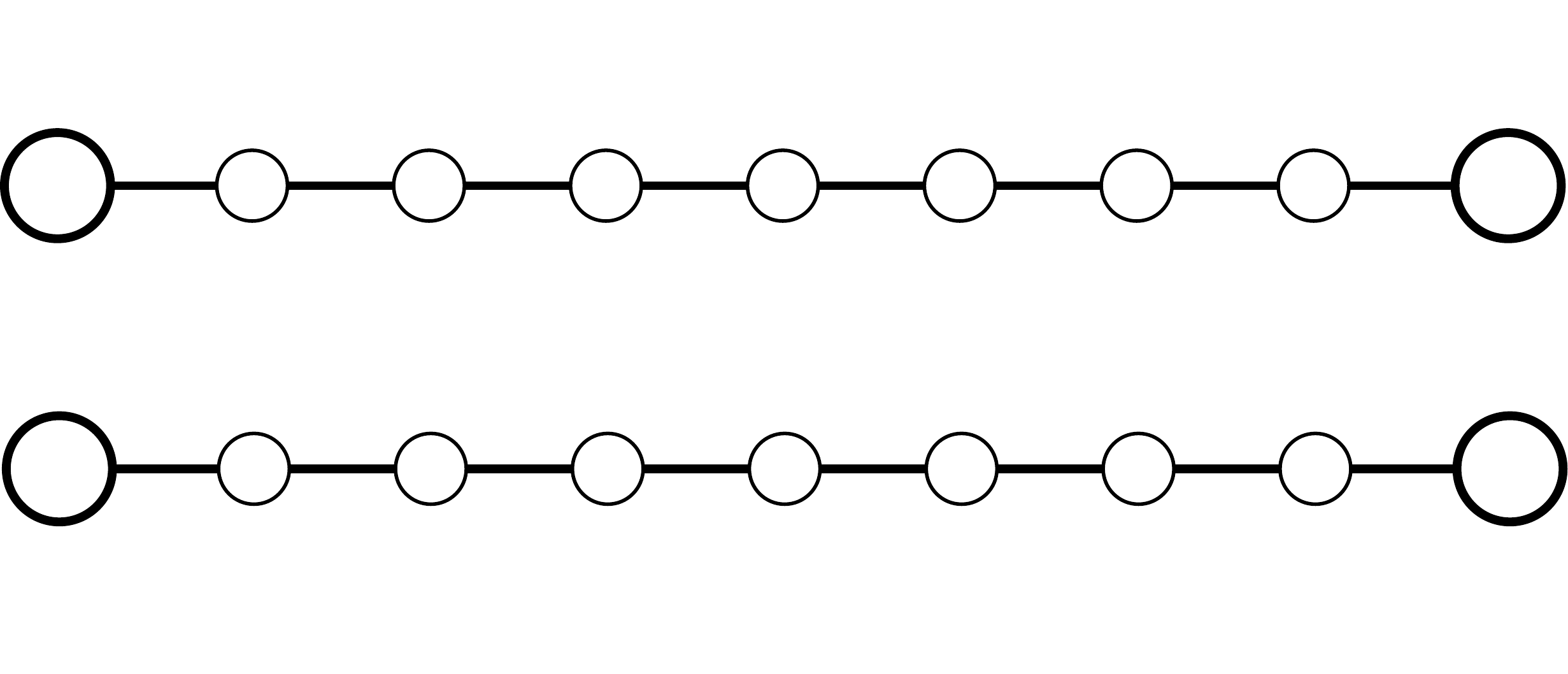_tex}

}\smallskip{}
\subfloat[\label{fig:graphicalLM_basic_fc_fs} $K_{n,\beta}\left(\theta\right)$
and $J_{n,\beta}\left(\theta\right)$.]{\centering
\def\svgwidth{0.4\columnwidth}
\subimport*{./}{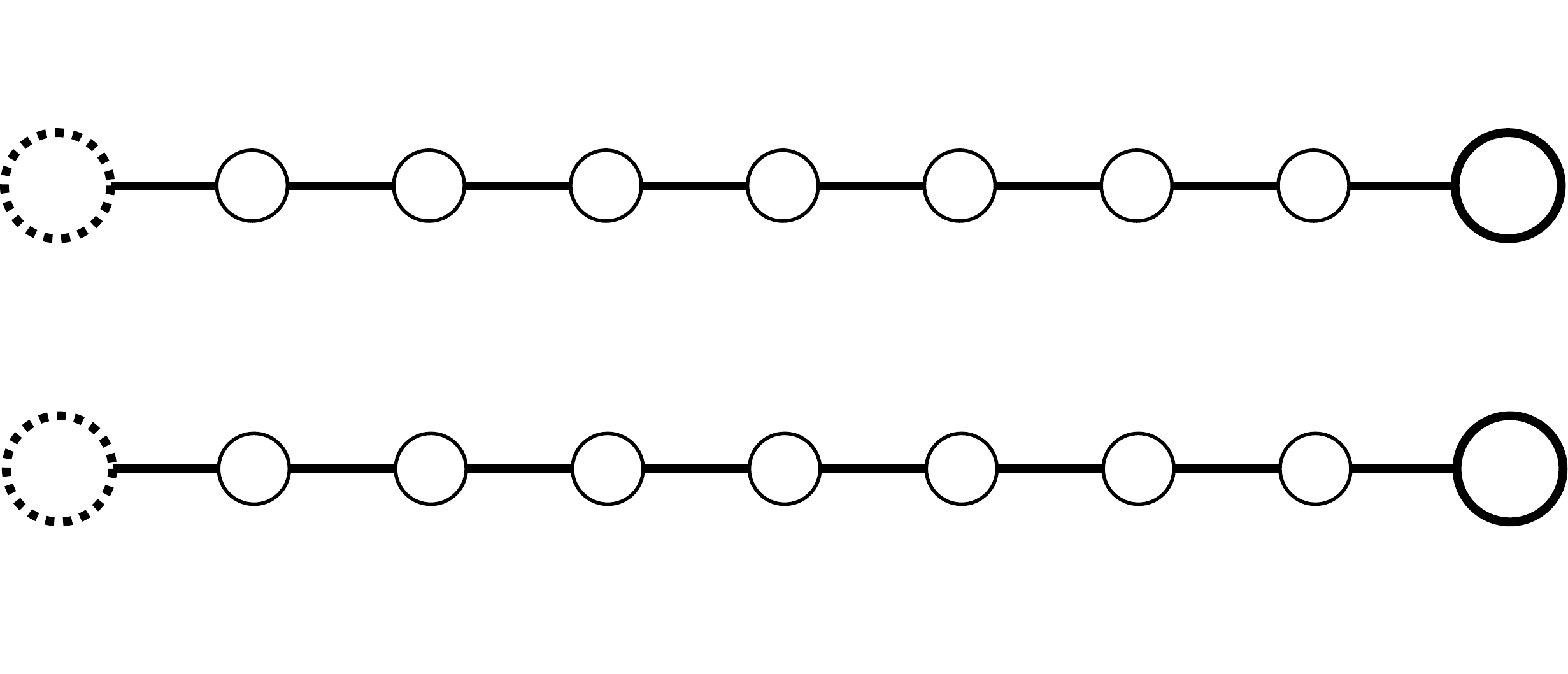_tex}

}\protect\caption{\label{fig:graphicalLM_basic_full}Graphical representation of $\mathcal{D}_{\varepsilon,n}^{\Theta}$,
$K_{n,\beta}\left(\theta\right)$ and $J_{n,\beta}\left(\theta\right)$.}
\end{figure}
Since the functions $f_{c}$ and $f_{s}$ in (\ref{eq:theor_ff_basic})
satisfy the self-consistent equations (\ref{eq:fsfcfi}), it is convenient
to expand them in the following way
\begin{eqnarray}
f_{c,\beta}\left(\theta\right) & = & \sum_{n=0}^{\infty}K_{n,\beta}\left(\theta\right)\nonumber \\
f_{c,\beta}\left(\theta\right) & = & \sum_{n=0}^{\infty}J_{n,\beta}\left(\theta\right)
\end{eqnarray}
where
\begin{eqnarray}
K_{0,\beta}\left(\theta\right) & = & m_{\beta}\cosh\left(\theta\right)\nonumber \\
J_{0,\beta}\left(\theta\right) & = & m_{\beta}\sinh\left(\theta\right)\nonumber \\
K_{n,\beta}\left(\theta\right) & = & \sum_{\beta_{1},\dots,\beta_{n}}\int\prod_{i}\frac{\mathrm{d}\theta_{i}}{2\pi\left[1+e^{\varepsilon_{\beta_{i}}\left(\theta_{i}\right)}\right]}\varphi_{\beta\beta_{1}}\dots\varphi_{\beta_{n-1}\beta_{n}}m_{\beta_{n}}\cosh\left(\theta_{n}\right)\nonumber \\
J_{n,\beta}\left(\theta\right) & = & \sum_{\beta_{1},\dots,\beta_{n}}\int\prod_{i}\frac{\mathrm{d}\theta_{i}}{2\pi\left[1+e^{\varepsilon_{\beta_{i}}\left(\theta_{i}\right)}\right]}\varphi_{\beta\beta_{1}}\dots\varphi_{\beta_{n-1}\beta_{n}}m_{\beta_{n}}\sinh\left(\theta_{n}\right)
\end{eqnarray}
The graphical representation of $K_{n,\beta}\left(\theta\right)$
and $J_{n,\beta}\left(\theta\right)$ can be seen in Fig. \ref{fig:graphicalLM_basic_fc_fs};
the dashed node indicates that the corresponding rapidity integral
and the filling fraction belonging to that node is not included in
the contribution. Comparing to Fig. \ref{fig:graphicalLM_basic_expand}
it is clear that $K_{n,\beta}\left(\theta\right)$ and $J_{n,\beta}\left(\theta\right)$
describe the contribution of the chain between one of the end nodes
and a dashed node with type $\beta$ which is $n$ steps away from
the end node. Multiplying $K_{n,\beta}\left(\theta\right)$ and $J_{n,\beta}\left(\theta\right)$
with $2\pi\sum_{\beta}\int\frac{\mathrm{d}\theta}{2\pi}\frac{m_{\beta}\cosh\left(\theta\right)}{1+e^{\varepsilon_{\beta}\left(\theta\right)}}$
and $-2\pi\sum_{\beta}\int\frac{\mathrm{d}\theta}{2\pi}\frac{m_{\beta}\sinh\left(\theta\right)f_{s,\beta}\left(\theta\right)}{1+e^{\varepsilon_{\beta}\left(\theta\right)}}$
as in (\ref{eq:theor_ff_basic}) closes the chains and they become
identical to the ones in Fig. \ref{fig:graphicalLM_basic_expand}
with length $n+1$, i.e. equal to $\mathcal{D}_{\varepsilon,n+1}^{\Theta}$.
The sum for $n$ in $f_{c}$ and $f_{s}$ then generates all the contributions
in $\mathcal{D}_{\varepsilon}^{\Theta}$. \emph{Q.e.d.}\end{proof}
\begin{thm}
\label{thm:Dressed_FF_theorem_Ex1}The dressed form factor of $\Theta$
with one active singularity $\bar{\theta}_{i}$ with type $\alpha_{i}$
is
\begin{eqnarray}
\mathcal{D_{\varepsilon}^{O}}\left(\bar{\theta}_{i}\right) & = & \sum_{n_{1},\dots,n_{k}=0}^{\infty}\frac{1}{\prod_{i}n_{i}!}\int_{-\infty}^{\infty}\prod_{j=1}^{\tilde{N}}\frac{\mathrm{d}\theta_{j}}{2\pi\left[1+e^{\varepsilon_{\beta_{j}}\left(\theta_{j}\right)}\right]}F_{2n_{1},\dots,2n_{k},c}^{\Theta}\left(\bar{\theta}_{i},\theta_{1},\dots,\theta_{\tilde{N}}\right)\label{eq:FF_conn_Theta_theorem_Ex1}
\end{eqnarray}
which is equal to
\begin{eqnarray}
\mathcal{D_{\varepsilon}^{O}}\left(\bar{\theta}_{i}\right) & = & f_{c,\alpha_{i}}\left(\bar{\theta}_{i}\right)f_{c,\alpha_{i}}\left(\bar{\theta}_{i}\right)-f_{s,\alpha_{i}}\left(\bar{\theta}_{i}\right)f_{s,\alpha_{i}}\left(\bar{\theta}_{i}\right)\label{eq:theor_ff_Ex1}
\end{eqnarray}
\end{thm}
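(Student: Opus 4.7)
The proof follows the same strategy as Theorem~\ref{thm:Dressed_FF_theorem_basic}, keeping the graphical representation in which $\mathcal{D}^\Theta_\varepsilon(\bar\theta_i)$ is a sum over species-labeled chains; the only difference is that one node of each chain is now a dashed node fixed at $(\bar\theta_i,\alpha_i)$, i.e.\ neither integrated over nor carrying a filling fraction. I would first perform the same combinatorial rewriting as in the earlier proof, absorbing the $\prod n_i!$ factors and converting the sum to an unrestricted sum over species-labeled chains in which the dashed node may occupy any of the $n+1$ slots of a chain of total length $n+1$.

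The plan is then to apply the expansion $\cosh(\theta_L-\theta_R) = \cosh\theta_L\cosh\theta_R - \sinh\theta_L\sinh\theta_R$ at the chain endpoints, just as in Theorem~\ref{thm:Dressed_FF_theorem_basic}, and to factorise each chain at the position of the dashed node $\bar\theta_i$ into a left and a right half-chain. Each half-chain runs from $\bar\theta_i$ outward to an integrated endpoint carrying an $m\cosh\theta$ or $m\sinh\theta$ factor; summing over its length and over the species assignment of its integrated nodes reproduces precisely the quantities $K_{n,\alpha_i}(\bar\theta_i)$ or $J_{n,\alpha_i}(\bar\theta_i)$ already introduced. The boundary case in which $\bar\theta_i$ itself is one of the two endpoints of the chain corresponds to the $n=0$ terms $K_{0,\alpha_i}(\bar\theta_i)=m_{\alpha_i}\cosh\bar\theta_i$ and $J_{0,\alpha_i}(\bar\theta_i)=m_{\alpha_i}\sinh\bar\theta_i$, while the extreme case in which the chain consists of $\bar\theta_i$ alone is handled by $\cosh 0 = \cosh^2\bar\theta_i - \sinh^2\bar\theta_i$ and fits naturally into the same framework with both half-chains reducing to these $n=0$ terms. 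Once the summations over the two independent half-chain lengths are performed they close into $f_{c,\alpha_i}(\bar\theta_i)=\sum_{n\ge 0}K_{n,\alpha_i}(\bar\theta_i)$ and $f_{s,\alpha_i}(\bar\theta_i)=\sum_{n\ge 0}J_{n,\alpha_i}(\bar\theta_i)$, and the claimed identity follows directly from the $\cosh\cosh-\sinh\sinh$ split.

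The step where I expect to spend most care is the combinatorial consistency of the factorisation across the three structural regimes (dashed node strictly interior, dashed node at one endpoint only, dashed node at both endpoints of a length-zero chain), ensuring that the left and right half-chains genuinely decouple into independent sums without any configuration being missed or double-counted. Once that is established, the reduction to $f_{c,\alpha_i}(\bar\theta_i)^2-f_{s,\alpha_i}(\bar\theta_i)^2$ is automatic, in complete parallel with the way a single pair $(f_c,f_s)$ emerged in the proof of Theorem~\ref{thm:Dressed_FF_theorem_basic}.
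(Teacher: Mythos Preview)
Your proposal is correct and follows essentially the same route as the paper: after the combinatorial cancellation of the $\prod_i n_i!$ factors, the paper too organises $\mathcal{D}^\Theta_\varepsilon(\bar\theta_i)$ as a double sum $\sum_{n,m\ge 0}\mathcal{D}^\Theta_{\varepsilon,n\,m}(\bar\theta_i)$ over the number of integrated nodes on either side of the fixed singularity, applies the $\cosh\theta_{1,n+m}=\cosh\cdot\cosh-\sinh\cdot\sinh$ split, and identifies each half-chain with $K_{n,\alpha_i}(\bar\theta_i)$ or $J_{n,\alpha_i}(\bar\theta_i)$ so that the independent summations close into $f_{c,\alpha_i}(\bar\theta_i)$ and $f_{s,\alpha_i}(\bar\theta_i)$. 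Your explicit treatment of the boundary cases $n=0$, $m=0$, and $n=m=0$ matches the paper's implicit handling of these configurations.
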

\begin{proof}
\begin{figure}
\centering{}\subfloat[\label{fig:graphicalLM_Ex1} $\mathcal{D}_{\varepsilon,n\, m}^{\Theta}\left(\bar{\theta}_{i}\right)$.]{\centering
\def\svgwidth{0.4\columnwidth}
\subimport*{./}{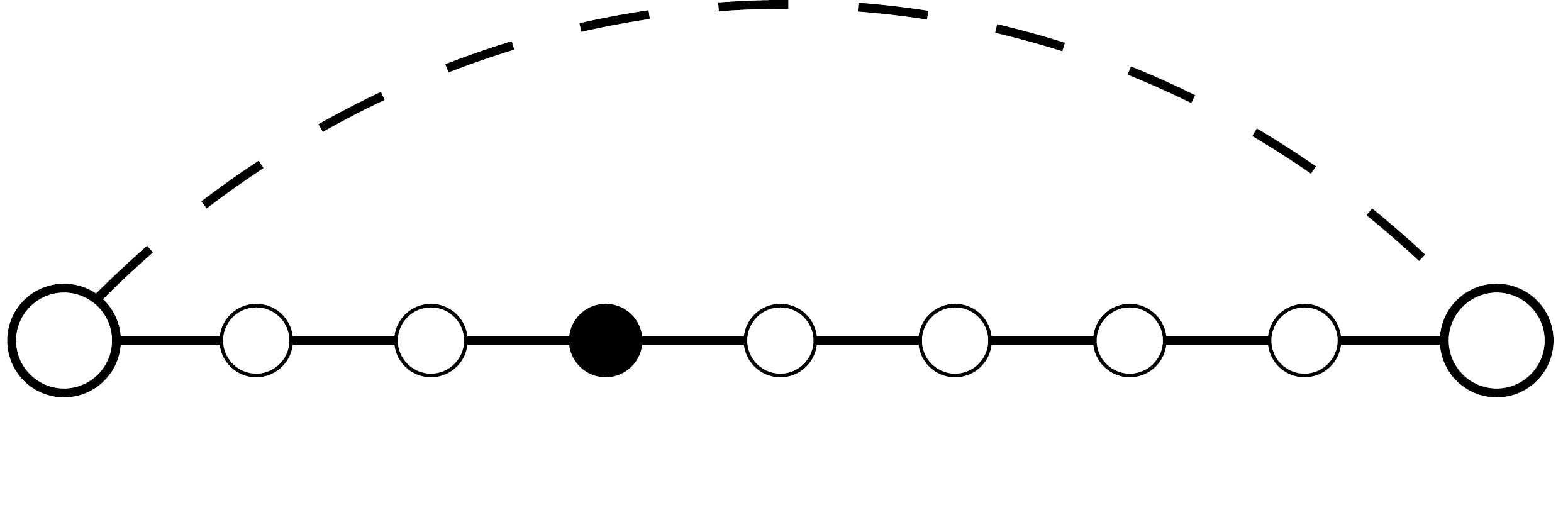_tex}}\hfill{}\subfloat[\label{fig:graphicalLM_Ex1_expand}$\mathcal{D}_{\varepsilon,n\, m}^{\Theta}\left(\bar{\theta}_{i}\right)$
expanded.]{\centering
\def\svgwidth{0.4\columnwidth}
\subimport*{./}{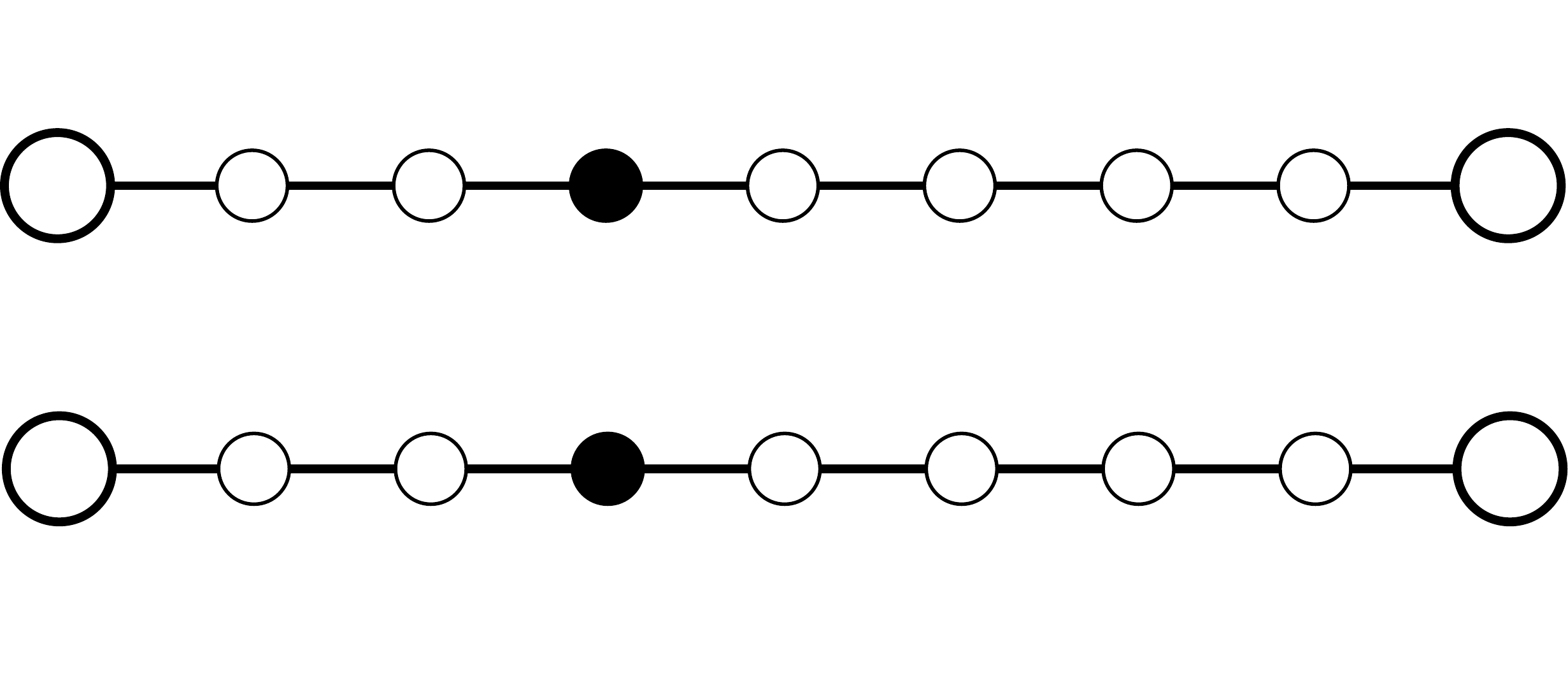_tex}

}\smallskip{}
\subfloat[\label{fig:graphicalLM_bEx1_fc_fs}$K_{n,\alpha_{i}}\left(\bar{\theta}_{i}\right)$
and $J_{n,\alpha_{i}}\left(\bar{\theta}_{i}\right)$.]{\centering
\def\svgwidth{0.4\columnwidth}
\subimport*{./}{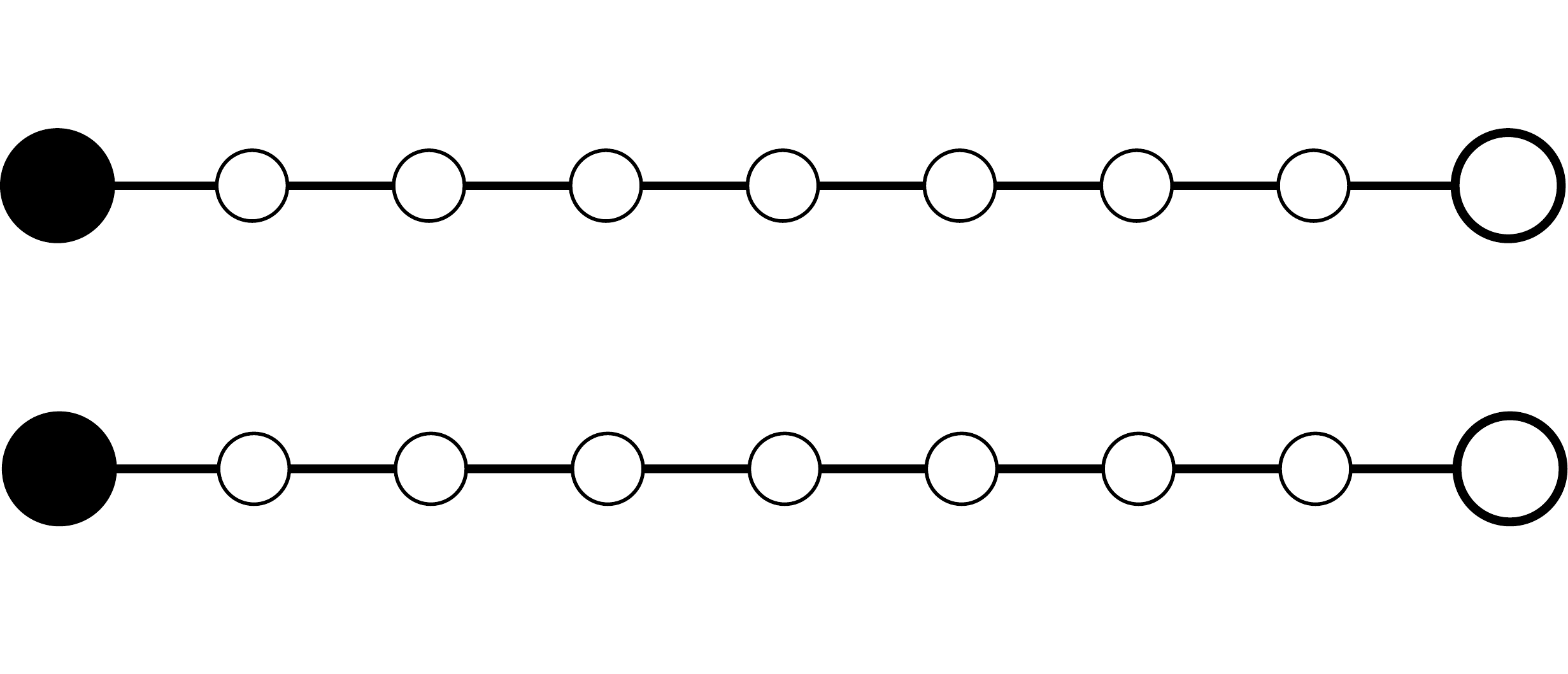_tex}

}\protect\caption{\label{fig:graphicalLM_Ex1_full}Graphical representation of $\mathcal{D}_{\varepsilon,n\, m}^{\Theta}\left(\bar{\theta}_{i}\right)$,
$K_{n,\alpha_{i}}\left(\bar{\theta}_{i}\right)$ and $J_{n,\alpha_{i}}\left(\bar{\theta}_{i}\right)$.}
\end{figure}
The proof follows the ideas used in demonstrating theorem \ref{thm:Dressed_FF_theorem_basic}.
The factor $\prod_{i}n_{i}!$ for the non-active singularities cancels
as before, but now one must sum over all possible positions of the
active singularity:
\begin{eqnarray}
\mathcal{D}_{\varepsilon}^{\Theta}\left(\bar{\theta}_{i}\right) & = & 2\pi\sum_{n,m=0}^{\infty}\sum_{\beta_{1},\dots,\beta_{n+m}}\int_{-\infty}^{\infty}\prod_{j=1}^{n+m}\frac{\mathrm{d}\theta_{j}}{2\pi\left[1+e^{\varepsilon_{\beta_{j}}\left(\theta_{j}\right)}\right]}\varphi_{12}\varphi_{23}\dots\varphi_{ni}\nonumber \\
 &  & \times\varphi_{i,n+1}\dots\varphi_{n+m-1,n+m}m_{\beta{}_{1}}m_{\beta_{n+m}}\cosh\left(\theta_{1,n+m}\right)=\sum_{n,m=0}^{\infty}\mathcal{D}_{\varepsilon,n\, m}^{\Theta}\left(\bar{\theta}_{i}\right)
\end{eqnarray}
where the sum runs for the number and the species nodes between the
active singularity and the end nodes (for $n=0$, the active singularity
is the end node on the left, while for $m=0$ it is the end node on
the right). The active singularity is marked by a black node in Figs.
\ref{fig:graphicalLM_Ex1} and \ref{fig:graphicalLM_Ex1_expand}.
$K_{n,\alpha_{i}}\left(\bar{\theta}_{i}\right)$ and $J_{n,\alpha_{i}}\left(\bar{\theta}_{i}\right)$
are represented in Fig. \ref{fig:graphicalLM_bEx1_fc_fs}; they are
equal to the contribution of the chain between one of the end nodes
and the active singularity. Multiplying them as in \ref{eq:theor_ff_Ex1}
it follows that
\begin{eqnarray}
\mathcal{D}_{\varepsilon,n\, m}^{\Theta}\left(\bar{\theta}_{i}\right) & = & K_{n,\alpha_{i}}\left(\bar{\theta}_{i}\right)K_{m,\alpha_{i}}\left(\bar{\theta}_{i}\right)-J_{n,\alpha_{i}}\left(\bar{\theta}_{i}\right)J_{m,\alpha_{i}}\left(\bar{\theta}_{i}\right)
\end{eqnarray}
and the summation in both of $f_{c/s}$ the result exactly reproduces
$\mathcal{D}_{\varepsilon}^{\Theta}\left(\bar{\theta}_{i}\right)$.
\emph{Q.e.d.}\end{proof}
\begin{thm}
\label{thm:Dressed_FF_theorem_Exmulti}The dressed form factor of
$\Theta$ with $N$ active singularities $\left\{ \bar{\theta}_{1},\dots,\bar{\theta}_{N}\right\} $
is

\begin{eqnarray}
\mathcal{D_{\varepsilon}^{O}}\left(\bar{\theta}_{1},\dots,\bar{\theta}_{N}\right) & = & \sum_{n_{1},\dots,n_{k}=0}^{\infty}\frac{1}{\prod_{i}n_{i}!}\int_{-\infty}^{\infty}\prod_{j=1}^{\tilde{N}}\frac{\mathrm{d}\theta_{j}}{2\pi\left[1+e^{\varepsilon_{\beta_{j}}\left(\theta_{j}\right)}\right]}\nonumber \\
 &  & \times F_{2n_{1},\dots,2n_{k},c}^{\mathcal{O}}\left(\bar{\theta}_{1},\dots,\bar{\theta}_{N},\theta_{1},\dots,\theta_{\tilde{N}}\right)\label{eq:FF_conn_Theta_theorem_Exmulti}
\end{eqnarray}
which is equal to
\begin{eqnarray}
\mathcal{D_{\varepsilon}^{O}}\left(\bar{\theta}_{1},\dots,\bar{\theta}_{N}\right) & = & \sum_{i\neq j}\sum_{\left\{ \alpha\right\} }\left[f_{c,\alpha_{i}}\left(\bar{\theta}_{i}\right)f_{c,\alpha_{j}}\left(\bar{\theta}_{j}\right)-f_{s,\alpha_{i}}\left(\bar{\theta}_{i}\right)f_{s,\alpha_{j}}\left(\bar{\theta}_{j}\right)\right]\nonumber \\
 &  & \times f_{\alpha_{1},\alpha_{i}}\left(\bar{\theta}_{i}\right)f_{\alpha_{2},\alpha_{1}}\left(\bar{\theta}_{1}\right)\dots f_{j,\alpha_{n}}\left(\bar{\theta}_{n}\right)\label{eq:FF_conn_Theta_Exmulti}
\end{eqnarray}
where $\left\{ \alpha\right\} =\left\{ 1,\dots,N\right\} \setminus\left\{ i,j\right\} $.\end{thm}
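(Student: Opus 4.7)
The plan is to extend the graphical/combinatorial argument used in Theorems \ref{thm:Dressed_FF_theorem_basic} and \ref{thm:Dressed_FF_theorem_Ex1} from zero and one active singularity to the general case of $N$. As before, the factor $\prod_i n_i!$ in (\ref{eq:FF_conn_Theta_theorem_Exmulti}) cancels against the symmetrization over identical passive species, so that $\mathcal{D_{\varepsilon}^{O}}(\bar\theta_1,\dots,\bar\theta_N)$ can be written as a double sum: first over the chain diagrams built from the Leclair--Mussardo structure of $F^{\Theta}_{2n_1,\dots,2n_k,c}$, and second over all ways of distributing the $N$ active singularities $\bar\theta_1,\dots,\bar\theta_N$ among the nodes of the chain while integrating over the remaining passive rapidities with the filling factors $(1+e^{\varepsilon_\beta})^{-1}$.

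First I would invoke the Leclair--Mussardo chain representation of the connected form factor of $\Theta$: every chain of $\varphi$-factors with mass factors at the endpoints and a $\cosh\theta_{1,\ell}$ bridging them. In any given chain diagram, exactly two of the active singularities occupy the endpoints (call them $i$ and $j$), while the remaining $N-2$ lie at interior positions in some order forming the set $\{\alpha\}=\{1,\dots,N\}\setminus\{i,j\}$, with an arbitrary (integrated) string of passive nodes interspersed between successive active singularities and hanging off the two endpoints. The hyperbolic identity $\cosh(\theta_i-\theta_j)=\cosh\theta_i\cosh\theta_j-\sinh\theta_i\sinh\theta_j$ then splits each such diagram into a $K$-type piece and a $J$-type piece, exactly as in Theorem \ref{thm:Dressed_FF_theorem_Ex1} (cf.\ Fig.\ \ref{fig:graphicalLM_basic_expand}).

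Next I would resum the passive nodes on each segment of the chain. The string of integrated passive nodes joining two consecutive active singularities, with $\varphi$-kernels at the ends, is precisely the geometric series defining the dressed propagator $f_{k,\alpha}(\bar\theta)$ of (\ref{eq:fsfcfi}); this is how $\mathcal{N}_{\varphi,ij}$ already arose in the proof of Theorem \ref{thm:DetExpandforNi}. Likewise, the strings of passive nodes hanging off the endpoint active singularities $i$ and $j$, together with the final $m_\beta\cosh\theta$ or $m_\beta\sinh\theta$, resum into $f_{c,\alpha_i}(\bar\theta_i)$ respectively $f_{s,\alpha_i}(\bar\theta_i)$ by exactly the argument that converted the sums $\sum_n K_{n,\beta}$, $\sum_n J_{n,\beta}$ into $f_{c,\beta}$, $f_{s,\beta}$ in Theorem \ref{thm:Dressed_FF_theorem_basic}. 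Collecting these factors reproduces the product structure on the right-hand side of (\ref{eq:FF_conn_Theta_Exmulti}), with the outer sum over unordered endpoint pairs $\{i,j\}$ and the inner sum over interior orderings $\{\alpha\}$.

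The main obstacle is the combinatorial bookkeeping: one must verify that every chain diagram in the Leclair--Mussardo expansion of $F^{\Theta}_c$ is counted exactly once when one sums over (a) choice of endpoint active singularities $(i,j)$, (b) ordering of the interior active singularities, and (c) the numbers and species of passive nodes inserted on each of the $N-1$ segments and the two endpoint tails. The cleanest way to handle this is to note that the permutations of active singularities along the chain are already accounted for by summing over $i\neq j$ and over $\{\alpha\}$ as an ordered tuple, while the permutations among passive nodes of the same species cancel the $\prod_i n_i!$ in the denominator, exactly as in the two preceding theorems. Once the bookkeeping is in order, the resummation into $f_{c/s,\alpha_i}(\bar\theta_i)$ at the endpoints and $f_{k,\alpha}(\bar\theta)$ on the interior segments is immediate and yields (\ref{eq:FF_conn_Theta_Exmulti}).
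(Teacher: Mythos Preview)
Your proposal is correct and follows essentially the same approach as the paper: identify the outermost active singularities $i,j$ in each chain, resum the passive-node segments between consecutive active singularities into the dressed kernels $f_{k,\alpha}$ (the paper writes these partial sums explicitly as $L_{n,pq}$) and the outer passive tails into $f_{c/s,\alpha_i}(\bar\theta_i)$, then sum over all orderings of the active singularities along the chain. One small slip: the sum $\sum_{i\neq j}$ in (\ref{eq:FF_conn_Theta_Exmulti}) runs over \emph{ordered} pairs, not unordered ones, matching the fact that the ``permutations'' in $F^{\Theta}_{c}$ count each chain and its reversal separately.
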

\begin{proof}
\begin{figure}
\centering{}\subfloat[\label{fig:graphicalLM_Exmulti} $\mathcal{D}_{\varepsilon}^{\Theta}\left(\bar{\theta}_{1},\dots,\bar{\theta}_{N}\right)$.]{\centering
\def\svgwidth{0.4\columnwidth}
\subimport*{./}{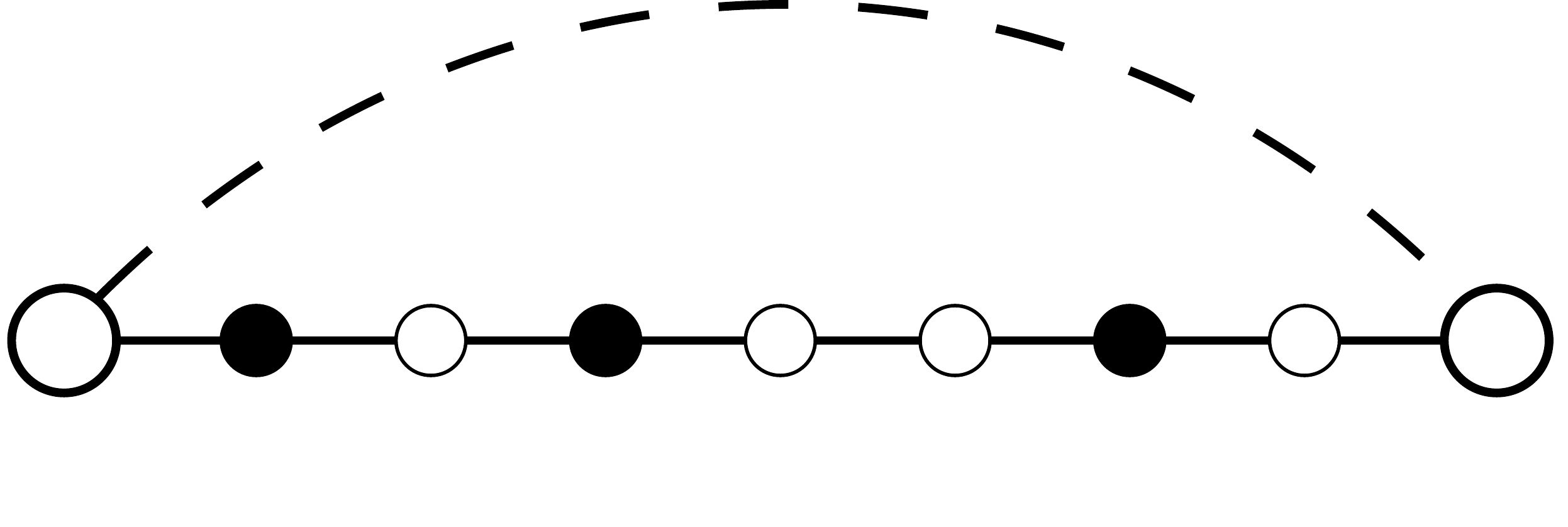_tex}}\hfill{}\subfloat[\label{fig:graphicalLM_Exmulti_expand}$\mathcal{D}_{\varepsilon}^{\Theta}\left(\bar{\theta}_{1},\dots,\bar{\theta}_{N}\right)$
expanded.]{\centering
\def\svgwidth{0.4\columnwidth}
\subimport*{./}{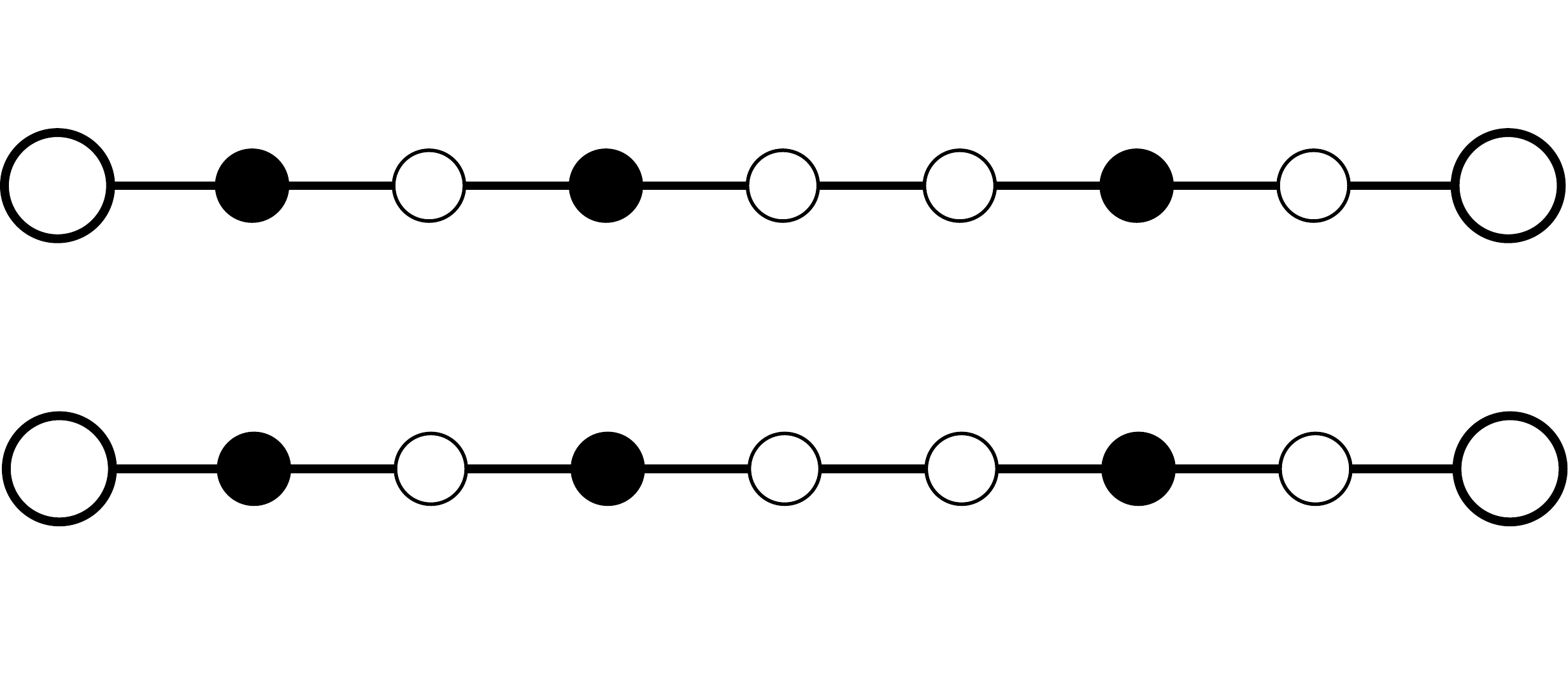_tex}

}\smallskip{}
\subfloat[\label{fig:graphicalLM_Exmulti_fij}$L_{n,ji}$ .]{\centering
\def\svgwidth{0.4\columnwidth}
\subimport*{./}{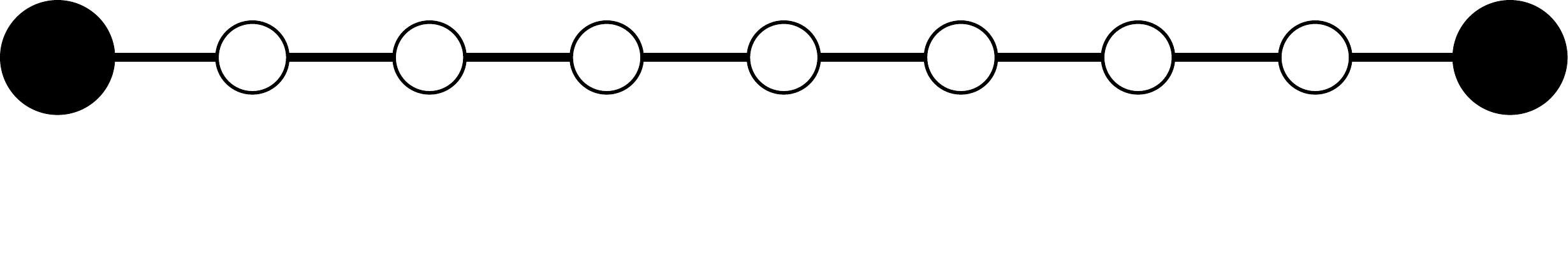_tex}

}\protect\caption{\label{fig:graphicalLM_Exmulti_full}Graphical representation of the
terms of $\mathcal{D}_{\varepsilon}^{\Theta}$$\left(\bar{\theta}_{1},\dots,\bar{\theta}_{N}\right)$
and $L_{n,ji}$ .}
\end{figure}
As in the proofs of the previous theorems, (\ref{eq:FF_conn_Theta_theorem_Exmulti})
can be organized into a sum over terms corresponding to individual
permutations of the active singularities. For a given permutation,
the contribution is the sum of graphs represented in Figs. \ref{fig:graphicalLM_Exmulti}
and \ref{fig:graphicalLM_Exmulti_expand}, where the number and type
of nodes separating end nodes and the active singularities is varying.

The functions $f_{\alpha_{q},\alpha_{p}}\left(\bar{\theta}_{p}\right)$
can be expanded as $f_{c/s}$ previously, using their definition in
(\ref{eq:fsfcfi})
\begin{eqnarray}
f_{\alpha_{q},\alpha_{p}}\left(\bar{\theta}_{p}\right) & = & \sum_{n=0}^{\infty}L_{n,pq}
\end{eqnarray}
where
\begin{eqnarray}
L_{0,pq} & = & \varphi_{pq}\nonumber \\
L_{n,pq} & = & \sum_{\beta_{1},\dots,\beta_{n}}\int\prod_{i}\frac{\mathrm{d}\theta_{i}}{2\pi\left[1+e^{\varepsilon_{\beta_{i}}\left(\theta_{i}\right)}\right]}\varphi_{\alpha_{p}\beta_{1}}\dots\varphi_{\beta_{n}\alpha_{q}}
\end{eqnarray}
$L_{n,pq}$ is represented in Fig. \ref{fig:graphicalLM_Exmulti_fij};
it generates all the possible contribution to Fig. \ref{fig:graphicalLM_Exmulti_expand}
between two active singularities. In a given permutation of the active
singularities let us take $\bar{\theta}_{i}$ and $\bar{\theta}_{j}$
as the two active singularities closest to the left/right end nodes;
then the terms 
\[
f_{c,\alpha_{i}}\left(\bar{\theta}_{i}\right)f_{c,\alpha_{j}}\left(\bar{\theta}_{j}\right)-f_{s,\alpha_{i}}\left(\bar{\theta}_{i}\right)f_{s,\alpha_{j}}\left(\bar{\theta}_{j}\right)
\]
generate all the contributions between the active singularities and
the ends, and 
\[
f_{\alpha_{1},\alpha_{i}}\left(\bar{\theta}_{i}\right)f_{\alpha_{2},\alpha_{1}}\left(\bar{\theta}_{1}\right)\dots f_{j,\alpha_{n}}\left(\bar{\theta}_{n}\right)
\]
generate all the contributions between the other active singularities.
Summing up for all the permutations of the active singularities proves
the theorem. \emph{Q.e.d.}
\end{proof}
Theorems \ref{thm:Dressed_FF_theorem_basic}, \ref{thm:Dressed_FF_theorem_Ex1}
and \ref{thm:Dressed_FF_theorem_Exmulti} prove the equivalence of
the form factor series (\ref{eq:LMExcited_general}) and the TBA equations
for $\left\langle \Theta\right\rangle $ in any excited state described
by the TBA system (\ref{eq:excited_state_TBA_and_energy}).

\section{Finite volume expectation values in the $T_{2}$ model \label{sec:Finite-volume-expectation-in-T2}}

For the numerical validation of the conjecture (\ref{eq:LMExcited_general})
we follow a similar strategy as we did for the Leclair-Mussardo conjecture
in \cite{Szecsenyi:2013gna}, where we chose the $T_{2}$ model for
the numerics. The $T_{2}$ model is the perturbation of the $\mathcal{M}_{2,7}$
conformal minimal model \cite{Belavin:1984vu} by the primary operator
$\Phi_{1,3}$. 

There are several advantages in this choice. First, all the data necessary
to calculate the series (\ref{eq:LMExcited_general}): the scattering
theory, the form factors (of primary fields) and the excited state
TBA equations, are known for this model. Second, the $T_{2}$ model
contains operators with known form factors that are not related to
the trace of the stress-energy tensor; the trace of the stress energy
tensor is not interesting since the conjecture is equivalent to the
TBA equation as shown in Section \ref{sec:Theta-TBA-LMEx}. Finally
for the $T_{2}$ model the Truncated Conformal Space Approach (TCSA)
improved by renormalization group methods \cite{Szecsenyi:2013gna}
gives an efficient way to evaluate the expectation values directly
solving the dynamics of the model. The TCSA was introduced in \cite{Yurov:1989yu}
while renormalization techniques were proposed in \cite{2007PhRvL..98n7205K,2008JSMTE..03..011F}
and further developed in \cite{2011arXiv1106.2448G,2012NuPhB.859..177W,2014JHEP...09..052L};
the development of related methods is now a very active field of investigation
\cite{2013NuPhB.877..457B,2014arXiv1409.1494C,2014arXiv1409.1581H,2014arXiv1412.3460R}.
For details on TCSA in the $T_{2}$ model and the renormalization
method we refer the interested reader to \cite{Szecsenyi:2013gna}.

\subsection{Excited state TBA equations for a single type-1 state}

\subsubsection{General form and solution of the excited TBA equations}

The simplest excited states in the excited TBA formalism for the $T_{2}$
model \cite{Dorey:1997rb} are those with a single type-1 particle.
For these excited states the TBA equations contain only two active
singularities of type-2 with $\eta_{1}=-1$, $\eta_{2}=1$: 
\begin{eqnarray}
\varepsilon_{a}\left(\theta\right) & = & m_{a}L\cosh\left(\theta\right)+\log\left(\frac{S_{a2}\left(\theta-\bar{\theta}_{1}\right)}{S_{a2}\left(\theta-\bar{\theta}_{2}\right)}\right)\label{eq:TBA_T2_particle1}\\
 &  & -\sum_{b}\int\frac{\mathrm{d}\theta'}{2\pi}\varphi_{ab}\left(\theta-\theta'\right)\log\left(1+e^{-\varepsilon_{b}\left(\theta'\right)}\right)\nonumber \\
e^{-\varepsilon_{2}\left(\bar{\theta}_{1/2}\right)} & = & -1\label{eq:TBA_T2_particle1_cond}\\
E_{\mathrm{TBA}}\left(L\right) & = & -im_{2}\left(\sinh\left(\bar{\theta}_{1}\right)-\sinh\left(\bar{\theta}_{2}\right)\right)\nonumber \\
 &  & -\sum_{a}\int\frac{\mathrm{d}\theta}{2\pi}m_{a}\cosh\left(\theta\right)\log\left(1+e^{-\varepsilon_{a}\left(\theta\right)}\right)\label{eq:TBA_T2_particle1_energ}
\end{eqnarray}
which are related by $\bar{\theta}_{2}=\bar{\theta}_{1}^{^{*}}$ for
states with nonzero momentum, where $*$ is the complex conjugation;
or $\bar{\theta}_{2}=-\bar{\theta}_{1}$ for zero momentum. 

In large volume (IR limit) the integral term is negligible in (\ref{eq:TBA_T2_particle1}),
and the $\cosh$ term goes to infinity with the volume, while the
value of $\varepsilon_{2}\left(\bar{\theta}_{1}\right)$ is finite,
which forces the imaginary part of the active singularity's position
to $\frac{i\pi}{10}$, since $S_{22}$ is singular around $\bar{\theta}_{1}-\bar{\theta}_{2}\sim\frac{i\pi}{5}$.
The position of the active singularity can be written as

\begin{eqnarray}
\bar{\theta}_{1} & = & \tilde{\theta}+i\left(\frac{\pi}{10}+\delta\right)
\end{eqnarray}
where $\tilde{\theta}$ and $\delta$ are real; $\delta$ is a correction
to the imaginary part that decays exponentially in the dimensionless
variable $m_{1}L$. Substituting this form into condition (\ref{eq:TBA_T2_particle1_cond})
and keeping only the first order corrections in $\delta$, the solution
for the position of the active singularity is 

\begin{eqnarray}
m_{1}L\sinh\tilde{\theta} & = & 2\pi s\nonumber \\
\delta & = & \cos\left(\pi s\right)\tan\left(\frac{3\pi}{10}\right)\tan^{2}\left(\frac{2\pi}{5}\right)e^{-m_{2}\cos\left(\frac{\pi}{10}\right)\sqrt{m_{1}^{2}L^{2}+\left(2\pi s\right)^{2}}}\label{eq:asymp_pole}
\end{eqnarray}
where $s$ is an integer number giving the momentum quantum number
of the state. Using this solution for TBA energy (\ref{eq:TBA_T2_particle1_energ})
and expanding the $\log\left(1+e^{-\varepsilon}\right)$ term in the
integral, the energy takes the following form\textcolor{red}{{} }
\begin{eqnarray}
E_{\mathrm{TBA}}\left(L\right) & = & E(L)-\mathcal{B}L\nonumber \\
 & = & 2\sin\left(\frac{\pi}{10}\right)m_{2}\cosh\left(\tilde{\theta}\right)+2m_{2}\cosh\left(\tilde{\theta}\right)\sin\left(\frac{\pi}{10}\right)\delta\nonumber \\
 &  & -\sum_{a}\int\frac{\mathrm{d}\theta}{2\pi}m_{a}\cosh\left(\theta\right)e^{-m_{a}L\cosh\left(\theta\right)}\frac{S_{a2}\left(\theta-\tilde{\theta}+i\frac{\pi}{10}\right)}{S_{a2}\left(\theta-\tilde{\theta}-i\frac{\pi}{10}\right)}\nonumber \\
 & = & \sqrt{m_{1}^{2}+\left(\frac{2\pi s}{L}\right)^{2}}+2\sqrt{m_{2}^{2}+\left(\frac{2\pi s}{L}\frac{m_{2}}{m_{1}}\right)^{2}}\cos\left(\frac{\pi}{10}\right)\nonumber \\
 &  & \times\cos\left(\pi s\right)\tan\left(\frac{3\pi}{10}\right)\tan^{2}\left(\frac{2\pi}{5}\right)e^{-m_{2}\cos\left(\frac{\pi}{10}\right)\sqrt{m_{1}^{2}L^{2}+\left(2\pi s\right)^{2}}}\nonumber \\
 &  & -\sum_{a}\int\frac{\mathrm{d}\theta}{2\pi}m_{a}\cosh\left(\theta\right)e^{-m_{a}L\cosh\left(\theta\right)}S_{a1}\left(\theta-\tilde{\theta}+i\frac{\pi}{2}\right)
\end{eqnarray}
where the bootstrap identity 
\begin{equation}
\frac{S_{a2}\left(\theta+i\frac{\pi}{10}\right)}{S_{a2}\left(\theta-i\frac{\pi}{10}\right)}=S_{a1}\left(\theta+i\frac{\pi}{2}\right)
\end{equation}
was used. The first term gives $L^{-1}$ corrections related to the
kinetic energy of the particle in finite volume, while the second
and third terms are the leading exponential corrections, the so-called
$\mu$ and $F$ terms \cite{1991NuPhB.362..329K}, which for a zero-momentum
particle were first derived by Lüscher in \cite{Luscher:1985dn}. 

In the small volume (UV) limit the energy is proportional to the effective
central charge of the state
\begin{eqnarray}
E_{\mathrm{TBA}}\left(L\right) & = & -\frac{6\pi}{L}c_{eff}\left(L\right)
\end{eqnarray}
which has the ultraviolet limit 
\begin{equation}
c_{eff}(0)=c-12(\Delta+\bar{\Delta})
\end{equation}
where $c=-68/7$ is the central charge of the minimal model $\mathcal{M}_{2,7}$
and $\Delta$, $\bar{\Delta}$ are the left/right conformal weights
of the state in the ultraviolet limit. Using the dilogarithm trick
introduced in \cite{Zamolodchikov:1989cf} one can confirm the expected
effective central charge for these states \cite{Dorey:1997rb}. In
Fig. \ref{fig:c_eff_TBA_TCSA_check} we plot $c_{eff}$ for the states
$s=0,1$ showing that the TBA results match perfectly with the TCSA
calculation and also reproduces the expected asymptotics. 

\textcolor{red}{}
\begin{figure}
\centering{}\textcolor{red}{\includegraphics[scale=0.53]{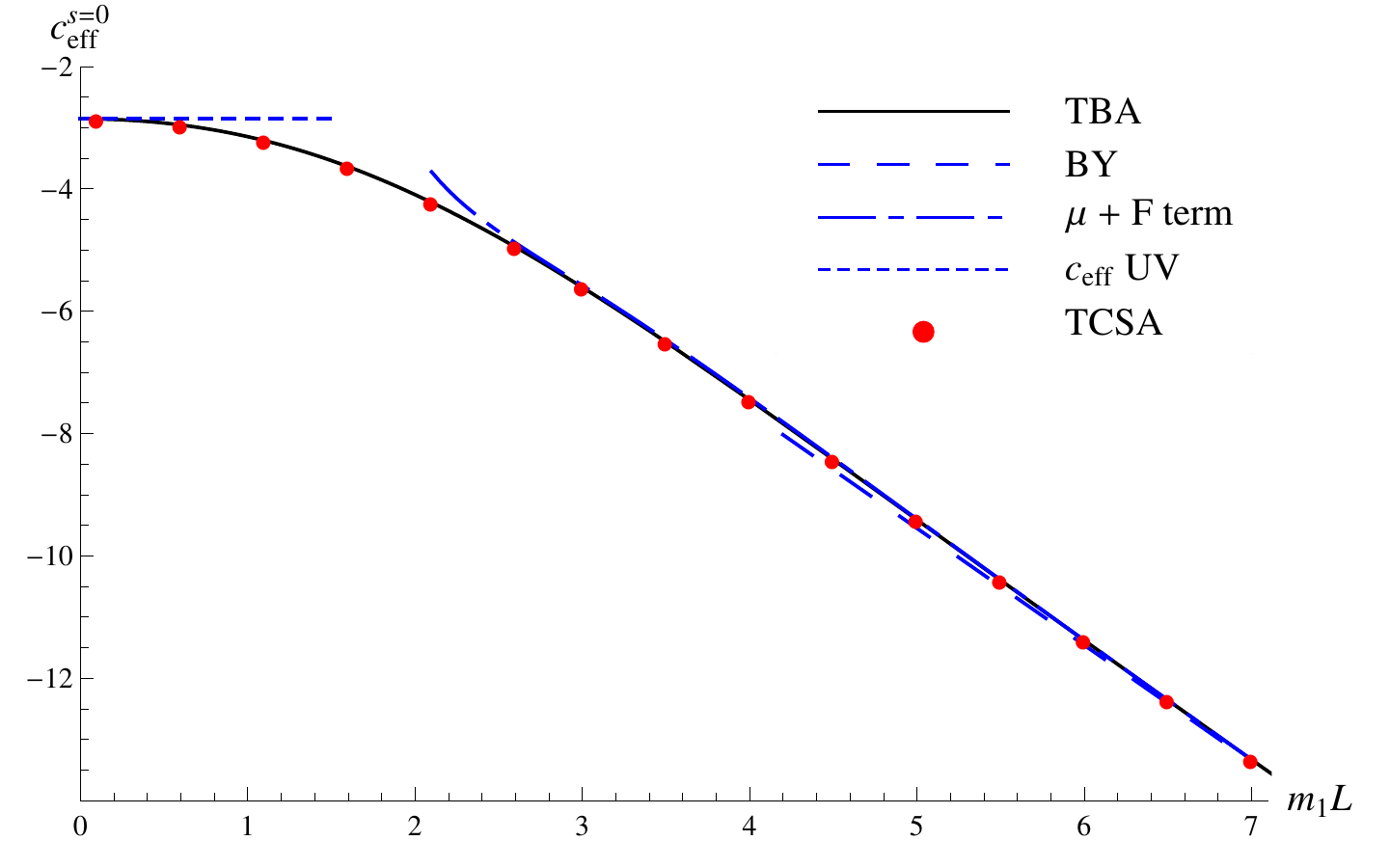}\includegraphics[scale=0.53]{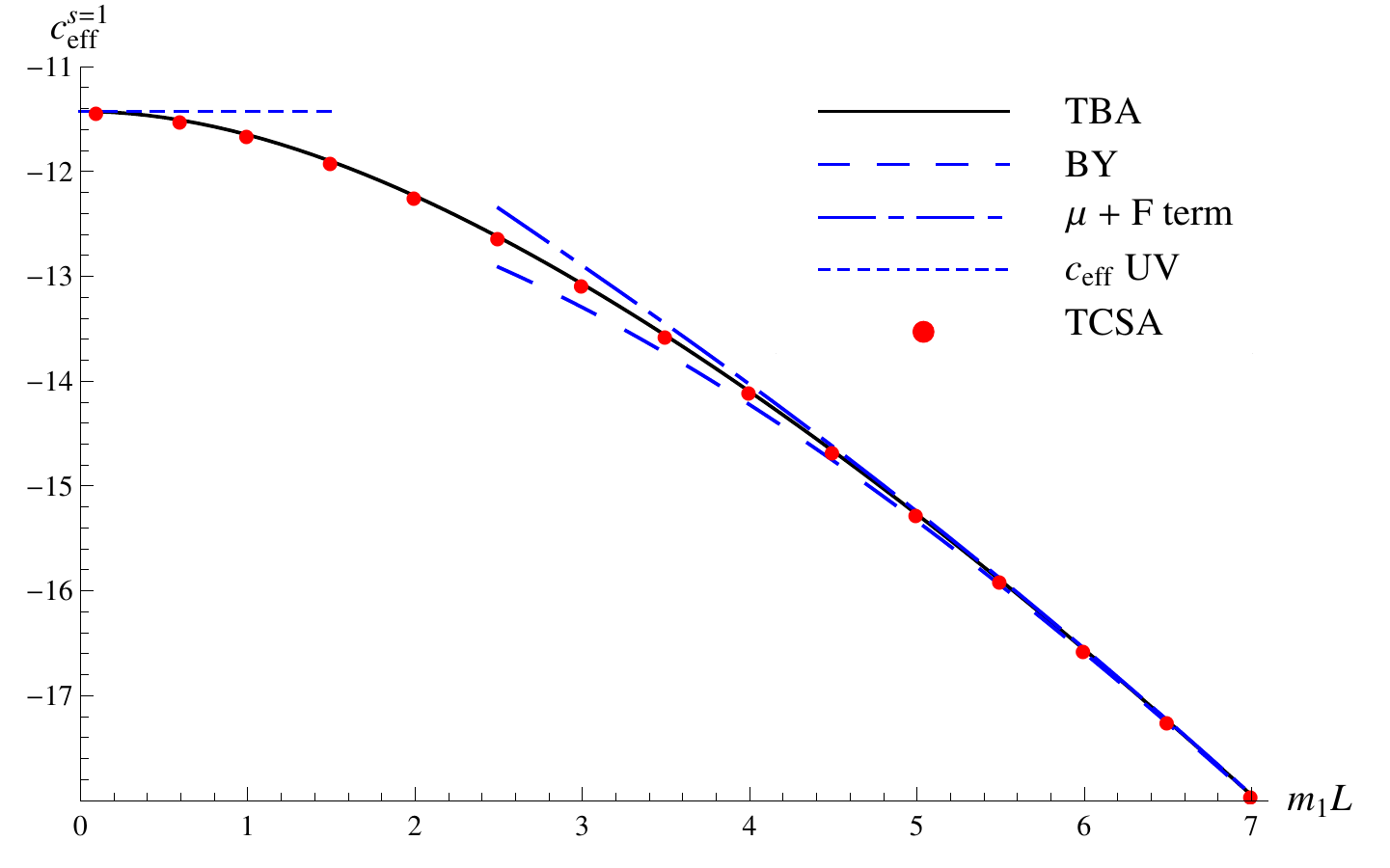}\protect\caption{\label{fig:c_eff_TBA_TCSA_check}Effective central charge for the
excited states with momentum quantum number $s=0,1$ from TBA and
TCSA, with the UV (CFT value) and IR asymptotics ($\mu$ and F term).}
}
\end{figure}

The excited state TBA equations are solved numerically by simultaneously
iterating eqns. (\ref{eq:TBA_T2_particle1}) and (\ref{eq:TBA_T2_particle1_cond})
in large volume, where the asymptotic of the pole position (\ref{eq:asymp_pole})
can be used as a starting point. Using this solution the volume is
decreased and the equations iterated, and continuing this process
the solution can be tracked to small volume. For $s\neq0$ the numerics
is straightforward to perform up to precision of order $10^{-12}$,
and all the ingredients to calculate the conjecture (\ref{eq:LMExcited_general})
can be readily constructed. For $s=0$ there exists a critical volume
$r_{c}$ under which it is necessary to be more careful with the numerical
calculation.

\subsubsection{Zero-momentum state: desingularization in small volume \label{sub:desingularisation}}

As described in details in \cite{Bazhanov:1996aq,Dorey:1996re,Dorey:1997rb},
for states containing a zero-momentum particle it can happen that
under a given critical volume $r_{c}$ some singularity ends up being
on the integration contour. To describe this situation, we recall
that the $Y$-system (\ref{eq:Ysystem}) gives relations between positions
where the functions $Y_{\alpha}=e^{\varepsilon_{\alpha}}$ take the
values 0 and $-1$, which are the logarithmic singular points of the
TBA equations. For the $T_{2}$ model the incidence matrix is given
by
\begin{eqnarray}
I^{\left[T_{2}\right]} & = & \begin{pmatrix}0 & 1\\
1 & 1
\end{pmatrix}\qquad\mbox{and}\qquad h=5
\end{eqnarray}
For the excited state containing a single type-1 particle with zero
momentum in large volume there are active singularities on the imaginary
axis at $\pm\bar{\theta}_{1}$ with 
\begin{equation}
\bar{\theta}_{1}=i\frac{\pi}{10}+i\delta
\end{equation}
with $\delta>0$ where $Y_{2}=-1$. From (\ref{eq:Ysystem}) it follows
that $Y_{2}=-1$ at $\bar{\theta}_{1}-i\frac{2\pi}{5}$ and $-\bar{\theta}_{1}+i\frac{2\pi}{5}$,
and $Y_{1}=Y_{2}=0$ at positions $\bar{\theta}_{1}-i\frac{\pi}{5}$
and $-\bar{\theta}_{1}+i\frac{\pi}{5}$. As the volume decreases,
the value of $\delta$ increases till at some critical value of the
volume given by $m_{1}L=r_{c}$ it reaches $\delta_{c}=\frac{\pi}{10}$.
At this point $\bar{\theta}_{1}=-\bar{\theta}_{1}+i\frac{2\pi}{5}$
resulting in a coincidence of singularities, and also of zeros. Decreasing
the volume to $m_{1}L<r_{c}$ results in the ``scattering`` of the
singularities on each other at right angles pushing them away from
the imaginary axis with fixed imaginary part in the form 
\begin{equation}
\bar{\theta}_{1}=i\frac{\pi}{5}+\alpha
\end{equation}
As a result, the zeros of $Y_{1}$ and $Y_{2}$ sit exactly on the
integration contour, making the equation for the pseudo-energy (\ref{eq:TBA_T2_particle1})
singular and leading to instabilities in the numerical solution of
the TBA equations.

One way to handle the numerical treatment of the problem is to shift
the integration contour, while an alternative is to rearrange the
self-consistent equations appropriately, which is called desingularization.
The latter approach relies on the relation \cite{Dorey:1997rb}
\begin{eqnarray*}
\varphi_{ad}\left(\theta\right) & = & -\varphi_{h}\left(\theta\right)I_{ad}^{\left[T_{2}\right]}+\sum_{b}\int\frac{\mathrm{d}\theta'}{2\pi}\varphi_{ab}\left(\theta-\theta'\right)\varphi_{h}\left(\theta'\right)I_{bd}^{\left[T_{2}\right]}
\end{eqnarray*}
\begin{eqnarray}
\varphi_{h}\left(\theta\right) & = & \frac{h}{2\cosh\left(\frac{h}{2}\theta\right)}=\pm i\partial_{\theta}\log\left(\sigma_{h}\left(\theta\pm i\frac{\pi}{h}\right)\right)\nonumber \\
\sigma_{h}\left(\theta\right) & = & \tanh\left(\frac{h}{4}\theta\right)
\end{eqnarray}
which allows the TBA equations to be recast as 
\begin{eqnarray}
\hat{\varepsilon}_{a}\left(\theta\right) & = & \varepsilon_{a}\left(\theta\right)-\log\left(\sigma\left(\theta'-\tilde{\theta}_{1}\right)\sigma\left(\theta'-\tilde{\theta}_{2}\right)\right)\nonumber \\
\hat{\varepsilon}_{a}\left(\theta\right) & = & m_{a}L\cosh\left(\theta\right)-\sum_{b}\int\frac{\mathrm{d}\theta'}{2\pi}\varphi_{ab}\left(\theta-\theta'\right)\log\left(\sigma\left(\theta'-\tilde{\theta}_{1}\right)\sigma\left(\theta'-\tilde{\theta}_{2}\right)+e^{-\hat{\varepsilon}_{b}\left(\theta'\right)}\right)\nonumber \\
e^{\hat{\varepsilon}_{a}\left(\bar{\theta}_{1}\right)} & = & i\coth\left(\frac{5}{2}\tilde{\theta}_{1}+i\frac{\pi}{4}\right)\nonumber \\
E_{\mathrm{TBA}}\left(L\right) & = & -\sum_{a}\int\frac{\mathrm{d}\theta}{2\pi}m_{a}\cosh\left(\theta\right)\log\left(\sigma\left(\theta-\tilde{\theta}_{1}\right)\sigma\left(\theta-\tilde{\theta}_{2}\right)+e^{-\hat{\varepsilon}_{b}\left(\theta\right)}\right)
\end{eqnarray}
where $\tilde{\theta}_{1}=\bar{\theta}_{1}-i\frac{\pi}{h}$, $\tilde{\theta}_{2}=\bar{\theta}_{2}+i\frac{\pi}{h}$.
These equations are regular and can be iterated in a stable way, however,
the available precision using double precision numbers drops to the
order of $10^{-10}$. Fortunately, that is still more than sufficient
for our purposes. 

To calculate the densities and $\left\langle \Theta\right\rangle $
under $r_{c}$ we need to desingularize $f_{s},f_{c},f_{i}$ in (\ref{eq:fsfcfi})
as well. It's easy to see, that the equation for $f_{s}$ and $f_{c}$
is regular under $r_{c}$ , since $\frac{1}{1+e^{\varepsilon_{\alpha}}}$
at the singularity position is regular, because $Y_{\alpha}=e^{\varepsilon_{\alpha}}=0$.
For $f_{i}$ the source term $\varphi$ is singular at $\tilde{\theta}_{1,2}$,
hence it is necessary to desingularize it:
\begin{eqnarray}
\hat{f}_{i,\alpha}\left(\theta\right) & = & f_{i,\alpha}\left(\theta\right)+\varphi_{h}\left(\theta-\bar{\theta}_{i}\right)\nonumber \\
\hat{f}_{i,\alpha}\left(\theta\right) & = & \sum_{\beta}\int\frac{\mathrm{d}\theta'}{2\pi}\varphi_{\alpha\beta}\left(\theta-\theta'\right)\frac{e^{\varepsilon_{\beta}\left(\theta'\right)}\varphi_{h}\left(\theta-\bar{\theta}_{i}\right)+\hat{f}_{i,\beta}\left(\theta'\right)}{1+e^{\varepsilon_{\beta}\left(\theta'\right)}}
\end{eqnarray}
For numerical calculations the form
\begin{eqnarray}
\hat{f}_{1,\alpha}\left(\theta\right) & = & \sum_{\beta}\int\frac{\mathrm{d}\theta'}{2\pi}\varphi_{\alpha\beta}\left(\theta-\theta'\right)\frac{\partial\sigma\left(\theta'-\tilde{\theta}_{1}\right)\sigma\left(\theta'-\tilde{\theta}_{2}\right)+e^{-\hat{\varepsilon}_{b}\left(\theta'\right)}\hat{f}_{1,\beta}\left(\theta'\right)}{\sigma\left(\theta'-\tilde{\theta}_{1}\right)\sigma\left(\theta'-\tilde{\theta}_{2}\right)+e^{-\hat{\varepsilon}_{b}\left(\theta'\right)}}\nonumber \\
\hat{f}_{2,\alpha}\left(\theta\right) & = & \sum_{\beta}\int\frac{\mathrm{d}\theta'}{2\pi}\varphi_{\alpha\beta}\left(\theta-\theta'\right)\frac{\sigma\left(\theta'-\tilde{\theta}_{1}\right)\partial\sigma\left(\theta'-\tilde{\theta}_{2}\right)+e^{-\hat{\varepsilon}_{b}\left(\theta'\right)}\hat{f}_{1,\beta}\left(\theta'\right)}{\sigma\left(\theta'-\tilde{\theta}_{1}\right)\sigma\left(\theta'-\tilde{\theta}_{2}\right)+e^{-\hat{\varepsilon}_{b}\left(\theta'\right)}}
\end{eqnarray}
is more convenient.

\subsection{Densities and the conjecture for states with a single type-1 particle }

The derivatives of the quantization condition can be written like
(\ref{eq:Density_general}) with the help of the definitions in (\ref{eq:fsfcfi}),
(\ref{eq:defN_I_N_phi}) for the single type-1 state 
\begin{eqnarray}
\frac{\partial\left(Q_{1},Q_{2}\right)}{\partial\left(\bar{\theta}_{1},\bar{\theta}_{2}\right)} & = & \left(\begin{array}{cc}
L\mathcal{N}_{1}+\mathcal{N}_{\varphi} & -\mathcal{N}_{\varphi}\\
-\mathcal{N}_{\varphi} & L\mathcal{N}_{2}+\mathcal{N}_{\varphi}
\end{array}\right)
\end{eqnarray}
where
\begin{eqnarray}
\mathcal{N}_{1} & = & -if_{s,2}\left(\bar{\theta}_{1}\right)=-im_{2}\sinh\left(\bar{\theta}_{1}\right)-i\sum_{\beta}\int\frac{\mathrm{d}\theta'}{2\pi}\varphi_{2\beta}\left(\bar{\theta}_{1}-\theta'\right)\frac{f_{s,\beta}\left(\theta'\right)}{1+e^{\varepsilon_{\beta}\left(\theta'\right)}}\nonumber \\
\mathcal{N}_{2} & = & if_{s,2}\left(\bar{\theta}_{2}\right)=im_{2}\sinh\left(\bar{\theta}_{2}\right)+i\sum_{\beta}\int\frac{\mathrm{d}\theta'}{2\pi}\varphi_{2\beta}\left(\bar{\theta}_{2}-\theta'\right)\frac{f_{s,\beta}\left(\theta'\right)}{1+e^{\varepsilon_{\beta}\left(\theta'\right)}}\nonumber \\
\mathcal{N}_{\varphi} & = & -f_{2,2}\left(\bar{\theta}_{1}\right)=-\varphi_{22}\left(\bar{\theta}_{1}-\bar{\theta}_{2}\right)-\sum_{\beta}\int\frac{\mathrm{d}\theta'}{2\pi}\varphi_{2\beta}\left(\bar{\theta}_{1}-\theta'\right)\frac{f_{2,\beta}\left(\theta'\right)}{1+e^{\varepsilon_{\beta}\left(\theta'\right)}}\nonumber \\
 & = & -f_{1,2}\left(\bar{\theta}_{2}\right)=-\varphi_{22}\left(\bar{\theta}_{2}-\bar{\theta}_{1}\right)-\sum_{\beta}\int\frac{\mathrm{d}\theta'}{2\pi}\varphi_{2\beta}\left(\bar{\theta}_{2}-\theta'\right)\frac{f_{1,\beta}\left(\theta'\right)}{1+e^{\varepsilon_{\beta}\left(\theta'\right)}}
\end{eqnarray}
For the case $s=0$ and $m_{1}L<r_{c}$ 
\begin{eqnarray}
\mathcal{N}_{\varphi} & = & -\hat{f}_{2,2}\left(\bar{\theta}_{1}\right)+\varphi_{h}\left(\bar{\theta}_{1}-\bar{\theta}_{2}\right)=+\varphi_{h}\left(\bar{\theta}_{1}-\bar{\theta}_{2}\right)\nonumber \\
 &  & -\sum_{\beta}\int\frac{\mathrm{d}\theta'}{2\pi}\varphi_{\alpha\beta}\left(\bar{\theta}_{1}-\theta'\right)\frac{\sigma\left(\theta'-\tilde{\theta}_{1}\right)\partial\sigma\left(\theta'-\tilde{\theta}_{2}\right)+e^{-\hat{\varepsilon}_{b}\left(\theta'\right)}\hat{f}_{1,\beta}\left(\theta'\right)}{\sigma\left(\theta'-\tilde{\theta}_{1}\right)\sigma\left(\theta'-\tilde{\theta}_{2}\right)+e^{-\hat{\varepsilon}_{b}\left(\theta'\right)}}\nonumber \\
 & = & -\hat{f}_{1,2}\left(\bar{\theta}_{2}\right)+\varphi_{h}\left(\bar{\theta}_{2}-\bar{\theta}_{1}\right)=+\varphi_{h}\left(\bar{\theta}_{2}-\bar{\theta}_{1}\right)\nonumber \\
 &  & -\sum_{\beta}\int\frac{\mathrm{d}\theta'}{2\pi}\varphi_{\alpha\beta}\left(\bar{\theta}_{2}-\theta'\right)\frac{\partial\sigma\left(\theta'-\tilde{\theta}_{1}\right)\sigma\left(\theta'-\tilde{\theta}_{2}\right)+e^{-\hat{\varepsilon}_{b}\left(\theta'\right)}\hat{f}_{1,\beta}\left(\theta'\right)}{\sigma\left(\theta'-\tilde{\theta}_{1}\right)\sigma\left(\theta'-\tilde{\theta}_{2}\right)+e^{-\hat{\varepsilon}_{b}\left(\theta'\right)}}
\end{eqnarray}
With the above densities the conjecture for the form factor series
(\ref{eq:LMExcited_general}) for this state takes the form 

\begin{eqnarray}
\left\langle \bar{\theta}_{1},\bar{\theta}_{2}\right|\mathcal{O}\left|\bar{\theta}_{1},\bar{\theta}_{2}\right\rangle _{L} & = & \sum_{n_{1},n_{2}=0}^{\infty}\frac{1}{n_{1}!n_{2}!}\int_{-\infty}^{\infty}\prod_{j=1}^{\tilde{N}}\frac{\mathrm{d}\theta_{j}}{2\pi\left[1+e^{\varepsilon_{\alpha_{j}}\left(\theta_{j}\right)}\right]}\Biggl[\nonumber \\
 &  & F_{2n_{1},2n_{2},c}^{\mathcal{O}}\left(\theta_{1},\dots,\theta_{\tilde{N}}\right)+\frac{1}{L^{2}\mathcal{N}_{1}\mathcal{N}_{2}+\mathcal{N}_{\varphi}L\left(\mathcal{N}_{1}+\mathcal{N}_{2}\right)}\Biggl\{\nonumber \\
 &  & +\left(L\mathcal{N}_{1}+\mathcal{N}_{\varphi}\right)F_{2n_{1},2n_{2}+2,c}^{\mathcal{O}}\left(\bar{\theta}_{1},\theta_{1},\dots,\theta_{\tilde{N}}\right)\nonumber \\
 &  & +\left(L\mathcal{N}_{2}+\mathcal{N}_{\varphi}\right)F_{2n_{1},2n_{2}+2,c}^{\mathcal{O}}\left(\bar{\theta}_{2},\theta_{1},\dots,\theta_{\tilde{N}}\right)\nonumber \\
 &  & +F_{2n_{1},2n_{2}+4,c}^{\mathcal{O}}\left(\bar{\theta}_{1},\bar{\theta}_{2},\theta_{1},\dots,\theta_{\tilde{N}}\right)\Biggr\}\Biggr]\label{eq:LMEx_T2}
\end{eqnarray}

\section{Numerical results \label{sec:Numerical-results}}

The last ingredient needed to compute the form factor series (\ref{eq:LMExcited_general})
is the numerical evaluation of the connected diagonal form factors
of the $T_{2}$ model. This is rather nontrivial to perform in a sufficiently
quick and numerically stable way for large number of particles. Since
this is a task that is also important for evaluating the spectral
series for correlation functions at finite temperature \cite{Pozsgay2008b,Pozsgay2010a,Szecsenyi2012,Buccheri:2013gta},
we describe the required tricks in Appendix \ref{sec:form_factors}.
The procedure can be straightforwardly generalized to connected diagonal
form factors in other integrable models. 

In large volume a rough estimate for the magnitude of the terms in
the series comes from the behaviour of the filling fractions 
\[
\prod_{i}\frac{1}{1+e^{\varepsilon_{\alpha_{j}}\left(\theta_{j}\right)}}\lesssim e^{-\left(\sum_{i}m_{i}\right)L}
\]
where $m_{i}$ are the masses of the particles contained in the given
state. Using this estimate we can identify the terms of the series
that give the dominant contribution to the expectation value. However,
with decreasing volume the ordering of the terms can change depending
on the behaviour of the pseudo-energy functions; in addition, to maintain
accuracy it is necessary to add progressively more terms. As a result,
the form factor series (\ref{eq:LMExcited_general}) is effectively
an infrared (low energy/large volume) expansion for the expectation
value. 

For our numerical calculations we implemented the terms with less
than 4 integrals, since for higher terms the number of integrals and
the size of the form factors makes the numerical integration too time-consuming;
in addition, the terms incorporated already show an excellent agreement
with the conjecture. Table \ref{tab:states_in_calculation} shows
the terms calculated for numerics.

\begin{table}
\begin{centering}
$\begin{array}[t]{|c||c|c|c|c|c|c|c|c|}
\hline \mbox{label} & 0 & 1 & 2 & 11 & 12 & 111 & 22 & 112\\
\hline\hline \mbox{\#\,\ of\,\ type-1\,\ integrals} & 0 & 1 & 0 & 2 & 1 & 3 & 0 & 2\\
\hline \mbox{\#\,\ of\,\ type-2\,\ integrals} & 0 & 0 & 1 & 0 & 1 & 0 & 2 & 1\\
\hline \mbox{magnitude/L} & 0 & m_{1} & m_{2} & 2m_{1} & m_{1}+m_{2} & 3m_{1} & 2m_{2} & 2m_{1}+m_{2}
\\\hline \end{array}$ \protect\caption{\label{tab:states_in_calculation}The terms incorporated from the
form factor series into the numerical calculation.}

\par\end{centering}

\end{table}

In the $T_{2}$ model there are two primary operators, namely $\Phi_{1,3}$
and $\Phi_{1,2}$. $\Phi_{1,3}$ is the operator perturbing the UV
limit of the model, hence it's proportional to $\Theta$ 
\begin{eqnarray}
\Theta & = & 2\pi\lambda\left(2h_{1,3}-2\right)\Phi_{1,3}
\end{eqnarray}
where $h_{1,3}$ is the conformal weight of the $\Phi_{1,3}$ and
$\lambda$ is the coupling constant of the perturbation that is proportional
to the mass gap \cite{Fateev:1993av}
\begin{eqnarray}
\lambda & = & \kappa m_{1}^{2-2h_{1,3}}
\end{eqnarray}
with 
\begin{eqnarray}
\kappa & = & \text{\textminus}0.04053795542...
\end{eqnarray}
Because of this, the form factor series for $\Phi_{1,3}$ is equivalent
to the TBA equation as proved in Section \ref{sec:Theta-TBA-LMEx},
and the numerical calculation for $\Phi_{1,3}$ is therefore not a
real further test for the general validity of the form factor series
(\ref{eq:LMExcited_general}). However it is still useful since with
its expectation value known from TBA equations one can get an independent
check of the numerical precision of TCSA, and the convergence of the
form factor series. For $\Phi_{1,2}$ there is only TCSA and the form
factor series, with the numerical deviation for $\Phi_{1,3}$ setting
the expected precision for the agreement between them. 

For the numerical integration we used the Cuba library routines \cite{2005CoPhC.168...78H},
called from inside Wolfram Mathematica \cite{Mathematica}.

\subsection{Moving one-particle state, $s=1$ }

For the moving type-1 excited state with momentum quantum number $s=1$,
Figure \ref{fig:Phi13_spin1_vev} shows the expectation value $\!_{1}\left\langle \left\{ 1\right\} \right|\Phi_{1,3}\left|\left\{ 1\right\} \right\rangle _{1}$
calculated with RG-extrapolated TCSA, from TBA together with the results
from the form factor series (\ref{eq:LMEx_T2}) obtained by adding
progressively more terms. The precision of the TBA is of the order
$10^{-12}$ and comparing it with the TCSA data, we find that the
precision of the RG-extrapolated TCSA is of order $10^{-6}-10^{-7}$.
Table \ref{tab:Phi13_spin1_diffTCSA} shows the difference between
the form factor series with different terms involved and the TCSA
data. For volume $m_{1}L>5$ the difference between the form factor
series up to and including the $112$ term, and the TCSA is in the
order of the TCSA error, and including more terms make the agreement
better for smaller volume as well. 

\begin{figure}
\centering{}\includegraphics{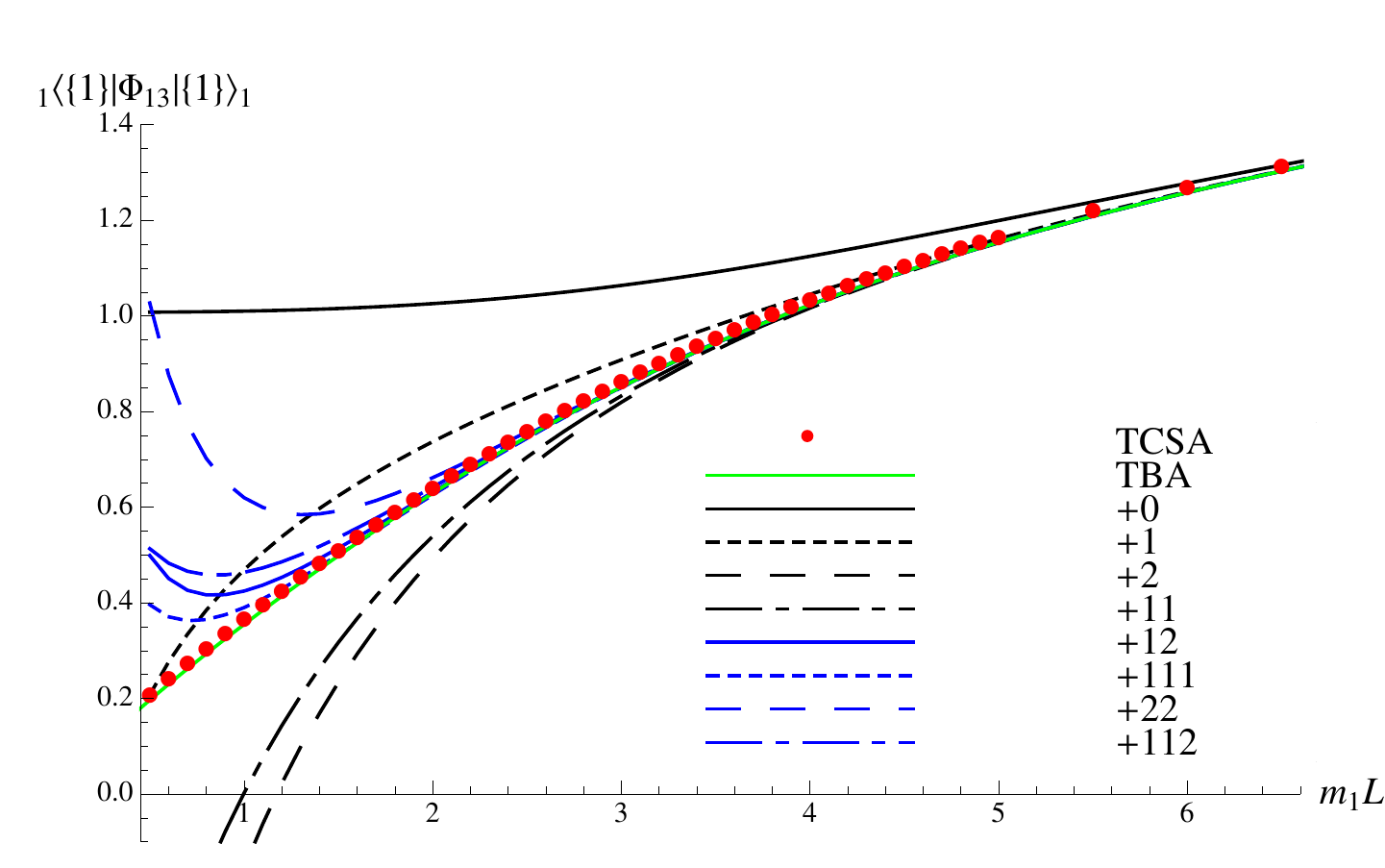}\protect\caption{\label{fig:Phi13_spin1_vev} $\!_{1}\!\!\left\langle \left\{ 1\right\} \right|\Phi_{1,3}\left|\left\{ 1\right\} \right\rangle _{1}$
evaluated by the form factor series including different contributions,
RG extrapolated TCSA and TBA. Here and in all subsequent plots matrix
elements are given in units of $m_{1}$.}
\end{figure}

For $\Phi_{1,2}$ the results for the quantity $i\!_{1}\!\!\left\langle \left\{ 1\right\} \right|\Phi_{1,2}\left|\left\{ 1\right\} \right\rangle _{1}$
are shown in Figure \ref{fig:Phi12_spin1_vev}, and the difference
between the form factor series and the TCSA is given in Table \ref{tab:Phi12_spin1_diffTCSA}.
We note that since from (\ref{eq:12_13_exactvevs}) it follows that
the matrix elements of $\Phi_{1,2}$ are imaginary, here and in all
subsequent figures and tables concerning $\Phi_{1,2}$ we multiply
all data by $i$. The form factor series shows excellent agreement
with the TCSA for volume $m_{1}L>5$, and again including more terms
the agreement is better for smaller volumes. As noted before, the
correctness of the form factor series does not follow from TBA, hence
this is a nontrivial verification of the form factor series.

\begin{figure}
\centering{}\includegraphics{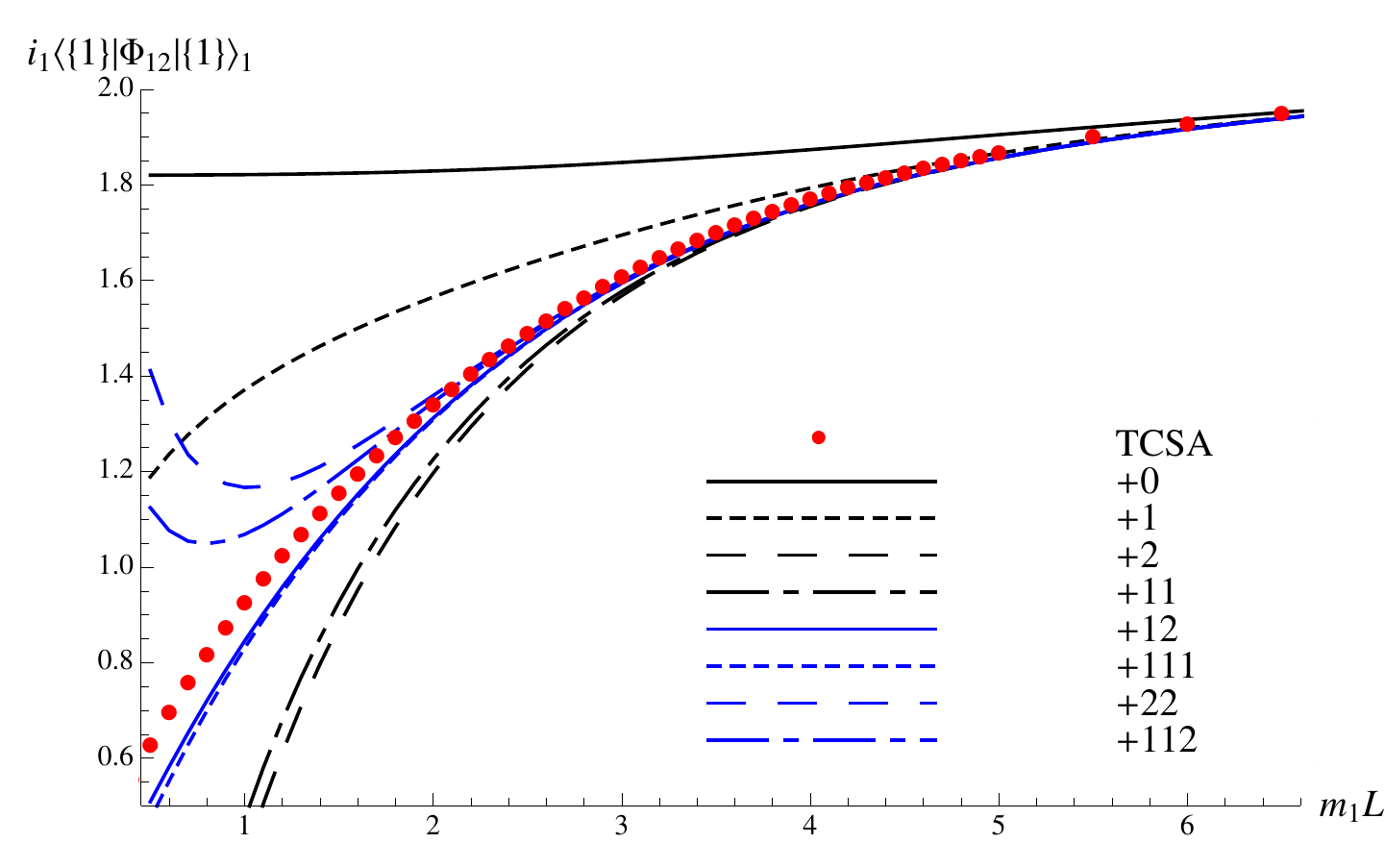}\protect\caption{\label{fig:Phi12_spin1_vev} $i\,_{1}\!\!\left\langle \left\{ 1\right\} \right|\Phi_{1,2}\left|\left\{ 1\right\} \right\rangle _{1}$
evaluated by the form factor series including different contributions
and RG-extrapolated TCSA. }
\end{figure}
\begin{table}
\begin{centering}
$\begin{array}{|c||c|c|c|c|c|c|c|c|}
\hline m_{1}L & 0 & +1 & +2 & +11 & +12 & +111 & +22 & +112\\
\hline\hline 0.5 & -0.8115 & -1\cdot10^{-2} & 1.039 & 0.7047 & -0.3024 & -0.1995 & -0.8307 & -0.3166\\
\hline 1 & -0.6554 & -0.1143 & 0.5094 & 0.3528 & -7\cdot10^{-2} & -4\cdot10^{-2} & -0.2652 & -0.1089\\
\hline 1.5 & -0.5173 & -0.1256 & 0.2649 & 0.1813 & -2\cdot10^{-2} & -2\cdot10^{-3} & -9\cdot10^{-2} & -4\cdot10^{-2}\\
\hline 2 & -0.3966 & -0.1079 & 0.1362 & 9\cdot10^{-2} & -2\cdot10^{-3} & 4\cdot10^{-3} & -3\cdot10^{-2} & -1\cdot10^{-2}\\
\hline 2.5 & -0.2947 & -8\cdot10^{-2} & 7\cdot10^{-2} & 4\cdot10^{-2} & 5\cdot10^{-4} & 3\cdot10^{-3} & -1\cdot10^{-2} & -4\cdot10^{-3}\\
\hline 3 & -0.2124 & -6\cdot10^{-2} & 3\cdot10^{-2} & 2\cdot10^{-2} & 6\cdot10^{-4} & 2\cdot10^{-3} & -3\cdot10^{-3} & -1\cdot10^{-3}\\
\hline 3.5 & -0.1489 & -4\cdot10^{-2} & 1\cdot10^{-2} & 8\cdot10^{-3} & 3\cdot10^{-4} & 6\cdot10^{-4} & -9\cdot10^{-4} & -3\cdot10^{-4}\\
\hline 4 & -0.102 & -2\cdot10^{-2} & 6\cdot10^{-3} & 3\cdot10^{-3} & 1\cdot10^{-4} & 2\cdot10^{-4} & -2\cdot10^{-4} & -5\cdot10^{-5}\\
\hline 4.5 & -7\cdot10^{-2} & -1\cdot10^{-2} & 3\cdot10^{-3} & 1\cdot10^{-3} & 4\cdot10^{-5} & 8\cdot10^{-5} & -5\cdot10^{-5} & -1\cdot10^{-5}\\
\hline 5 & -5\cdot10^{-2} & -7\cdot10^{-3} & 1\cdot10^{-3} & 4\cdot10^{-4} & 1\cdot10^{-5} & 2\cdot10^{-5} & -1\cdot10^{-5} & -2\cdot10^{-6}\\
\hline 5.5 & -3\cdot10^{-2} & -4\cdot10^{-3} & 5\cdot10^{-4} & 1\cdot10^{-4} & 4\cdot10^{-6} & 7\cdot10^{-6} & -3\cdot10^{-6} & -4\cdot10^{-7}\\
\hline 6 & -2\cdot10^{-2} & -2\cdot10^{-3} & 2\cdot10^{-4} & 4\cdot10^{-5} & 8\cdot10^{-7} & 2\cdot10^{-6} & -7\cdot10^{-7} & -2\cdot10^{-7}\\
\hline 6.5 & -1\cdot10^{-2} & -1\cdot10^{-3} & 7\cdot10^{-5} & 1\cdot10^{-5} & 4\cdot10^{-7} & 6\cdot10^{-7} & 3\cdot10^{-8} & 1\cdot10^{-7}\\
\hline 7 & -8\cdot10^{-3} & -5\cdot10^{-4} & 3\cdot10^{-5} & 5\cdot10^{-6} & 4\cdot10^{-7} & 4\cdot10^{-7} & 3\cdot10^{-7} & 3\cdot10^{-7}
\\\hline \end{array}$
\par\end{centering}

\protect\caption{\label{tab:Phi13_spin1_diffTCSA}The difference between the evaluations
of $\!_{1}\!\!\left\langle \left\{ 1\right\} \right|\Phi_{1,3}\left|\left\{ 1\right\} \right\rangle _{1}$
from the RG-extrapolated TCSA and the form factor series, depending
on the multi-particle contributions included in the latter.}
\end{table}
\begin{table}
\begin{centering}
$\begin{array}{|c||c|c|c|c|c|c|c|c|}
\hline m_{1}L & 0 & +1 & +2 & +11 & +12 & +111 & +22 & +112\\
\hline\hline 0.5 & -1.203 & -0.5712 & 1.113 & 0.9523 & 0.1086 & 0.1457 & -0.7935 & -0.5066\\
\hline 1 & -0.9067 & -0.4563 & 0.5275 & 0.4431 & 7\cdot10^{-2} & 8\cdot10^{-2} & -0.252 & -0.1539\\
\hline 1.5 & -0.6803 & -0.3379 & 0.2661 & 0.2171 & 4\cdot10^{-2} & 4\cdot10^{-2} & -9\cdot10^{-2} & -5\cdot10^{-2}\\
\hline 2 & -0.5 & -0.2361 & 0.1325 & 0.1038 & 2\cdot10^{-2} & 2\cdot10^{-2} & -3\cdot10^{-2} & -2\cdot10^{-2}\\
\hline 2.5 & -0.3581 & -0.1561 & 6\cdot10^{-2} & 5\cdot10^{-2} & 7\cdot10^{-3} & 9\cdot10^{-3} & -9\cdot10^{-3} & -5\cdot10^{-3}\\
\hline 3 & -0.2498 & -1\cdot10^{-1} & 3\cdot10^{-2} & 2\cdot10^{-2} & 3\cdot10^{-3} & 3\cdot10^{-3} & -3\cdot10^{-3} & -1\cdot10^{-3}\\
\hline 3.5 & -0.1701 & -6\cdot10^{-2} & 1\cdot10^{-2} & 8\cdot10^{-3} & 1\cdot10^{-3} & 1\cdot10^{-3} & -8\cdot10^{-4} & -3\cdot10^{-4}\\
\hline 4 & -0.1136 & -3\cdot10^{-2} & 5\cdot10^{-3} & 3\cdot10^{-3} & 3\cdot10^{-4} & 4\cdot10^{-4} & -2\cdot10^{-4} & -6\cdot10^{-5}\\
\hline 4.5 & -7\cdot10^{-2} & -2\cdot10^{-2} & 2\cdot10^{-3} & 1\cdot10^{-3} & 9\cdot10^{-5} & 1\cdot10^{-4} & -5\cdot10^{-5} & -1\cdot10^{-5}\\
\hline 5 & -5\cdot10^{-2} & -1\cdot10^{-2} & 9\cdot10^{-4} & 4\cdot10^{-4} & 3\cdot10^{-5} & 3\cdot10^{-5} & -1\cdot10^{-5} & -3\cdot10^{-6}\\
\hline 5.5 & -3\cdot10^{-2} & -5\cdot10^{-3} & 4\cdot10^{-4} & 1\cdot10^{-4} & 6\cdot10^{-6} & 8\cdot10^{-6} & -3\cdot10^{-6} & -2\cdot10^{-6}\\
\hline 6 & -2\cdot10^{-2} & -3\cdot10^{-3} & 1\cdot10^{-4} & 4\cdot10^{-5} & 2\cdot10^{-7} & 8\cdot10^{-7} & -2\cdot10^{-6} & -2\cdot10^{-6}\\
\hline 6.5 & -1\cdot10^{-2} & -1\cdot10^{-3} & 5\cdot10^{-5} & 1\cdot10^{-5} & -1\cdot10^{-6} & -1\cdot10^{-6} & -2\cdot10^{-6} & -2\cdot10^{-6}\\
\hline 7 & -8\cdot10^{-3} & -6\cdot10^{-4} & 2\cdot10^{-5} & 2\cdot10^{-6} & -2\cdot10^{-6} & -2\cdot10^{-6} & -2\cdot10^{-6} & -2\cdot10^{-6}
\\\hline \end{array}$
\par\end{centering}

\centering{}\protect\caption{\label{tab:Phi12_spin1_diffTCSA}The difference between the evaluations
of $i\,_{1}\!\!\left\langle \left\{ 1\right\} \right|\Phi_{1,2}\left|\left\{ 1\right\} \right\rangle _{1}$
from the RG-extrapolated TCSA and the form factor series, depending
on the multi-particle contributions included in the latter.}
\end{table}

\subsection{Zero-momentum one-particle state, $s=0$ }

As seen for the $s=1$ case the form factor series reproduce the expectation
value of local operators with very good precision in large volume,
even by including only few terms from the series. The expectation
is that for any state in small volume it is necessary to include higher
contributions of the series, but for any desired accuracy a finite
number of them is sufficient. 

As shown below, this expectation is challenged by the nontrivial transition
in the TBA equation for standing state at $r_{c}$. Figure \ref{fig:Phi13_spin0_vev}
and \ref{fig:Phi12_spin0_vev} shows the result for the expectation
values of $\Phi_{1,3}$ and $\Phi_{1,2}$, while Table \ref{tab:Phi13_spin0_diffTCSA}
and \ref{tab:Phi12_spin0_diffTCSA} list the numerical deviations
from RG-extrapolated TCSA. 

For large volume ($m_{1}L\gtrsim6$) the agreement between the form
factor series and the TCSA is again excellent. However, towards the
critical volume $\left(r_{c}\sim2.66\right)$ the terms of the form
factor series tend to diverge. This can be understood from the fact
that the total density of the states 
\begin{equation}
\rho_{tot}=\det\mathcal{K}=L^{2}\mathcal{N}_{1}\mathcal{N}_{2}+\mathcal{N}_{\varphi}L\left(\mathcal{N}_{1}+\mathcal{N}_{2}\right)
\end{equation}
which the denominator of the form factor series (\ref{eq:LMEx_T2}),
is zero at $r_{c}$. Figure \ref{fig:rho_tot_spin0} shows the behaviour
of $\rho_{tot}$ around $r_{c}$; fitting the location where $\rho_{tot}\left(\tilde{r}_{c}\right)=0$,
the result is 
\begin{equation}
\tilde{r}_{c}=2.6646510318
\end{equation}
which is perfect agreement with the value of $r_{c}$ obtained in
\cite{Dorey:1997rb}. 

The reason for $\rho_{tot}$ vanishing at $r_{c}$ can be understood
from the excited TBA. Since the active singularities coincide at this
point, the density which is the Jacobi determinant of the quantization
condition for the active singularities, is zero due to the degeneracy.
Such singularities of the density were observed previously for the
finite volume form factors formula in \cite{Szecsenyi:2013gna}; however
in that case including the exponential corrections resolved (or at
least shifted) the singularity. The present situation is different
as all exponential corrections to the density are already included.
To resolve the singularity it would be necessary to include every
term of the form factor series, to compensate the zero of the denominator. 

This conclusion is also supported by the behaviour of the pseudo-energies
and the filling fractions around $r_{c}$. Approaching $r_{c}$ the
filling fractions no more suppress the higher order terms in the series
and the ordering of terms by their magnitude is not valid anymore,
i.e. every terms is important in the series. This is consistent with
the procedure of desingularization, whereby to describe the excited
state level with the TBA equation under $r_{c}$ it was necessary
to redefine the pseudo-energy to have a form which is finite and convergent
under iterations. 

For the form factor series a similar rearrangement is necessary close
to and under $r_{c}$. Unfortunately such a rearrangement is not yet
known, and this sets the practical validity of the form factor series
to IR regions where no nontrivial transitions occur in the TBA equations. 

\begin{figure}
\centering{}\includegraphics{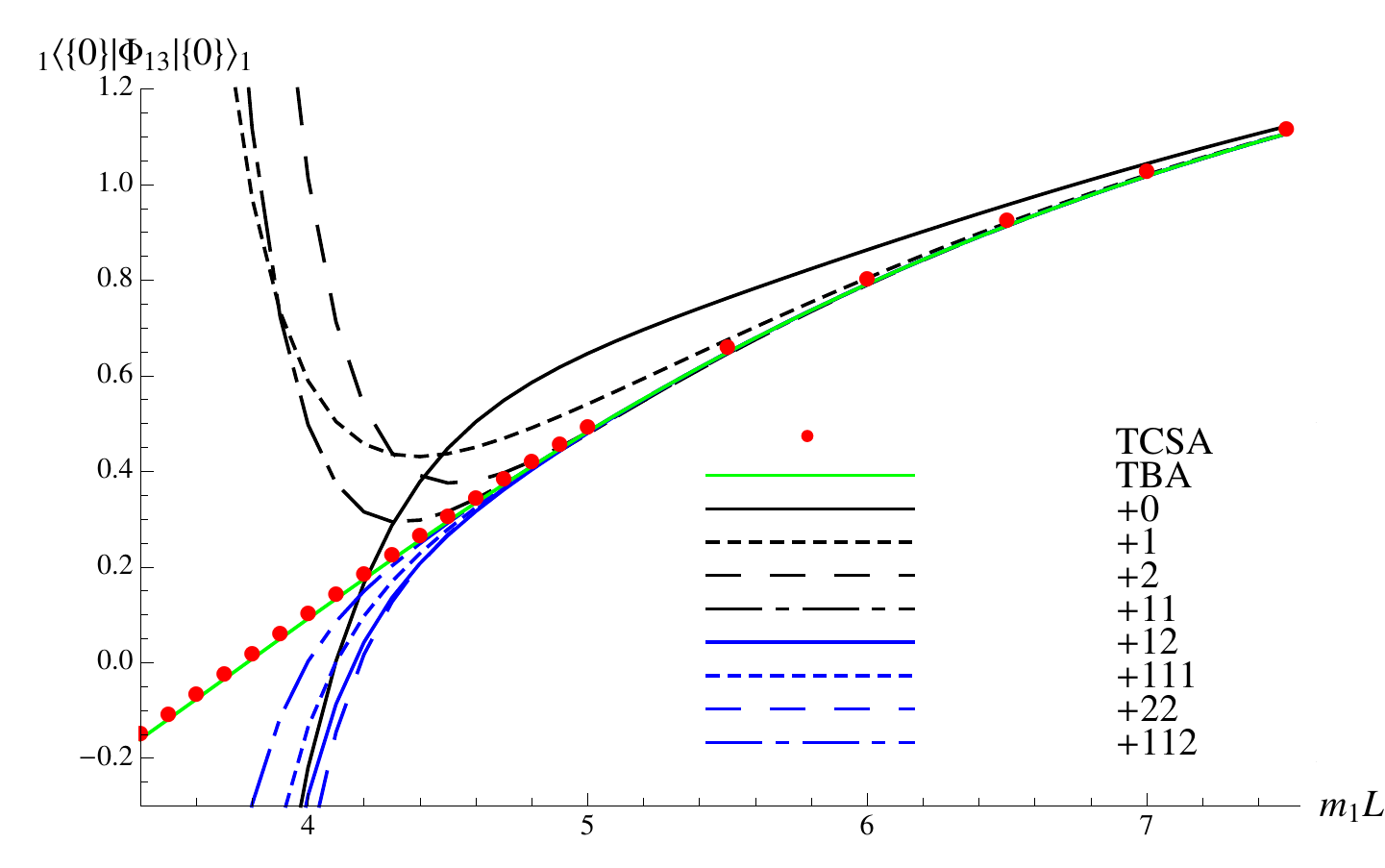}\protect\caption{\label{fig:Phi13_spin0_vev}$\!_{1}\!\!\left\langle \left\{ 0\right\} \right|\Phi_{1,3}\left|\left\{ 0\right\} \right\rangle _{1}$
calculated by the form factor series including different contributions,
RG extrapolated TCSA and TBA.}
\end{figure}
\begin{figure}
\centering{}\includegraphics{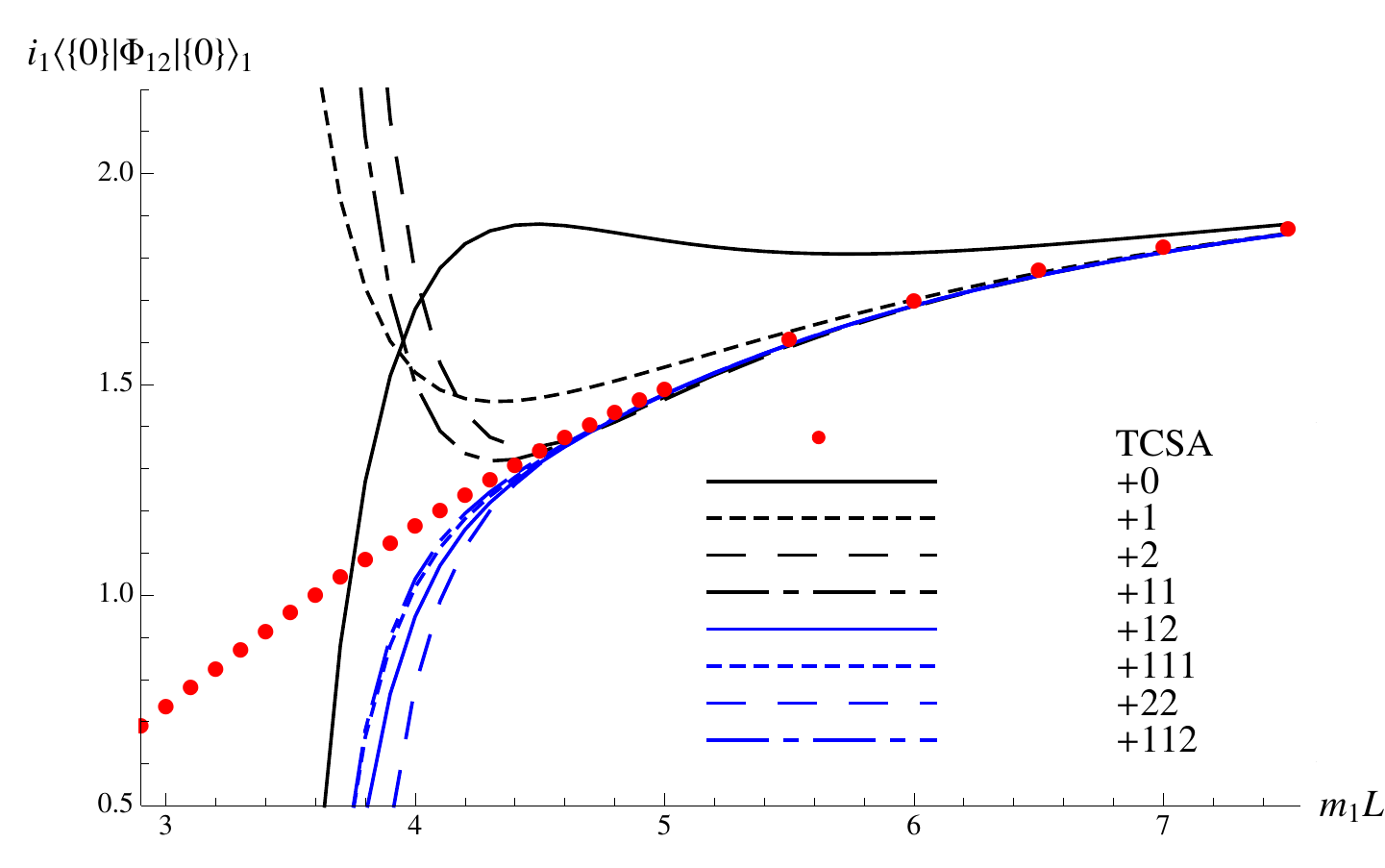}\protect\caption{\label{fig:Phi12_spin0_vev} $i\,_{1}\!\!\left\langle \left\{ 0\right\} \right|\Phi_{1,2}\left|\left\{ 0\right\} \right\rangle _{1}$
calculated by the form factor series including different contributions
and RG extrapolated TCSA.}
\end{figure}
\begin{table}
\begin{centering}
$\begin{array}{|c||c|c|c|c|c|c|c|c|}
\hline m_{1}L & 0 & +1 & +2 & +11 & +12 & +111 & +22 & +112\\
\hline\hline 4 & 0.3114 & -0.4977 & -0.9218 & -0.4061 & 0.3695 & 0.2265 & 0.494 & 9\cdot10^{-2}\\
\hline 4.5 & -0.1529 & -0.1416 & -8\cdot10^{-2} & -2\cdot10^{-2} & 3\cdot10^{-2} & 2\cdot10^{-2} & 3\cdot10^{-2} & 3\cdot10^{-3}\\
\hline 5 & -0.1637 & -6\cdot10^{-2} & 9\cdot10^{-4} & 5\cdot10^{-3} & 3\cdot10^{-3} & 2\cdot10^{-3} & 1\cdot10^{-3} & -3\cdot10^{-5}\\
\hline 5.5 & -0.114 & -3\cdot10^{-2} & 4\cdot10^{-3} & 2\cdot10^{-3} & 3\cdot10^{-4} & 3\cdot10^{-4} & -4\cdot10^{-5} & -3\cdot10^{-5}\\
\hline 6 & -7\cdot10^{-2} & -1\cdot10^{-2} & 2\cdot10^{-3} & 8\cdot10^{-4} & 5\cdot10^{-5} & 5\cdot10^{-5} & -2\cdot10^{-5} & -5\cdot10^{-6}\\
\hline 6.5 & -4\cdot10^{-2} & -6\cdot10^{-3} & 7\cdot10^{-4} & 2\cdot10^{-4} & 7\cdot10^{-6} & 1\cdot10^{-5} & -5\cdot10^{-6} & -9\cdot10^{-7}\\
\hline 7 & -3\cdot10^{-2} & -3\cdot10^{-3} & 2\cdot10^{-4} & 6\cdot10^{-5} & 1\cdot10^{-6} & 2\cdot10^{-6} & -1\cdot10^{-6} & -4\cdot10^{-7}\\
\hline 7.5 & -2\cdot10^{-2} & -1\cdot10^{-3} & 8\cdot10^{-5} & 2\cdot10^{-5} & 1\cdot10^{-7} & 4\cdot10^{-7} & -2\cdot10^{-7} & -1\cdot10^{-7}
\\\hline \end{array}$
\par\end{centering}

\centering{}\protect\caption{\label{tab:Phi13_spin0_diffTCSA}The difference between the evaluatons
of $\!_{1}\!\!\left\langle \left\{ 0\right\} \right|\Phi_{1,3}\left|\left\{ 0\right\} \right\rangle _{1}$
from the RG-extrapolated TCSA and the form factor series, depending
on the multi-particle contributions included in the latter.}
\end{table}
\begin{table}
\begin{centering}
$\begin{array}{|c||c|c|c|c|c|c|c|c|}
\hline m_{1}L & 0 & +1 & +2 & +11 & +12 & +111 & +22 & +112\\
\hline\hline 4 & -0.5274 & -0.3774 & -0.6113 & -0.3511 & 0.2018 & 0.1321 & 0.3664 & 0.1129\\
\hline 4.5 & -0.5513 & -0.139 & -2\cdot10^{-2} & -9\cdot10^{-3} & 2\cdot10^{-2} & 1\cdot10^{-2} & 2\cdot10^{-2} & 4\cdot10^{-3}\\
\hline 5 & -0.3645 & -6\cdot10^{-2} & 1\cdot10^{-2} & 7\cdot10^{-3} & 1\cdot10^{-3} & 1\cdot10^{-3} & 3\cdot10^{-4} & 4\cdot10^{-6}\\
\hline 5.5 & -0.2176 & -3\cdot10^{-2} & 7\cdot10^{-3} & 3\cdot10^{-3} & 2\cdot10^{-4} & 2\cdot10^{-4} & -1\cdot10^{-4} & -3\cdot10^{-5}\\
\hline 6 & -0.1255 & -1\cdot10^{-2} & 3\cdot10^{-3} & 9\cdot10^{-4} & 3\cdot10^{-5} & 5\cdot10^{-5} & -3\cdot10^{-5} & -5\cdot10^{-6}\\
\hline 6.5 & -7\cdot10^{-2} & -7\cdot10^{-3} & 9\cdot10^{-4} & 2\cdot10^{-4} & 5\cdot10^{-6} & 1\cdot10^{-5} & -6\cdot10^{-6} & -2\cdot10^{-6}\\
\hline 7 & -4\cdot10^{-2} & -3\cdot10^{-3} & 3\cdot10^{-4} & 6\cdot10^{-5} & -2\cdot10^{-7} & 1\cdot10^{-6} & -2\cdot10^{-6} & -1\cdot10^{-6}\\
\hline 7.5 & -2\cdot10^{-2} & -1\cdot10^{-3} & 9\cdot10^{-5} & 1\cdot10^{-5} & -1\cdot10^{-6} & -1\cdot10^{-6} & -2\cdot10^{-6} & -2\cdot10^{-6}
\\\hline \end{array}$
\par\end{centering}

\centering{}\protect\caption{\label{tab:Phi12_spin0_diffTCSA}The difference between the evaluations
of $i\,_{1}\!\!\left\langle \left\{ 0\right\} \right|\Phi_{1,2}\left|\left\{ 0\right\} \right\rangle _{1}$
from the RG-extrapolated TCSA and the form factor series, depending
on the multi-particle contributions included in the latter.}
\end{table}

\begin{figure}
\centering{}\includegraphics{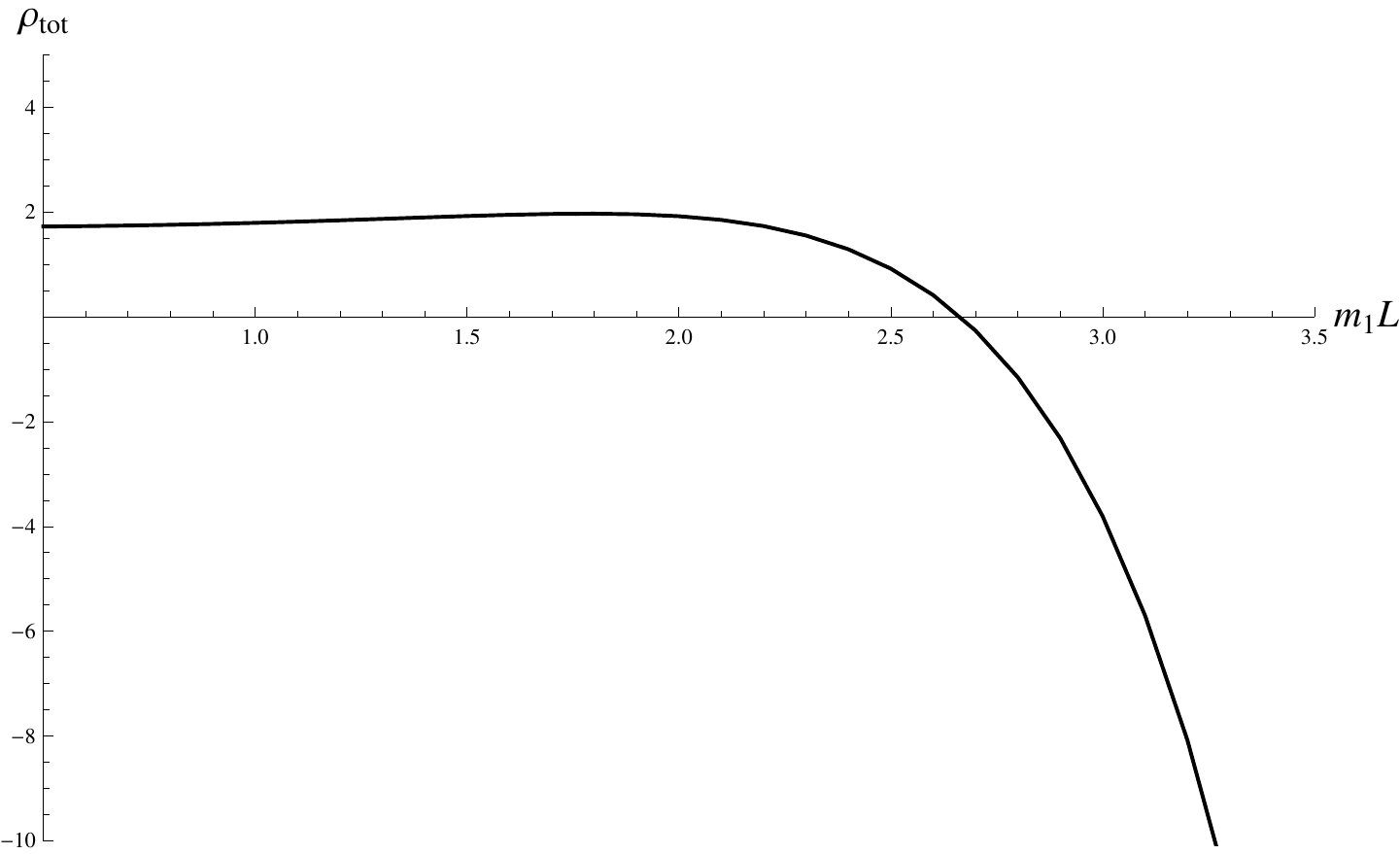}\protect\caption{\label{fig:rho_tot_spin0}$\rho_{tot}$ for s=0}
\end{figure}

\section{Conclusions and outlook \label{sec:Conclusions-and-outlook}}

In this paper we presented a conjecture for the finite volume excited
state expectation values of local operators in integrable quantum
field theories. This conjecture is an extension of an earlier one
\cite{Pozsgay2013} to models with a non-trivial bootstrap structure.
The conjecture was supported by a combination of analytic and numerical
evidence.

An important aspect of our result is that it gives the full specification
of the $\mu$-terms from the excited state TBA. In the previous approach
\cite{Pozsgay2008,Takacs:2011nb}, the determination of these terms
was ambiguous due to the fact that in many cases a particle could
be represented in several ways as bound state of others. The series
(\ref{eq:LMExcited_general}) specifies the $\mu$-term as the one
corresponding to the way the particle is composed in terms of the
singularities entering the excited state TBA equations, which provides
a solution for the case of integrable models where the excited state
TBA system is known. In addition, the series is exact, at least for
values of the volume where it is convergent. 

Unfortunately, for certain excited states this does not cover the
whole volume range, as the TBA singularity configuration undergoes
some rearrangement below some critical value of the volume. In the
series (\ref{eq:LMExcited_general}) this is manifested in a singular
behaviour of the terms, which prevent the extension of the series
below the critical volume. Note that for the trace of the stress-energy
tensor, the desingularized excited TBA gives such an extension. However,
it is presently unclear how to implement the desingularization procedure
directly for the series (\ref{eq:LMExcited_general}), so the description
cannot be extended to other operators. In order to achieve that, one
needs to separate the singularly behaved contributions and re-sum
them to all orders, which is not an obvious task. 

In this context we remark that based on the available studies of excited
state TBA systems, the problematic rearrangement is only expected
to happen for excited states for which some singularities are ``stuck
in the middle'', as opposed to going to the left/right asymptotic
regions logarithmically with decreasing volume. Therefore, for most
states the conjecture is expected to be valid for any value of the
volume; obviously the smaller is the volume, the larger is the number
of terms needed for a given precision. For states with a non-trivial
transition in their singularity structure, the conjecture in its present
form is expected to be valid for values of the volume above the critical
one.

Another interesting issue is to extend the series to theories with
non-diagonal scattering, starting with the series introduced in \cite{Buccheri:2013gta}.
The formalism of finite volume form factors has been partially extended
to these theories \cite{Feher:2011fn,Palmai2013}; unfortunately,
it is exactly the general form of diagonal matrix elements that is
at present not known in full generality. 

Furthermore, the form of the terms in the series (\ref{eq:LMExcited_general})
is very suggestive for an extension to non-diagonal finite volume
matrix elements; however, finding the precise form of such an extension
is still an open question. 

Finally, establishing the relation between the present framework,
and the approaches based on separation of variables \cite{Lukyanov:2000jp,2012JSMTE..10..006G}
or fermionic structures \cite{2011LMaPh..96..325J,2013NuPhB.875..166N,2014IJMPA..2950111N}
would be of interest, as it could lead to more efficient construction
of finite size matrix elements and a deeper understanding of the underlying
principles.

\subsection*{Acknowledgments}

IMSZ is grateful to Roberto Tateo for the help with excited state
TBA numerics and his hospitality in Turin, and also to the INFN and
Campus Hungary Scholarship for financial support of the visit. IMSZ
was also supported by funding from the People Programme (Marie Curie
Actions) of the European Union's Seventh Framework Programme FP7/2007-2013/
under REA Grant Agreement No 317089 (GATIS). BP and GT were supported
by the Momentum grant LP2012-50/2014 of the Hungarian Academy of Sciences.

\appendix

\section{Evaluation of connected diagonal form factors \label{sec:form_factors}}

Here we present a fast and efficient way to evaluate the connected
diagonal form factors (defined in Section \ref{sec:conjecture}),
which proceeds via the so-called symmetric form factors.

\subsection{Form factors of the $T_{2}$ model}

The form factors of primary fields in the $T_{2}$ model were constructed
in \cite{Koubek:1994zp}. For states containing only type-$1$ particles
they are the same as form factors of certain exponential operators
in the sinh-Gordon model with a specific value for the coupling constant.
The form factors of the exponential operator
\begin{equation}
e^{kg\Phi}
\end{equation}
 in the sinh-Gordon model 
\begin{equation}
\mathcal{L}=\frac{1}{2}\partial_{\mu}\Phi\partial^{\mu}\Phi-\lambda\cosh g\Phi\label{eq:sinhG_action}
\end{equation}
have the following form \cite{Koubek:1993ke} 
\begin{equation}
F_{n}^{(k)}\left(\theta_{1},\dots,\theta_{n}\right)=H_{n}\left[k\right]Q_{n}\left(x_{1},\dots,x_{n}\right)\prod_{i<j}\frac{F_{min}\left(\theta_{ij}\right)}{x_{i}+x_{j}}
\end{equation}
where
\begin{equation}
\left[n\right]=\frac{\sin n\pi\frac{B}{2}}{\sin\pi\frac{B}{2}}\qquad B=\frac{2g^{2}}{8\pi+g^{2}}
\end{equation}
and the minimal two-particle form factor is 
\begin{eqnarray}
F_{min}\left(\theta,B\right) & = & \mathcal{N}\exp\left[8\int_{0}^{\infty}\frac{\mathrm{d}x}{x}\frac{\sinh\frac{xB}{4}\sinh\frac{x}{2}\left(1-\frac{B}{2}\right)\sinh\frac{x}{2}}{\sinh^{2}x}\sin^{2}\left[\frac{x(i\pi-\theta)}{2\pi}\right]\right]\nonumber \\
 & = & \mathcal{N}\, e^{I(\theta,B)}\nonumber \\
\mathcal{N} & = & \exp\left[-4\int\frac{\mathrm{d}x}{x}\frac{\sinh\frac{xB}{4}\sinh\frac{x}{2}\left(1-\frac{B}{2}\right)\sinh\frac{x}{2}}{\sinh^{2}x}\right]
\end{eqnarray}
with

\begin{eqnarray}
I\left(\theta,B\right) & = & 8\int_{0}^{\infty}\frac{\mathrm{d}x}{x}\frac{\sinh\frac{xB}{4}\sinh\frac{x}{2}\left(1-\frac{B}{2}\right)\sinh\frac{x}{2}}{\sinh^{2}x}\sin^{2}\left[\frac{x(i\pi-\theta)}{2\pi}\right]=\nonumber \\
 & = & 8\int_{0}^{\infty}\frac{\mathrm{d}x}{x}\frac{\sinh\frac{xB}{4}\sinh\frac{x}{2}\left(1-\frac{B}{2}\right)\sinh\frac{x}{2}}{\sinh^{2}x}\left[N+1-Ne^{-2x}\right]e^{-2Nx}\sin^{2}\left[\frac{x(i\pi-\theta)}{2\pi}\right]+\nonumber \\
 & + & \sum_{k=0}^{N-1}(k+1)\left\{ \log\nu\left[\theta,k+\frac{1}{2}\right]+\log\nu\left[\theta,k+\frac{3}{2}-\frac{B}{4}\right]+\log\nu\left[\theta,k+1+\frac{B}{4}\right]-\right.\nonumber \\
 &  & \left.-\log\nu\left[\theta,k+\frac{3}{2}\right]-\log\nu\left[\theta,k+\frac{1}{2}+\frac{B}{4}\right]-\log\nu\left[\theta,k+1-\frac{B}{4}\right]\right\} 
\end{eqnarray}
and

\begin{equation}
\nu(\theta,a)=1+\frac{(i\pi-\theta)^{2}}{(2\pi)^{2}a^{2}}
\end{equation}
The normalization factors read

\begin{eqnarray}
H_{n} & = & \left(\frac{4\sin\frac{\pi B}{2}}{F_{min}(i\pi,B)}\right)^{n/2}
\end{eqnarray}
Introducing the notations 
\begin{equation}
x_{i}=e^{\theta_{i}}\qquad\mbox{and}\qquad\theta_{ij}=\theta_{i}-\theta_{j}
\end{equation}
the $Q_{n}$ are given in a determinant form

\begin{eqnarray}
Q_{n}(k) & = & \det M_{ij}(k)\nonumber \\
 &  & M_{ij}(k)=\left[i-j+k\right]\sigma_{2i-j}^{(n)}\hfill i,j=1,\dots,n-1
\end{eqnarray}
where the $\sigma_{k}^{(n)}$ are the elementary symmetric polynomials
of order $k$ in the $n$ variables $x_{1},\dots,x_{n}$ defined by
\begin{equation}
\prod_{i=1}^{n}(x+x_{i})=\sum_{k}x^{n-k}\sigma_{k}^{(n)}(x_{1},\dots,x_{n})\label{eq:elsympoldef}
\end{equation}
(this means in particular that $\sigma_{n}^{(k)}=0$ for $k>n$ or
$k<0$). To obtain the form factors of local operators in the $T_{2}$
model it is necessary to set the coupling as 
\begin{equation}
B=-\frac{4}{5}
\end{equation}
Following the procedure in Appendix A of \cite{Pozsgay2006}, the
form factors for type-$2$ particles can be efficiently calculated
with the help of writing the bootstrap fusion in the form

\begin{equation}
F_{2\dots}\left(\theta,\dots\right)=\Gamma_{11}^{2}F_{11\dots}\left(\theta-i\bar{u}_{12}^{1},\theta+i\bar{u}_{12}^{1},\dots\right)\label{eq:fusion}
\end{equation}
where $\bar{u}_{12}^{1}=\frac{\pi}{5}$ and $\Gamma_{11}^{2}=\sqrt{2\tan\left(2\pi/5\right)}$.

The value of $k=1$ corresponds to the operator $\Phi_{1,3}$ and
$k=2$ to the operator $\Phi_{1,2}$. The above form factors are normalized
so that the vacuum expectation value of the field is $1$. To obtain
the conformal normalization used in TCSA, it is necessary to multiply
the form factors with the exact vacuum expectation values known from
\cite{Fateev:1997yg} 
\begin{eqnarray}
\left\langle \Phi_{1,2}\right\rangle  & = & \text{\textminus}2.3251365527\text{·}\text{·}\text{·}\text{\texttimes}im_{1}^{-4/7}\nonumber \\
\left\langle \Phi_{1,3}\right\rangle  & = & 2.2695506880\text{·}\text{·}\text{·}\text{\texttimes}m_{1}^{-6/7}\label{eq:12_13_exactvevs}
\end{eqnarray}

\subsection{Symmetric form factors}

The symmetric form factors are defined as

\begin{eqnarray}
 &  & F_{\underbrace{1,\dots,1}_{n},\underbrace{2,\dots,2}_{m}}^{s}\left(\theta_{1},\dots,\theta_{n},\theta_{n+1},\dots,\theta_{n+m}\right)\nonumber \\
 & = & \lim_{\epsilon\to0}F_{\underbrace{2,\dots,2}_{m},\underbrace{1,\dots,1}_{n},\underbrace{1,\dots,1}_{n},\underbrace{2,\dots,2}_{m}}\left(\theta_{n+m}+i\pi+\epsilon,\dots,\theta_{n+1}+i\pi+\epsilon\right.\nonumber \\
 &  & \left.,\theta_{n}+i\pi+\epsilon,\dots,\theta_{1}+i\pi+\epsilon,\theta_{1},\dots,\theta_{n},\theta_{n+1},\dots,\theta_{n+m}\right)
\end{eqnarray}
where there are $n$ numbers of type-1 particles and $m$ numbers
of type-2 particles. This definition corresponds to a particular specification
for the direction of the limit to the diagonal matrix element. To
compute the above expression, we use fusion (\ref{eq:fusion}) for
type-2 particles, and calculate the limit in terms of a form factor
with $2\left(n+2m\right)$ type-1 particles.

\subsubsection{Denominator and minimal form factors}

The denominator has the following form

\begin{equation}
\prod_{i<j}^{2\left(n+2m\right)}\left(\tilde{x}_{i}-\tilde{x}_{j}\right)
\end{equation}
where 

\begin{equation}
\tilde{x}_{i}=\begin{cases}
-e^{\epsilon}x_{n+2m+1-i} & i\leq n+2m\\
x_{i-n-2m} & i>n+2m
\end{cases}
\end{equation}
To leading order in $\epsilon$, the denominator of the symmetric
form factor takes the form
\begin{equation}
\left(-\epsilon\right)^{n+2m}\left[\prod_{i}^{n+2m}x_{i}\right]\left[\prod_{i<j}^{n+2m}\left(x_{i}^{2}-x_{j}^{2}\right)^{2}\right]\label{eq:symdenom}
\end{equation}
From the minimal form factor part we get an $F_{min}\left(i\pi\right)$
factor for every particle when the rapidities meet with their crossed
version, i.e. a factor of $\left[F_{min}\left(i\pi\right)\right]^{n+2m}$
altogether. 

To simplify the other contribution we use the following relation for
the sinh-Gordon form factors \cite{Fring:1992pt} 
\begin{equation}
F_{min}\left(i\pi+\vartheta\right)F_{min}\left(\vartheta\right)=\frac{\sinh\left(\vartheta\right)}{\sinh\left(\vartheta\right)+i\sin\left(\frac{\pi B}{2}\right)}
\end{equation}
There result for two type-1 particle including the denominator term
is
\begin{equation}
\frac{F_{min}\left(\vartheta_{ij}\right)F_{min}\left(\vartheta_{ji}\right)F_{min}\left(i\pi+\vartheta_{ij}\right)F_{min}\left(i\pi+\vartheta_{ji}\right)}{\left(x_{i}^{2}-x_{j}^{2}\right)^{2}}=\frac{1}{\left(x_{i}^{2}-x_{j}^{2}\right)^{2}+4x_{i}^{2}x_{j}^{2}\sin^{2}\left(\frac{\pi B}{2}\right)}
\end{equation}
The result between one type-1 and a type-2 rapidity is

\begin{equation}
\frac{1}{\left(x_{i}^{2}-x_{j,+}^{2}\right)^{2}+4x_{i}^{2}x_{j,+}^{2}\sin^{2}\left(\frac{\pi B}{2}\right)}\times\frac{1}{\left(x_{i}^{2}-x_{j,-}^{2}\right)^{2}+4x_{i}^{2}x_{j,-}^{2}\sin^{2}\left(\frac{\pi B}{2}\right)}
\end{equation}
where $x_{j,\pm}=x_{j}e^{\pm i\bar{u}_{12}^{1}}$. The result for
rapidities from the same type-2 particle is ($x_{i}=xe^{-i\bar{u}_{12}^{1}}$,$x_{j}=xe^{+i\bar{u}_{12}^{1}}$)
\begin{equation}
\frac{\left[F_{min}\left(\vartheta_{ij}\right)\right]^{2}F_{min}\left(i\pi+\vartheta_{ij}\right)F_{min}\left(i\pi+\vartheta_{ji}\right)}{\left(x_{i}^{2}-x_{j}^{2}\right)^{2}}=\frac{-1}{16x^{4}\sin^{2}\left(\frac{2\pi}{5}\right)}
\end{equation}
The result for rapidities from different type-2 particles is
\begin{eqnarray}
 &  & \frac{\left[F_{min}\left(\theta_{ij}\right)F_{min}\left(i\pi+\theta_{ij}\right)\right]^{2}\left[F_{min}\left(\theta_{ji}\right)F_{min}\left(i\pi+\theta_{ji}\right)\right]^{2}}{\left(x_{i}^{2}-x_{j}^{2}\right)^{4}\left(x_{i}^{2}e^{i2\bar{u}}-x_{j}^{2}e^{-i2\bar{u}}\right)^{2}\left(x_{i}^{2}e^{-i2\bar{u}}-x_{j}^{2}e^{i2\bar{u}}\right)^{2}}\\
 &  & \times\left[F_{min}\left(\theta_{ij}-2i\bar{u}\right)F_{min}\left(i\pi+\theta_{ij}-2i\bar{u}\right)\right]\left[F_{min}\left(\theta_{ij}+2i\bar{u}\right)F_{min}\left(i\pi+\theta_{ij}+2i\bar{u}\right)\right]\nonumber \\
 &  & \times\left[F_{min}\left(\theta_{ji}-2i\bar{u}\right)F_{min}\left(i\pi+\theta_{ji}-2i\bar{u}\right)\right]\left[F_{min}\left(\theta_{ji}+2i\bar{u}\right)F_{min}\left(i\pi+\theta_{ji}+2i\bar{u}\right)\right]\nonumber \\
 & = & \left[\left(x_{i}^{2}-x_{j}^{2}\right)^{2}+4x_{i}^{2}x_{j}^{2}\sin^{2}\left(2\bar{u}\right)\right]^{-2}\left[\left(x_{i}^{2}e^{i2\bar{u}}-x_{j}^{2}e^{-i2\bar{u}}\right)^{2}+4x_{i}^{2}x_{j}^{2}\sin^{2}\left(2\bar{u}\right)\right]^{-1}\nonumber \\
 &  & \times\left[\left(x_{i}^{2}e^{-i2\bar{u}}-x_{j}^{2}e^{i2\bar{u}}\right)^{2}+4x_{i}^{2}x_{j}^{2}\sin^{2}\left(2\bar{u}\right)\right]^{-1}\nonumber 
\end{eqnarray}

\subsubsection{Symmetric polynomial part}

There are $2\left(n+2m\right)$ type-$1$ rapidities due to the fusion,
so the polynomial part of the form factor is a determinant of a $(2\left(n+2m\right)-1)\times(2\left(n+2m\right)-1)$
matrix

\begin{equation}
M_{ij}(k)=\left[i-j+k\right]\sigma_{2i-j}^{(2\left(n+2m\right))}
\end{equation}
In the $\varepsilon\to0$ limit the symmetric polynomials are

\begin{equation}
\begin{array}{c}
\sigma_{l}^{(2p)}\left(x_{1},x{}_{2},\dots x_{p},\right.\\
\left.-e^{\epsilon}x_{1},-e^{\epsilon}x,\dots,-e^{\epsilon}x_{p}\right)
\end{array}\to\begin{cases}
\sigma_{l}^{(2p)}\left(x_{1},x{}_{2},\dots x_{p},-x_{1},-x_{2},\dots,-x_{p}\right)+\mathcal{O}\left(\epsilon\right) & l\quad\mbox{even}\\
\sum_{i=1}^{p}\left(-\epsilon x_{i}\right)\sigma_{l-1}^{(2p-1)}\left(x_{1},x{}_{2},\dots,x_{i},\dots x_{p},\right. & l\quad\mbox{odd}\\
\left.-x_{1},-x_{2},\dots,-x_{i-1},-x_{i+1},\dots,-x_{p}\right)+\mathcal{O}\left(\epsilon^{2}\right)
\end{cases}
\end{equation}
Since every term in the determinant contains $\left(n+2m\right)$
factors of odd symmetrical polynomials, the determinant is proportional
to $\epsilon^{n+2m}$, which exactly cancels the $\epsilon$ powers
in the denominator (\ref{eq:symdenom}). From the definition (\ref{eq:elsympoldef})
of the elementary symmetric polynomials it is easy to show that 
\begin{equation}
\begin{array}{c}
\sigma_{l}^{(2p)}\left(x_{1},x{}_{2},\dots x_{p},\right.\\
\left.-e^{\epsilon}x_{1},-e^{\epsilon}x,\dots,-e^{\epsilon}x_{p}\right)
\end{array}\to\begin{cases}
\left(-1\right)^{l/2}\sigma_{l/2}^{(p)}\left(x_{1}^{2},x_{2}^{2},\dots x_{p}^{2}\right) & l\quad\mbox{even}\\
\sum_{i=1}^{p}\left(-\epsilon x_{i}\right)\left(-1\right)^{\left(l-1\right)/2}\sigma_{\left(l-1\right)/2}^{p-1}\left(x_{1}^{2},x_{2}^{2},\dots,x_{i-1}^{2},\right. & l\quad\mbox{odd}\\
\left.x_{i+1}^{2},\dots x_{p}^{2}\right)
\end{cases}
\end{equation}

\subsubsection{Result for symmetric form factor}

Introducing the following definitions

\begin{eqnarray}
\hat{\sigma}_{l}^{p}\left(x_{1},x{}_{2},\dots x_{p}\right) & = & \begin{cases}
\left(-1\right)^{l/2}\sigma_{l/2}^{(p)}\left(x_{1}^{2},x_{2}^{2},\dots x_{p}^{2}\right) & l\quad\mbox{even}\\
\left(-1\right)^{\left(l-1\right)/2}\sum_{i=1}^{p}x_{i}\sigma_{\left(l-1\right)/2}^{p-1}\left(x_{1}^{2},x_{2}^{2},\dots,x_{i-1}^{2},\right. & l\quad\mbox{odd}\\
\left.x_{i+1}^{2},\dots x_{p}^{2}\right)
\end{cases}\nonumber \\
\hat{Q}_{n+2m}(k) & = & \det\hat{M}_{ij}(k)\qquad\hat{M}_{ij}(k)=\left[i-j+k\right]\hat{\sigma}_{2i-j}^{(n+m)}\nonumber \\
 &  & i,j=1,\dots,2\left(n+2m\right)-1
\end{eqnarray}
and
\begin{eqnarray}
F_{min,denom}\left(x_{1},x_{2}\right) & = & \begin{cases}
F_{min,denom}^{11}\left(x_{1},x_{2}\right) & type-1\leftrightarrow type-1\\
F_{min,denom}^{12}\left(x_{1},x_{2}\right) & type-1\leftrightarrow type-2\\
F_{min,denom}^{22,dif}\left(x_{1},x_{2}\right) & type-2\leftrightarrow type-2
\end{cases}\nonumber \\
F_{min,denom}^{11}\left(x_{1},x_{2}\right) & = & \frac{1}{\left(x_{i}^{2}-x_{j}^{2}\right)^{2}+4x_{i}^{2}x_{j}^{2}\sin^{2}\left(2\bar{u}\right)}\nonumber \\
F_{min,denom}^{12}\left(x_{1},x_{2}\right) & = & F_{min,denom}^{11}\left(x_{1},x_{2}e^{i\bar{u}}\right)\times F_{min,denom}^{11}\left(x_{1},x_{2}e^{-i\bar{u}}\right)\nonumber \\
F_{min,denom}^{22,dif}\left(x_{1},x_{2}\right) & = & \left[F_{min,denom}^{11}\left(x_{1},x_{2}\right)\right]^{2}\times F_{min,denom}^{11}\left(x_{1}e^{i\bar{u}},x_{2}e^{-i\bar{u}}\right)\nonumber \\
 &  & \times F_{min,denom}^{11}\left(x_{1}e^{-i\bar{u}},x_{2}e^{i\bar{u}}\right)\nonumber \\
F_{min,denom}^{22,self}\left(x\right) & = & \frac{-1}{16x^{4}\sin^{2}\left(2\bar{u}\right)}
\end{eqnarray}
the symmetric form factor can be rewritten as:

\begin{eqnarray}
 &  & F_{\underbrace{1,\dots,1}_{n},\underbrace{2,\dots,2}_{m}}^{s}\left(\theta_{1},\dots,\theta_{n},\theta_{n+1},\dots,\theta_{n+m}\right)\\
 & = & \left[k\right]\left(-4\sin\frac{2\pi}{5}\right)^{n+2m}\left(\Gamma_{11}^{2}\right)^{2m}\hat{Q}_{n+2m}\left(x_{1},\dots,x_{n+2m}\right)\nonumber \\
 &  & \times\frac{\prod_{i<j}^{n+m}F_{min,denom}\left(x_{i},x_{j}\right)}{\prod_{i=1}^{n}x_{i}}\prod_{j=n+1}^{m}\frac{F_{min,denom}^{22,self}\left(x_{j}\right)}{x_{j}^{2}}\nonumber 
\end{eqnarray}
For a large number of particles and/or large rapidities this formula
is difficult to evaluate with the required numerical precision, because
the determinant $\hat{Q}$ is badly conditioned (the magnitude of
its matrix elements differ by many orders). For a better precision
it is necessary balance the matrix the following way:

\begin{eqnarray}
\hat{Q}_{n+2m}(k) & = & \left[\hat{\sigma}_{1}^{(n+2m)}/(n+2m)\right]^{2\left(n+2m\right)^{2}-\left(n+2m\right)}\det\widetilde{\hat{M}}_{ij}(k)\nonumber \\
\widetilde{\hat{M}}_{ij}(k) & = & \left[i-j+k\right]\frac{\hat{\sigma}_{2i-j}^{(n+m)}}{\left[\hat{\sigma}_{1}^{(n+2m)}/(n+2m)\right]^{2i-j}}\nonumber \\
 &  & i,j=1,\dots,2\left(n+2m\right)-1
\end{eqnarray}

\subsection{Evaluation of the connected diagonal form factors}

There are two ways to calculate the connected diagonal form factors
using the symmetric form factors. One way is to use the symmetric-connected
relations derived in \cite{Pozsgay2008b} in a recursive manner; this
is a lengthy procedure for form factors with several variables and
not very convenient for numerical calculations. 

However, from the same relations it also follows that the connected
diagonal form factor is the only part of the symmetric form factor
that is fully periodic in all of its variables with period $i\pi$.
This is related to unitarity and crossing invariance, which in theories
with self-conjugate particles take the form 
\begin{eqnarray}
S_{\alpha\beta}(-\theta) & = & S_{\alpha\beta}(\theta)^{-1}\nonumber \\
S_{\alpha\beta}(\theta) & = & S_{\alpha\beta}(i\pi-\theta)
\end{eqnarray}
As a result the kernel functions (\ref{eq:phidef}) have the anti-periodicity
property 
\begin{equation}
\varphi_{\alpha\beta}(\theta+i\pi)=-\varphi_{\alpha\beta}(\theta)
\end{equation}
Applying this property to the connected-symmetric relations of \cite{Pozsgay2008b}
leads to 
\begin{eqnarray}
F_{n}^{c}\left(\theta_{1},\dots,\theta_{n}\right) & = & \frac{1}{2^{n}}\sum_{\alpha_{i}=0,1}F_{n}^{s}\left(\theta_{1}+\alpha_{1}i\pi,\theta_{2}+\alpha_{2}i\pi,\dots,\theta_{n}+\alpha_{n}i\pi\right)\nonumber \\
 & = & \frac{1}{2^{n-1}}\sum_{\alpha_{i}=0,1}F_{n}^{s}\left(\theta_{1},\theta_{2}+\alpha_{2}i\pi,\dots,\theta_{n}+\alpha_{n}i\pi\right)
\end{eqnarray}
which gives a faster and numerically much more stable evaluation.

\bibliographystyle{utphys}
\bibliography{Uber}

\end{document}